\documentclass{fundam}

\publyear{22}
\papernumber{2102}
\volume{185}
\issue{1}

\usepackage{url} 
\usepackage[ruled,lined,linesnumbered]{algorithm2e}
\usepackage{graphicx}
\usepackage{tikz}
\usetikzlibrary{automata, positioning, arrows}

\usepackage[capitalize,noabbrev]{cleveref}
\usepackage{amsmath,amssymb,amsfonts}
\usepackage{mathabx}
\usepackage[inline]{enumitem}
\usepackage{stmaryrd}

\newcommand{\preset}[1]{\ensuremath{{\bullet #1}}}
\newcommand{\postset}[1]{\ensuremath{{#1\bullet}}}

\newcommand{\enabled}[2]{\ensuremath{#1[#2\rangle}}
\newcommand{\step}[4]{\ensuremath{\enabled{(#1,#2)}{#3}(#1,#4)}}

\newcommand{\Paths}[2]{\ensuremath{\textnormal{\textit{Paths}}(#1,#2)}}

\newcommand{\concurrent}[2]{\ensuremath{#1\parallel#2}}
\newcommand{\notconcurrent}[2]{\ensuremath{#1\nparallel#2}}
\newcommand{\concurrentSet}[1]{\ensuremath{\parallel\!\!(#1)}}

\newcommand*\conj[1]{\overline{#1}}
\newcommand{\RunNets}[2]{\ensuremath{\Pi\ifx\\#1\\%
\else
(#1) %
\fi}}
\newcommand{\idivpoint}[2]{\ensuremath{#1 \llcurly #2}}
\newcommand{\idivpoints}[1]{\ensuremath{\Delta(#1)}}
\newcommand{\divpoints}[1]{\ensuremath{\Delta(#1)}}
\newcommand{\divinfo}[2]{\ensuremath{\mathcal{R}_{#2}\ifx\\#1\\%
\else
(#1)%
\fi}}
\newcommand{\divinfoh}[2]{\ensuremath{\hat{\mathcal{R}}_{#2}\ifx\\#1\\%
\else
(#1)%
\fi}}
\newcommand{\divinfoS}[2]{\ensuremath{\hat{\mathcal{R}}_{#2}\ifx\\#1\\%
\else
(#1)%
\fi}}
\newcommand{\reaches}[2]{\ensuremath{\,\rightrightarrows_{#2}\ifx\\#1\\
\else
\!\!(#1)%
\fi}}

\newcommand{\pathset}[1]{\ensuremath{[#1]}}
\newcommand{\markingset}[1]{\ensuremath{\langle#1\rangle}}

\newcommand{\Bag}[1]{\ensuremath{\mathcal{B}(#1)}}
\newcommand{\mcup}{\ensuremath{\uplus}}
\newcommand{\enabledTrans}[3]{\ensuremath{(#1,#2)[#3\rangle}}
\newcommand{\allEnabled}[2]{\ensuremath{\textit{en}(#1,#2)}}

\newcommand{\reachableM}[2]{\ensuremath{R(#1,#2)}}

\newcommand{\proofend}{}
\newcommand{\defend}{$\,$ \hfill ${\lrcorner}$}

\newcommand{\proofsize}{}

\newcommand*{\FW}{{\small FC}-{\small WF}-net}
\newcommand*{\FWs}{\FW{}s}
\newcommand*{\AFW}{{\small A}\FW}
\newcommand*{\AFWs}{{\small A}\FWs}

\makeatletter
\newcommand*{\coloneqq}{\mathrel{\vcenter{\baselineskip0.5ex \lineskiplimit0pt
                     \hbox{\scriptsize.}\hbox{\scriptsize.}}}%
                     =}
\makeatother
\newcommand{\Set}[1]{\ensuremath{\{\,#1\,\}}}
\newcommand{\multiset}[1]{\ensuremath{[\,#1\,]}}
\newcommand{\given}{\ensuremath{\mid}}

\newtheorem{theoremit}{Theorem}[section]
\newtheorem{lemmait}{Lemma}[section]
\newtheorem{corollaryit}{Corollary}[section]

\usepackage{cancel}

\newcommand{\domi}[2]{\ensuremath{\ifx\\#1\\%
\else
#1 \in %
\fi\underline{\mathit{dom}}(#2)}}
\newcommand{\pdomi}[2]{\ensuremath{\ifx\\#1\\%
\else
#1 \in %
\fi\underline{\mathit{pdom}}(#2)}}
\newcommand{\sdomi}[2]{\ensuremath{\ifx\\#1\\%
\else
#1 \in %
\fi\mathit{dom}(#2)}}
\newcommand{\notspdomi}[2]{\ensuremath{\ifx\\#1\\%
\else
#1 \in %
\fi\cancel{\mathit{pdom}}(#2)}}
\newcommand{\spdomi}[2]{\ensuremath{\ifx\\#1\\%
\else
#1 \in %
\fi\mathit{pdom}(#2)}}
\newcommand{\isdomi}[2]{\ensuremath{\ifx\\#1\\%
\else
#1 \in %
\fi\mathit{idom}}(#2)}
\newcommand{\ispdomi}[2]{\ensuremath{\ifx\\#1\\%
\else
#1 = %
\fi\mathit{ipdom}\ifx\\#2\\%
\else
(#2)%
\fi}}

\newcommand{\pdomfront}[1]{\ensuremath{\mathit{PDF}\ifx\\#1\\%
\else
(#1)%
\fi}}
\newcommand{\itpdomfront}[1]{\ensuremath{\mathit{PDF}^{+}(#1)}}

\begin{document}

\title{Deciding Reachability and the Covering Problem with Diagnostics \\for Sound Acyclic Free-Choice Workflow Nets}

\address{T. Prinz, Course Evaluation Service, Friedrich Schiller University Jena}

\author{Thomas M. Prinz\\
Course Evaluation Service\\
Friedrich Schiller University Jena \\ Am Steiger 3, Haus 1, 07743 Jena, Germany\\
Thomas.Prinz{@}uni-jena.de
\and Christopher T. Schwanen\\
Chair of Process and Data Science (PADS)\\
RWTH Aachen University \\ Ahornstra{\ss}e 55, 52074 Aachen, Germany \\
schwanen{@}pads.rwth-aachen.de
\and Wil M.\,P. van der Aalst\\
Chair of Process and Data Science (PADS)\\
RWTH Aachen University \\ Ahornstra{\ss}e 55, 52074 Aachen, Germany \\
wvdaalst{@}pads.rwth-aachen.de
}

\maketitle

\runninghead{T.M. Prinz, C.T. Schwanen, W.M.P. van der Aalst}{Deciding Reachability with Diagnostics}

\begin{abstract}
A central decision problem in Petri net theory is \emph{reachability} asking whether a given marking can be reached from the initial marking. Related is the \emph{covering problem} (or sub-marking reachbility), which decides whether there is a reachable marking covering at least the tokens in the given marking. For live and bounded free-choice nets as well as for sound free-choice workflow nets, both problems are polynomial in their computational complexity. This paper refines this complexity for the class of sound \emph{acyclic} free-choice workflow nets to a quadratic polynomial, more specifically to $O(P^2 + T^2)$. Furthermore, this paper shows the feasibility of accurately explaining why a given marking is or is not reachable. This can be achieved by three new concepts: admissibility, maximum admissibility, and diverging transitions. \emph{Admissibility} requires that all places in a given marking are pairwise concurrent. \emph{Maximum admissibility} states that adding a marked place to an admissible marking would make it inadmissible. A \emph{diverging transition} is a transition which originally ``produces'' the concurrent tokens that lead to a given marking. In this paper, we provide algorithms for all these concepts and explain their computation in detail by basing them on the concepts of concurrency and post-dominance frontiers --- a well known concept from compiler construction. In doing this, we present straight-forward implementations for solving (sub-marking) reachability.
\end{abstract}

\begin{keywords}
Workflow Nets, Reachability, Covering problem, Diagnostic Information, Soundness, Free-choice
\end{keywords}

\section{Introduction}
\label{sec:Introduction}

\emph{Reachability} is a central problem in Petri net theory deciding if a given marking can be reached from the initial marking. This decision is crucial for showing whether certain desired or undesired properties of a system under investigation are fulfilled, or not. Liveness, boundedness, and safeness (as global properties) as well as the absence of deadlocks, livelocks, and undesired states during conformance checking (as local properties) are examples of such properties. Following from such properties, it is sometimes also important to know \emph{why} a given marking is or is not reachable. The ability to explain the decision and to effectively compute this decision are important as reachability is at the core of many verification approaches. The complexity class of the general reachability problem for Petri nets is \emph{Ackermann-complete} \cite{DBLP:journals/jacm/CzerwinskiLLLM21,CzerwinskiO2022,Leroux2022}. For general safe nets, the complexity is reduced to \verb+PSPACE+ \cite{DBLP:journals/tcs/ChengEP95}. Esparza~\cite{DBLP:journals/tcs/Esparza98} stated that the reachability problem in safe and live free-choice nets is NP-complete as it can be reduced to the \emph{CNF-SAT} problem. For \emph{cyclic} free-choice nets, a restricted subclass of live and bounded free-choice nets where the initial marking is a home marking, reachability can be decided in polynomial time \cite{DBLP:journals/tcs/DeselE93}. Eventually, Yamaguchi \cite{DBLP:journals/ieicet/Yamaguchi14} shows a polynomial time complexity for reachability in sound extended free-choice workflow nets, but the exact polynomial is unknown.

In this paper, we extend the work presented in \cite{DBLP:conf/apn/PrinzSA25}, which showed first that reachability for \emph{sound acyclic (simple) free-choice workflow nets} can be solved in quadratic time, $O(P^2 + T^2)$, and that \emph{sub-marking reachability} \cite{DBLP:journals/corr/abs-2411-01592} (the \emph{covering problem}, i.\,e., if a partial marking is reachable) can be solved in the same computational complexity. This paper extends the approach by modified and additional algorithms, which provide diagnostics on why a (sub-)marking is reachable or not. For this reason, the pure decision problem is extended to be explainable. To the best knowledge of the authors, this is the first attempt to algorithmically explain (non-)reachability of a given marking.

\Cref{fig:example} shows an example of a sound acyclic free-choice workflow net. System analysts could ask whether a marking with places $p9$, $p12$, and $p16$ having tokens is reachable from the initial marking, or not. This paper will show that this marking is reachable as a sub-marking since (1) all places are pairwise concurrent (i.\,e., they can have tokens at the same moment) and (2) there are transitions $t1$ and $t8$ causing that all places can have tokens in the same marking.

\begin{figure}[tb]
	\centering
		\includegraphics[width=1.00\textwidth]{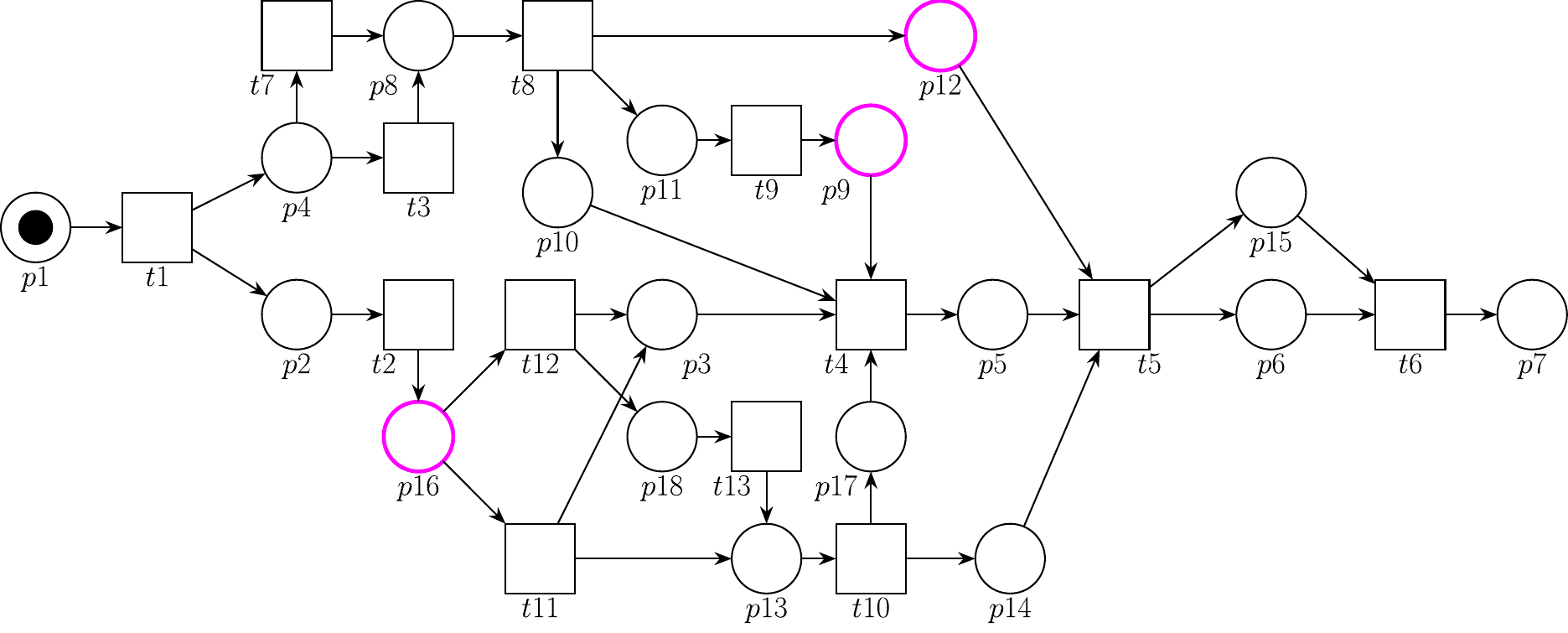}
		\vspace{-0.35cm}
	\caption{A sound acyclic free-choice workflow net focusing on a marking with tokens on $p9$, $p12$, and $p16$ (colored in pink).}
	\label{fig:example}
\end{figure}

Answering reachability questions usually focuses on \emph{concrete occurrence sequences} (traces) to the desired marking. This requires a kind of state space exploration resulting in a trace. Such a trace can be used (e.\,g., in a simulation) to argue why a marking is reachable. However, this trace-based approach usually fails in explaining why a marking is \emph{not} reachable (since there is no trace for this case). Other approaches for showing reachability provide ``just'' a simple decision as diagnostics \cite{DBLP:journals/tcs/DeselE93,DBLP:journals/ieicet/Yamaguchi14}. 

The here presented algorithm demonstrates that most decisions on whether a marking is reachable can be decided by \emph{concurrency}. Thereby, we introduce the concept of \emph{admissible} and \emph{maximum admissible} markings. A marking $M$ is \emph{admissible} if all pairs of marked places are in a concurrency relation. $M$ is \emph{maximum admissible} if it is not possible to add an additional token to the marking without destroying admissibility. This paper shows that each reachable marking in a sound acyclic free-choice workflow net must be maximum admissible and that each reachable sub-marking is admissible. We further show that neither computing the concurrency relation \cite{KovalyovEsparza,DBLP:conf/apn/PrinzKB24}, nor checking (maximum) admissibility, nor providing diagnostics require knowledge of concrete occurrence sequences. If a marking is \emph{not} admissible, it is not reachable because of places being in \emph{conflict} (i.\,e., they cannot have a token at the same time). The algorithm for deciding admissibility provides conflicting places as possibility for diagnostics. In addition, it provides places with missing tokens being necessary for the marking to be \emph{maximum} admissible.

(Maximum) admissibility is a necessary condition of reachability for a given marking, but unfortunately it is not sufficient. There are nets with markings that are (maximum) admissible but \emph{not} reachable. Nevertheless, admissibility provides a good heuristic. Fortunately, concurrency is always introduced by \emph{diverging transitions}, i.\,e., transitions with multiple outputs \cite{KovalyovEsparza}. Identifying such diverging transitions for a marking implies the existence of an occurrence net being a subgraph (``\emph{run}'') of the net in the case of sound acyclic free-choice nets. Such a run only diverges in transitions and contains all marked places of the (maximum) admissible marking. The overall sufficient approach eventually checks a given marking for admissibility and computes all diverging \emph{points} (i.\,e., diverging transitions \emph{and} diverging \emph{places}). In doing this, it uses information of the \emph{post-dominance frontier} \cite{DBLP:journals/toplas/CytronFRWZ91} of each node (a concept in compiler construction). Then, the algorithm checks if a diverging transition exists that leads to the marking. As a consequence, the algorithm does not rely on the examination of a concrete occurrence sequence, and, therefore, it has a quadratic computational complexity in the worst case. Another benefit of the approach is that it can explain why a marking is (not) reachable rather than just deciding reachability. This is achieved by detailed diagnostics provided by the approach.

The class of Petri nets being examined in this paper regarding reachability is, of course, limited. Nevertheless, two main reasons explain why investigating this class of nets is important: %
\begin{enumerate}
	\item Industrial business process models strongly correlate with free-choice workflow nets \cite{DBLP:journals/is/FavreFV15}. For such process models, soundness is an important minimum quality criterion \cite{DBLP:conf/edoc/DongenMA06} and can be checked in cubical computational time complexity with detailed diagnostic information \cite{DBLP:journals/is/PrinzCH25}. Although requiring \emph{acyclic} nets limits the applicability of the approach in practice, most industrial process models in prominent datasets are acyclic \cite{DBLP:journals/csimq/PrinzA21}. In summary, sound acyclic free-choice workflow nets are an interesting class of nets for an industrial setting.
	\item There is a trend investigating \emph{free-choice nets with a home cluster}. Van der Aalst \cite{DBLP:journals/fac/AalstHHSVVW11} showed that such nets with a home cluster strongly correlate with \emph{perpetual nets}. When cutting a perpetual net on its initial marking, it is unrolled to a sound free-choice workflow net. We are confident that the new method of \emph{loop decomposition} for industrial process models \cite{DBLP:conf/apn/PrinzKB24,DBLP:journals/is/PrinzCH25,DBLP:conf/bpm/PrinzCH22} can be mapped to sound free-choice workflow nets. This would further allow to separate a sound free-choice workflow net into a set of acyclic sound free-choice workflow nets while retaining the original net's behavior. Eventually, this will fill the gap to extend the here presented approach to the class of free-choice nets with a home cluster.
\end{enumerate}%
Resulting from these two main reasons, this paper is an important step to achieve a low computational polynomial time complexity to decide (sub-marking) reachability while providing diagnostics at the same moment. Furthermore, although the approach is introduced on acyclic \emph{simple} free-choice workflow nets, we briefly show that deciding reachability for acyclic \emph{extended} free-choice worklow nets does not make much difference.

The remainder of this paper is structured as follows: \Cref{sec:Preliminaries} introduces basic concepts of Petri nets, markings, reachability, paths, and soundness. Maximum admissible markings and their application are discussed in \cref{sec:MaximalMarkings} with a discussion about the output of the presented algorithm regarding diagnostics. Admissibility is then used in \cref{sec:AcyclicReachability} to finally decide reachability in sound acyclic free-choice workflow nets by introducing diverging points with a strong focus on their algorithmic derivation and diagnostics. Finally, \cref{sec:Conclusion} concludes this paper.

\section{Preliminaries}
\label{sec:Preliminaries}

This paper uses standard Petri net notions, which are provided in the following. We also recall the \emph{Path-to-End Theorem} for simple free-choice nets.

\subsection{Multisets, Petri Nets, and Paths}

$\Bag{A}$ is the set of all \emph{multisets} over some set $A$. For a multiset $b \in \Bag{A}$, $b(a)$ denotes the number of times element $a \in A$ appears in $b$. For example, $b_1 = \multiset{} = \emptyset$, $b_2 = \multiset{x,x,y}$, $b_3 = \multiset{x,y,z}$, $b_4 = \multiset{x,x,y,x,y,z}$, and $b_5 = \multiset{x^3,y^2,z}$ are multisets over the set $A = \Set{x,y,z}$. $b_1$ is the empty multiset, $b_2$ and $b_3$ consist of three elements, and $b_4 = b_5$, i.\,e., the ordering of elements is irrelevant and $b_5$ uses a more compact notation for repeating elements. The standard set operators can be extended to multisets, e.\,g., $x \in b_2$, $b_2 \mcup b_3 = b_4$, $b_5 \setminus b_2 = b_3$, etc.

\begin{definition}[Petri Nets]
\label{def:PetriNet}
A \emph{Petri net} (or simply a \emph{net}) $N$ is a triple $(P,T,F)$ with $P$ and $T$ being finite, disjoint sets of \emph{places} and \emph{transitions}, and $F \subseteq (P \times T) \cup (T \times P)$ is the \emph{flow} relation. \defend
\end{definition} 

$P \cup T$ can be interpreted as \emph{nodes} and $F$ as \emph{edges} between those nodes. %
For $x \in P \cup T$, $\preset{x} \coloneqq \Set{p \given (p,x) \in F}$ is the \emph{preset} of $x$ (all directly preceding nodes) and $\postset{x} \coloneqq \Set{s \given (x,s) \in F}$ is the \emph{postset} of $x$ (all directly succeeding nodes). Each node in $\preset{x}$ is an \emph{input} of $x$ and each node in $\postset{x}$ is an \emph{output} of $x$. The preset and postset of a set of nodes $X \subseteq P \cup T$ is defined as $\preset{X} \coloneqq \bigcup_{x \in X} \preset{x}$ and $\postset{X} \coloneqq \bigcup_{x \in X} \postset{x}$, respectively. $N$ is \emph{proper} iff $\forall t \in T\colon \preset{t} \not = \emptyset \land \postset{t} \not = \emptyset$. $N$ is \emph{(extended) free-choice} iff $\forall t_1,t_2 \in T\colon \preset{t_1} \cap \preset{t_2} \not = \emptyset \implies \preset{t_1} = \preset{t_2}$. $N$ is \emph{simple} free-choice iff $\forall p \in P\colon |\postset{p}| \geq 2 \implies \preset{(\postset{p})} = \Set{p}$, i.\,e.,  $\forall t_1,t_2 \in T\colon \preset{t_1} \cap \preset{t_2} \not = \emptyset \implies \{ p \} = \preset{t_1} = \preset{t_2}$ \cite{FreeChoicePetriNets}. 

\emph{Without loss of generality, this paper focuses on \emph{simple} free-choice nets as Murata \cite{Best1987,DBLP:journals/pieee/Murata89} presents a linear time transformation algorithm of extended to simple and simple to extended free-choice nets.} \cref{subsec:ExtendedFC} discusses how extended free-choice nets can be investigated with these transformations.

A \emph{path} $(n_1,\ldots,n_m)$ is a sequence of nodes $n_1,\ldots,n_m \in P \cup T$ with $m \geq 1$ and $\forall i \in \Set{ 1, \ldots, {m-1} }\colon \; n_i \in \preset{n_{i+1}}$. Note that places and transitions alternate on paths. $\pathset{(n_1,\ldots,n_m)}$ depicts the set of all nodes on the path. If all nodes of a path are pairwise different, the path is \emph{acyclic}; otherwise, it is \emph{cyclic}. $\Paths{x}{y}$ denotes the set of all paths between nodes $x$ and $y$, where $x,y \in P \cup T$. $N$ is \emph{acyclic} if all its paths are acyclic. In the nets shown here, circles represent places, rectangles transitions, and directed arcs represent flows as done in \cref{fig:example}.

\begin{definition}[Workflow Nets, FC-WF-Nets, and AFC-WF-Nets]
\label{def:WorkflowNet}
A \emph{workflow net} $N = (P,T,F,i,o)$ is a net $(P,T,F)$ with $i,o \in P$, $\preset{i} = \postset{o} = \emptyset$. $i$ is the \emph{source} and $o$ is the \emph{sink} of $N$. All nodes are on a path from $i$ to $o$. If $N$ is (simple) free-choice, then $N$ is called a \emph{\FW{}}. If $N$ is acyclic (simple) free-choice, $N$ is called an \emph{\AFW{}}. \defend
\end{definition}

\noindent This paper focuses on \AFWs{}.

\subsection{Markings, Reachability, Properties, and Soundness}

The behavior of nets is defined via \emph{markings}, which describe the number of \emph{tokens} on places in a specific state.

\begin{definition}[Marking]
\label{def:Marking}
A \emph{marking} $M$ of a net $N = (P,T,F)$ is a multiset of places, $M \in \Bag{P}$. $(N,M)$ is a \emph{marked net}. $\markingset{M} \coloneqq \Set{x \in M}$ depicts the set of marked places of $M$.  \defend
\end{definition}

Transitions whose input places all have tokens are \emph{enabled} in a marking and can be fired, leading to the net's semantics:

\begin{definition}[Enabledness, Firing, and Reachability]
\label{def:Semantics}
Let $(N,M)$ be a marked net $N=(P,T,F)$. A transition $t \in T$ is \emph{enabled} in $M$, denoted as $\enabledTrans{N}{M}{t}$, iff every place $\preset{t}$ contains at least one token in $M$, $\preset{t} \subseteq M$. $\allEnabled{N}{M} \coloneqq \Set{t \in T\given \enabledTrans{N}{M}{t}}$ is the set of enabled transitions in $(N,M)$.

If $t$ is enabled in $M$, then $t$ may \emph{fire}, which removes one token from each of $t$'s input places and adds one token to each of $t$'s output places. $M' = (M \setminus \preset{t}) \mcup \postset{t}$ is the marking resulting from firing $t$ in $(N,M)$. $\step{N}{M}{t}{M'}$ denotes that $t$ is enabled in $(N,M)$ and firing $t$ would result in $(N,M')$.

A sequence $\sigma = \langle t_1,t_2,\ldots,t_n \rangle \in T^*$ is an \emph{occurrence sequence} if there are markings $M_1,$ $M_2,\ldots,M_{n+1}$ such that $\forall 1 \leq i \leq n\colon \step{N}{M_i}{t_i}{M_{i+1}}$. For applying such a sequence $\sigma$, we write $\step{N}{M_1}{\sigma}{M_{n+1}}$. We say that $t_1,t_2,\ldots,t_n$ \emph{occur} in $\sigma$. 

A marking $M'$ is \emph{reachable} from the marking $M$ if there is an occurrence sequence $\sigma$ such that $\step{N}{M}{\sigma}{M'}$. $\reachableM{N}{M} \coloneqq \Set{M' \in \Bag{P} \given \exists \sigma \in T^*\colon \step{N}{M}{\sigma}{M'}}$ denotes the set of all reachable markings of $(N,M)$. 

If $N=(P,T,F,i,o)$ is a workflow net, $\multiset{i}$ is its \emph{initial marking} and $\multiset{o}$ is its \emph{final marking}. \defend
\end{definition}

\noindent Next, we list some behavioral properties important for analysis:

\begin{definition}[Live, Bounded, Safe, and Dead]
\label{def:LiveSafe}
Let $(N,M)$ be a marked net $N=(P,T,F)$. %
$(N,M)$ is \emph{live} iff for every reachable marking $M' \in \reachableM{N}{M}$ and for every transition $t \in T$, there is a reachable marking $M'' \in \reachableM{N}{M'}$, which enables $t$. %
$(N,M)$ is $k$-\emph{bounded} iff for every reachable marking $M' \in \reachableM{N}{M}$ and for every place $p \in P\colon M'(p) \leq k$. %
$(N,M)$ is \emph{bounded} iff there is a $k$ such that $(N,M)$ is $k$-bounded. %
$(N,M)$ is \emph{safe} iff $(N,M)$ is 1-bounded.

A place $p \in P$ is \emph{dead} in $(N,M)$ iff $\forall M' \in \reachableM{N}{M}\colon M'(p) = 0$. %
A transition $t \in T$ is \emph{dead} in $(N,M)$ iff $\forall M' \in \reachableM{N}{M}\colon t \notin \allEnabled{N}{M'}$. \defend
\end{definition}

\noindent Workflow nets can be \emph{sound} \cite{DBLP:conf/apn/Aalst97}:

\begin{definition}[Soundness]
\label{def:Soundness}
A workflow net $N=(P,T,F,i,o)$ with its initial marking $\multiset{i}$ and its final marking $\multiset{o}$ is \emph{sound} iff %
\begin{enumerate}[label=(\arabic*)]
	\item $\forall M \in \reachableM{N}{\multiset{i}}\colon \; \multiset{o} \in \reachableM{N}{M}$,
	\item $\forall M \in \reachableM{N}{\multiset{i}}\colon \; (o \in M \implies M = \multiset{o})$, and
	\item there is no \emph{dead} transition in $N$: $\forall t \in T \; \exists M,M' \in  \reachableM{N}{\multiset{i}}\colon \step{N}{M}{t}{M'}$.
\end{enumerate}

\noindent An equivalent definition of soundness is that the marked short-circuited net $(N',\multiset{i})$ of $N$, where $N'$ is defined as
$
	(P, T \cup \Set{ t }, F \cup \Set{ (o,t), (t,i) })
$ 
for a transition $t\notin T$, is live and bounded \cite{DBLP:conf/apn/Aalst97}. \defend
\end{definition}

\noindent Sound free-choice workflow nets are further \emph{safe}:

\begin{lemmait}[Safeness]
\label{lemma:Safeness}
Sound free-choice workflow nets are safe. \defend
\end{lemmait}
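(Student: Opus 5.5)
I would route the proof through the equivalent characterization in \cref{def:Soundness}: $N=(P,T,F,i,o)$ is sound iff the short-circuited net $(N',\multiset{i})$ is live and bounded. If $N$ is free-choice then so is $N'$. The only new transition $t$ has $\preset{t}=\Set{o}$, and since $\postset{o}=\emptyset$ in $N$, $o$ has $t$ as its only output. So no input set of another transition intersects $\preset{t}$, and the simple free-choice condition is preserved. Moreover, every reachable marking of $(N,\multiset{i})$ is also reachable in $(N',\multiset{i})$. Hence it suffices to show that a live and bounded free-choice system $(N',\multiset{i})$ whose initial marking has a single token is safe.

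For that I would invoke the classical structure theory of live and bounded free-choice systems \cite{FreeChoicePetriNets}, namely the S-coverability theorem. A well-formed free-choice net is covered by S-components. $N'$ is strongly connected, because every node of $N$ lies on a path from $i$ to $o$ and $N'$ adds the arc back from $o$ to $i$. Every place $p$ of $N'$ therefore belongs to some S-component $C=(P_C,T_C,F_C)$. In an S-component every transition of $T_C$ has exactly one input and one output place in $P_C$, so the token count $\sum_{q\in P_C} M(q)$ is invariant under firing.

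The next step is to compute this invariant for the initial marking. $C$ is strongly connected and contains at least one place, and every cycle of $N'$ passes through $t$, because $N$ is acyclic or, more generally, because every node lies on a path from $i$ to $o$. From this I would argue that $C$ must contain $i$. Then $\sum_{q\in P_C}\multiset{i}(q)=1$, so every reachable $M$ satisfies $M(p)\le\sum_{q\in P_C}M(q)=1$, which gives safeness.

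I expect the main obstacle to be arguing that every S-component contains $i$ in the general, possibly cyclic, free-choice case. A cleaner route is to use liveness directly. By liveness, every S-component eventually carries a token: if some component had zero tokens its transitions would be dead, and a component's transitions are nonempty by strong connectedness. Its token count is constant, so it is positive at $\multiset{i}$, which forces $i\in P_C$. This yields a token count of exactly $1$ without any acyclicity assumption. I would present this liveness argument as the core step and cite the S-coverability theorem as the external ingredient.
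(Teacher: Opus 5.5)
Your proof is correct and is essentially the standard argument behind the references the paper cites (van der Aalst; Verbeek et al.): the short-circuited net is live and bounded, hence a well-formed free-choice net, hence covered by S-components whose token sums are invariant. One simplification is available: you never need that each S-component contains $i$, because $\sum_{q\in P_C}\multiset{i}(q)\le 1$ holds trivially ($\multiset{i}$ carries a single token), and this already bounds every place of $C$ by $1$; it is also right that you dropped the aside that every cycle of $N'$ passes through the new transition ``more generally'', since that claim is false once $N$ has cycles.
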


\begin{proof}
See Van der Aalst \cite{DBLP:conf/bpm/Aalst00} or Verbeek et al.\@ \cite{DBLP:journals/cj/VerbeekBA01}. \proofend
\end{proof}

Sound simple free-choice workflow nets ensure that each path from an arbitrary place to the sink place $o$ contains at most one token \cite{DBLP:conf/apn/PrinzKB24}: 

\begin{theoremit}[Path-to-End Theorem]
\label{theorem:PathToEnd}
Let $W=(P,T,F,i,o)$ be a simple \FW{}. It holds: %
On all paths $\rho$ from any place $p$ to $o$ in any reachable marking $M \in \reachableM{W}{\multiset{i}}$, the sum of all tokens on all places of $\rho$ is at most 1, i.\,e., %
\begin{equation*}
	\forall p \in P \; \; \forall \rho \in \Paths{p}{o} \; \; \forall M \in \reachableM{W}{\multiset{i}}\colon |\pathset{\rho} \cap \markingset{M}| \leq 1. \tag*{\defend}
\end{equation*}
\end{theoremit}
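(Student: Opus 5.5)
The plan is to argue by contradiction. I assume, as the surrounding text intends, that $W$ is \emph{sound}: the statement says only ``simple \FW{}'', but the claim fails for unsound nets. I then use safeness (\cref{lemma:Safeness}) together with soundness condition~(2) to turn two tokens on one path into a forbidden marking. Suppose $M \in \reachableM{W}{\multiset{i}}$ and $\rho \in \Paths{p}{o}$ with $|\pathset{\rho} \cap \markingset{M}| \geq 2$. If $\rho$ is cyclic, I first shorten it to an acyclic path from $p$ to $o$ that still contains two marked places, by cutting out repeated segments. If a cut would remove one of the two marked places, I take a different witness pair instead. In the resulting acyclic path I let $q_2$ be the marked place closest to $o$ and $q_1$ the marked place nearest before it. The places strictly between $q_1$ and $q_2$ on $\rho$ are unmarked, and so are the places strictly after $q_2$.

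The core step is to show that, from $M$, the token on $q_1$ can be pushed along the segment of $\rho$ towards $q_2$ while the token on $q_2$ stays in place. This yields a reachable marking with two tokens in $q_2$, contradicting \cref{lemma:Safeness}. Alternatively, both tokens can be pushed along $\rho$ all the way to $o$, giving a reachable $M'$ with $M'(o) \geq 2$, which contradicts soundness condition~(2). Which contradiction is used depends on which routing is easier to control.

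The routing itself is done one transition of $\rho$ at a time. Let $t$ be the next transition on $\rho$ after the current place $s$ carrying the moving token. Because $W$ is simple free-choice, there are two cases.
\begin{itemize}
\item If $|\postset{s}| \geq 2$, then $\preset{t} = \Set{s}$. So $t$ is already enabled and can be chosen instead of its conflicting siblings.
\item If $|\postset{s}| = 1$, then $t$ is the only consumer of $s$, but it may have further input places. These must be marked by firing other transitions.
\end{itemize}
To do this, I use the short-circuited net, which is live and bounded by the equivalent characterisation in \cref{def:Soundness}, and apply the allocation technique of Desel and Esparza. I fix an allocation that selects, in every cluster, a transition lying on a shortest path towards the cluster of $t$. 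For the cluster containing $q_2$, I deliberately select nothing, so that cluster is blocked. Their result on circuit-free allocations says that every maximal occurrence sequence compatible with such an allocation eventually enables a transition of the target cluster, here that of $t$. This holds in live bounded free-choice systems, and the blocking changes only what happens at the cluster of $q_2$.

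The main obstacle is exactly this step: proving that the auxiliary firings never need to consume the token on $q_2$ nor the moving token itself. I would try first to show that if the only way to enable $t$ required consuming from $q_2$, then $q_2$ would lie in a siphon that is emptied before $t$ can fire. Liveness of free-choice systems (Commoner's theorem: every proper siphon contains an initially marked trap) would then be violated. If this becomes too delicate, the fallback is to avoid fine control over $q_2$ altogether. In that version I would show, by the same allocation argument directed towards $o$, that both tokens on $\rho$ reach $o$, using that $o$ is a sink with $\postset{o} = \emptyset$ and therefore absorbs tokens without ever consuming them. That gives $M'(o) \geq 2$, which contradicts soundness condition~(2).
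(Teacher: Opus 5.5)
Your main line breaks at the step you flag yourself, and the allocation lemma cannot close it. Leaving the cluster of $q_2$ unallocated in effect adds it to the target set. The pointing-allocation result then only guarantees that \emph{some} transition of the cluster of $t$ or of the cluster of $q_2$ becomes enabled. Suppose $q_2$ is a decision. Simple free-choiceness gives $\preset{u}=\Set{q_2}$ for all $u\in\postset{q_2}$, and these transitions stay enabled along every sequence agreeing with your allocation. So the conclusion is satisfied by the cluster of $q_2$ alone and tells you nothing about $t$. The siphon/trap idea via Commoner's theorem is a hope, not an argument.

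Your fallback points in the right direction, but as stated it has its own gap. An allocation along shortest paths towards $o$ may choose, at a decision place of $\rho$, a transition off $\rho$. Then nothing prevents the two tokens from being consumed by a common join, so neither ``both tokens reach $o$'' nor $M'(o)\ge 2$ follows. The missing ingredient is the invariant the paper recalls from the cited proof, in the proof of \cref{lemma:ConcurrentJointTransition}. For an acyclic path $\rho$, simple free-choiceness gives $|\preset{t}\cap\pathset{\rho}|=1$ and $|\postset{t}\cap\pathset{\rho}|\ge 1$ for every transition $t$ on $\rho$. Moreover, every transition $u\notin\pathset{\rho}$ consuming from a place $s\in\pathset{\rho}$ satisfies $\preset{u}=\Set{s}=\preset{t_s}$, where $t_s$ is the successor of $s$ on $\rho$.

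With this invariant, the acyclic case needs no allocations, no safeness at $q_2$ and no siphons. Forbid exactly the transitions $u$ just described. Along any occurrence sequence from $M$ that avoids them, the number of tokens on $\pathset{\rho}$, counted with multiplicity, never decreases. A transition on $\rho$ removes exactly one token from $\rho$ and adds at least one, and allowed transitions off $\rho$ remove none. Occurrence sequences of acyclic nets are finite, so take a maximal such sequence, ending in $M'$. A forbidden $u$ enabled at $M'$ would make the allowed $t_s$ enabled as well, so $M'$ enables no transition at all. Soundness condition~(1) then forces $M'=\multiset{o}$, a single token, contradicting the at least two tokens on $\rho$.

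You are right that soundness must be added, but the statement also needs acyclic paths, and your reduction from cyclic paths cannot be repaired. Take the sound simple \FW{} with $t_1\colon i\to\Set{a,c}$, $t_2\colon a\to b$, $t_3\colon c\to d$, $t_4\colon\Set{b,d}\to e$, $t_5\colon e\to\Set{a,c}$, $t_6\colon e\to o$. The cyclic path $(a,t_2,b,t_4,e,t_5,c,t_3,d,t_4,e,t_6,o)$ contains both marked places of the reachable marking $\multiset{a,c}$, while no acyclic path contains both. In an \AFW{} all paths are acyclic, so this is harmless. For cyclic nets, however, the finiteness step above must be replaced by a genuinely different argument.
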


\begin{proof}
\proofsize See Theorem~2 in Prinz et al.\@ \cite{DBLP:conf/apn/PrinzKB24}. Note that this is a special case of Lemma~5.11 in Van der Aalst \cite{DBLP:journals/fuin/Aalst21}.\proofend
\end{proof}

\section{(Maximum) Admissible Markings}
\label{sec:MaximalMarkings}

In the following, we describe, argue, and prove why reachability and \emph{concurrency} strongly interact in sound \AFWs{}. This interaction results in the new concepts of \emph{admissibility} and \emph{maximum admissibility}. Furthermore, we discuss how (non-)admissibility can be used to provide diagnostics on why a marking is (not) reachable.

\subsection{Concurrency and Admissibility}

We will show that a given marking can only be reachable if all its marked places are pairwise concurrent. Two places are concurrent if there is a reachable marking with tokens on both places:

\begin{definition}[Concurrency]
\label{def:Concurrency}
Let $N = (P,T,F,i,o)$ be a sound \AFW{}. Two places $x,y \in P$ are \emph{concurrent} in $N$, denoted $\concurrent{x}{y}$, iff $\exists M_{\parallel} \in \reachableM{N}{\multiset{i}}\colon \multiset{ x,y } \subseteq M_{\parallel}$. $\concurrentSet{x} \coloneqq \Set{ y \in P\given \concurrent{x}{y} }$ denotes the set of all places to which $x$ is concurrent. Due to safeness of sound \AFWs{}, a place $x$ is not concurrent to itself. \defend
\end{definition}

\noindent For example, in \cref{fig:example}, $\concurrent{p9}{p10}$ and $\concurrent{p5}{p12}$ as well as $\concurrentSet{p15} = \Set{ p6 }$.

Sound \AFWs{} have benefits regarding their complexity during analysis. One of them is that two concurrent places must not have a path between them: %
\begin{lemmait}
\label{lemma:PathAbsence}
\normalfont
Let $N=(P,T,F,i,o)$ be a sound \AFW{} with two places $x,y \in P$, $x \not = y$. %
\begin{gather*}
	\concurrent{x}{y} \quad \implies \quad \Paths{x}{y} = \emptyset \; \land \; \Paths{y}{x} = \emptyset. 
\end{gather*} %
The concurrency relation $\parallel$ is symmetric and irreflexive in sound \AFWs{}. \defend
\end{lemmait}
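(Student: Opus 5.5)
The main claim follows from the Path-to-End Theorem (\cref{theorem:PathToEnd}); symmetry and irreflexivity come directly from \cref{def:Concurrency} and \cref{lemma:Safeness}. I would argue by contradiction. Assume $\concurrent{x}{y}$ with $x \not= y$ and, without loss of generality, a path $\rho_1 \in \Paths{x}{y}$; the case $\Paths{y}{x} \not= \emptyset$ is symmetric. By \cref{def:Concurrency} there is a reachable marking $M_{\parallel} \in \reachableM{N}{\multiset{i}}$ with $\multiset{x,y} \subseteq M_{\parallel}$, so both $x$ and $y$ are in $\markingset{M_{\parallel}}$.

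Next I build a path from $x$ to $o$ that passes through $y$. By \cref{def:WorkflowNet}, every node lies on a path from $i$ to $o$, so there is a path $\rho_2 \in \Paths{y}{o}$. Concatenating $\rho_1$ and $\rho_2$ at their shared node $y$ gives a path $\rho \in \Paths{x}{o}$ with $x,y \in \pathset{\rho}$. Because $N$ is acyclic, $\rho$ is acyclic automatically, although the Path-to-End Theorem does not need this. Since $x \not= y$, we get $|\pathset{\rho} \cap \markingset{M_{\parallel}}| \geq 2$. This contradicts \cref{theorem:PathToEnd}, which applies because sound \AFWs{} are simple free-choice. Hence $\Paths{x}{y} = \emptyset$ and $\Paths{y}{x} = \emptyset$.

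For the second sentence of the lemma:
\begin{itemize}
\item \emph{Symmetry.} The condition $\multiset{x,y} \subseteq M_{\parallel}$ in \cref{def:Concurrency} does not depend on the order of $x$ and $y$.
\item \emph{Irreflexivity.} The relation $\concurrent{x}{x}$ would need a reachable marking with $\multiset{x,x} \subseteq M_{\parallel}$, i.e.\ $M_{\parallel}(x) \geq 2$. This is impossible because sound free-choice workflow nets are safe (\cref{lemma:Safeness}). Alternatively, the trivial path $(x)$ to $o$, extended through $x$, together with the Path-to-End Theorem also rules out two tokens on $x$ if multiplicities are counted.
\end{itemize}

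The only step needing care is the interpretation in the Path-to-End Theorem. It bounds the number of \emph{marked places} on $\rho$, and $x$ and $y$ are two distinct marked places on $\rho$. So there is no real obstacle: the whole argument reduces to the existence of the path $\rho_2$ to $o$, which the workflow net definition guarantees.
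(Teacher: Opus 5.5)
Your proof is correct and is the Path-to-End route the paper points to: its own proof only cites Cor.~4 of Prinz et al.\ and remarks that \cref{theorem:PathToEnd} confirms path absence. Your irreflexivity argument via \cref{lemma:Safeness} also matches the remark in \cref{def:Concurrency}. You should drop only your parenthetical alternative for irreflexivity: \cref{theorem:PathToEnd} bounds $|\pathset{\rho} \cap \markingset{M}|$, which counts a set of marked places, so it cannot detect two tokens on the same place.
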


\begin{proof}
\proofsize
See Prinz et al.\@ \cite{DBLP:conf/apn/PrinzKB24} (Cor.~4). %
The interested reader can also use the Path-to-End \cref{theorem:PathToEnd} to confirm the path absence of concurrent nodes. 
\proofend
\end{proof}

\noindent For example, $p9$ in \cref{fig:example} has no path to $p10$ and vice versa.

By \cref{lemma:Safeness}, sound \AFWs{} are safe. For this reason, all concurrent places $x$ and $y$, $\concurrent{x}{y}$, must be joined by a transition on all pairs of paths from $x$ and $y$ to $o$: %
\begin{lemmait}[Two Concurrent Places are Joined by Transitions]
\label{lemma:ConcurrentJointTransition}
Let $N=(P,T,F,i,o)$ be a sound \AFW{}. Furthermore, let $x,y \in P$. It holds: %
\begin{gather*}
	\concurrent{x}{y} \\
	\implies \\
	\forall \rho_x = (x_1,\ldots,x_n,o) \in \Paths{x}{o}, n \geq 1 \\
	\forall \rho_y = (y_1,\ldots,y_m,o) \in \Paths{y}{o}, m \geq 1 \\ 
	\exists i \in \Set{ 1, \ldots, n } \; \exists j \in \Set{ 1, \ldots, m }\colon \\
	\Set{x_1,\ldots,x_i} \cap \Set{y_1,\ldots,y_j} = \Set{ x_i } = \Set{ y_j } \subseteq T \tag*{\defend}
\end{gather*}
\end{lemmait}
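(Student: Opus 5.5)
The argument is short. The two paths must share some node because both end in $o$. The real content is that the \emph{first} common node, scanned consistently along both paths, is a transition and not a place. So I would take $x \neq y$ (guaranteed by irreflexivity in \cref{lemma:PathAbsence}) and fix arbitrary paths $\rho_x \in \Paths{x}{o}$ and $\rho_y \in \Paths{y}{o}$.

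\textbf{Choosing the meeting node.} Both paths contain $o$, so the set of indices $i$ such that $x_i$ lies on $\rho_y$ is non-empty (the node $o$ itself is always such a meeting point). Let $i$ be the least such index and let $j$ be the position with $y_j = x_i$. Because $N$ is acyclic, $\rho_y$ is acyclic and $j$ is unique.

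\textbf{Checking the intersection.} I would show $\Set{x_1,\ldots,x_i} \cap \Set{y_1,\ldots,y_j} = \Set{x_i}$. Minimality of $i$ excludes every $x_k$ with $k<i$ from lying on $\rho_y$ at all. In particular no such $x_k$ equals some $y_l$ with $l \le j$. Hence the intersection is exactly $\Set{x_i} = \Set{y_j}$.

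\textbf{Excluding a place.} Suppose the meeting node $p = x_i = y_j$ were a place. If $i = 1$, then $p = x$ lies on $\rho_y$, which gives a path from $y$ to $x$ and contradicts \cref{lemma:PathAbsence}. The case $j = 1$ is symmetric. Otherwise $i,j \geq 2$, and the prefixes $\rho_x' = (x_1,\ldots,x_i)$ and $\rho_y' = (y_1,\ldots,y_j)$ are paths from $x$ and from $y$ to $p$ that are disjoint except at $p$.

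Now take a reachable $M$ with $\multiset{x,y} \subseteq M$. The remaining step, which I expect to be the main obstacle, is to derive a contradiction from two tokens whose paths merge at the place $p$. The natural idea is a \emph{single} path from some place to $o$ through both $x$ and $y$, to which the Path-to-End \cref{theorem:PathToEnd} would apply. But no such path exists, since $x$ and $y$ lie on no common path. So the theorem cannot be applied directly to $M$; instead I would use soundness to push both tokens to $p$.

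Concretely, the plan is to move the token on $x$ along $\rho_x'$ to $p$ without disturbing the token on $y$, reaching a marking that marks both $p$ and $y$. The path $\rho_y'$ continued from $p$ along $\rho_x$ to $o$ is then a path from $y$ to $o$ carrying two tokens, which contradicts \cref{theorem:PathToEnd}.

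Moving a token along a prescribed path is where I would rely on free-choice and soundness. At each transition $x_{k+1}$ on the path, all its input places are marked in some reachable extension, because soundness makes $o$ reachable and acyclicity forces progress. The choice at the place $x_k$ can be steered, because in a simple free-choice net $x_k$ is the sole input of each of its conflicting transitions. The token on $y$ is never consumed: any transition removing it is a successor of $y$, and $y$ has no path to $x$.

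If this construction becomes too delicate, the fallback is to cite the known result of Prinz et al.\@ \cite{DBLP:conf/apn/PrinzKB24} that concurrent places cannot share a place at which their paths join. In either case the meeting node is a transition, which proves the claim.
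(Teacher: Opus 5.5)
Your choice of the first meeting node, the intersection check, and the cases $i=1$, $j=1$ are fine. Your overall idea is also the paper's: drive a token onto the place $p$ while another token still sits strictly before $p$ on the other path, then apply \cref{theorem:PathToEnd}.

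The gap is the step where you freeze the token on $y$. Your reason, ``any transition removing it is a successor of $y$, and $y$ has no path to $x$'', does not show that this token is never consumed. When $x_k$ has a single output, the only way you have to enable $x_{k+1}$ is a completion to $\multiset{o}$ supplied by soundness. That completion may fire an output transition of $y$ first. This is forced, for instance, if the other input place of a synchronising transition $x_{k+1}$ on $\rho_x'$ can only be marked via a path starting at $y$. The absence of a path from $y$ to $x$ says nothing about paths from $y$ to later nodes of $\rho_x'$. Moreover, joins of descendants of two concurrent places at transitions are exactly what the lemma allows, so nothing you have established rules this out. In that situation a marking with both $p$ and $y$ marked need not be reachable, and your contradiction is not obtained.

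The fix is to let both tokens advance. This is what the paper does when it lets the tokens ``remain'' on $\rho_x$ and $\rho_y$ and assumes without loss of generality that the token of $\rho_x$ reaches $x_i$ first. Maintain marked places $x_k$, $y_l$ with $k<i$ and $l<j$, starting from $k=l=1$, and take one step as follows.
\begin{itemize}
\item If $x_k$ has at least two outputs, fire $x_{k+1}$ at once. Simple free-choiceness gives $\preset{x_{k+1}}=\Set{x_k}$, so $x_{k+1}$ is enabled and does not touch $y_l$. The case where $y_l$ has at least two outputs is symmetric.
\item If both places have a single output, take a sequence to $\multiset{o}$ and cut it just before the first occurrence of $x_{k+1}$ or $y_{l+1}$. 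Both transitions must occur, since the tokens on $x_k$ and $y_l$ have to disappear. They are distinct by minimality of $i$. Neither token is consumed before the cut. The transition that fires does not consume the other token, because the other place's sole output is the other transition.
\end{itemize}
Each step moves one token two positions forward along its path. So eventually one token lands on $p$ while the other is still strictly before $p$ on its own path, and that path violates \cref{theorem:PathToEnd}. Your fallback of citing Prinz et al.\ for the statement itself would not constitute a proof.
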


\begin{proof}
\proofsize
Constructive proof. %
Let $x,y \in P$ with $\concurrent{x}{y}$. %
By \cref{def:Concurrency} of concurrency and from $\concurrent{x}{y}$, it follows that $\exists M \in \reachableM{N}{\multiset{i}}\colon \multiset{x,y} \subseteq M$. %
By \cref{lemma:Safeness}, $N$ is safe and, thus, $x \not = y$.

By \cref{def:WorkflowNet} of workflow nets, there are at least two  paths $\rho_x = (x_1,\ldots,x_n,o) \in \Paths{x}{o}$ and $\rho_y = (y_1,\ldots,y_m,o) \in \Paths{y}{o}$ from $x = x_1$ and $y = y_1$ to the sink $o$, respectively. %
It follows also: $\pathset{\rho_x} \cap \pathset{\rho_y} \not = \emptyset$. %
Therefore, $\rho_x$ and $\rho_y$ must have a \emph{first} common node $x_i = y_j$ with $x_i \in \pathset{\rho_x} \cap \pathset{\rho_y}$ and $\Set{x_1,\ldots,x_{i-1}} \cap \Set{y_1,\ldots,y_{j-1}} = \emptyset$ and $i \in \Set{ 1, \ldots, n }$, $j \in \Set{ 1, \ldots, m }$.

According to the proof of the Path-to-End Theorem~\ref{theorem:PathToEnd} in Prinz et al.\@ \cite{DBLP:conf/apn/PrinzKB24}, p.~136, Equation~(4), it holds for each transition on a path $\rho$ to the sink $o$ in a sound (simple) \AFW{} because of simple free-choiceness: %
\begin{equation*}
	\forall t \in (\pathset{\rho} \cap T)\colon \quad |\preset{t} \cap \pathset{\rho}| = 1 \quad \land \quad |\postset{t} \cap \pathset{\rho}| \geq 1
\end{equation*} %
For this reason, removing a token from $\rho_x$ or $\rho_y$ starting from $M$, respectively, can only be achieved by a place $p$ with $|\postset{p}| \geq 2$ (a decision). %
Since $N$ is simple free-choice, $\preset{(\postset{p})} = \Set{ p }$, any output transition of such $p$ can be fired in each reachable marking $M' \in \reachableM{N}{M}$, $p \in M'$ --- thus, also output transitions on $\rho_x$ ($\postset{p} \cap \pathset{\rho_x}$) and $\rho_y$ ($\postset{p} \cap \pathset{\rho_y}$). %
As a consequence, we can treat the tokens on both paths $\rho_x$ and $\rho_y$ starting from $x$ and $y$, respectively, to ``remain'' on $\rho_x$ and $\rho_y$, i.\,e., if a transition $t_x$ on $\rho_x$ fires, it puts a token back on $\rho_x$, and if a transition $t_y$ on $\rho_y$ fires, it puts a token back on $\rho_y$. %
Since $N$ is sound, this treatment of tokens to ``remain'' on $\rho_x$ and $\rho_y$ cannot lead to a dead transition. %
Furthermore, since $\concurrent{x}{y}$ and \cref{lemma:PathAbsence}, there is no path from $x$ to $y$ and from $y$ to $x$, i.\,e., no token can get from $x$ to $y$ and vice versa.

Now, there are two possibilities for the first common node $x_i = y_j$: %
\begin{description}
	\item[$x_i \in P$:] Without loss of generality, once the token of $\rho_x$ ``reaches'' $x_i$ before the token of $\rho_y$ in a reachable marking $M' \in \reachableM{N}{M}$, then in $M'$ are at least two tokens on path $\rho_y$ to $o$. %
		This contradicts the Path-to-End \cref{theorem:PathToEnd}. $\lightning$ %
		Therefore, $x_i \notin P$.
	\item[$x_i \in T$:] This is the only remaining possibility. $\checkmark$
\end{description} %
Both cases state that all pairs of paths $\rho_x$ and $\rho_y$ have a transition as a first common node. %
For this reason, this lemma holds. \proofend
\end{proof}

\cref{fig:proof-joining-transitions} shows an example of a sound \emph{extended} free-choice workflow net, for which \cref{lemma:ConcurrentJointTransition} does not hold. However, since this paper focuses on \emph{simple} free-choiceness, such cases are not possible as $t4$ and $t5$ would be merged into a single transition.

\begin{figure}[tb]
	\centering
		\includegraphics[width=0.50\textwidth]{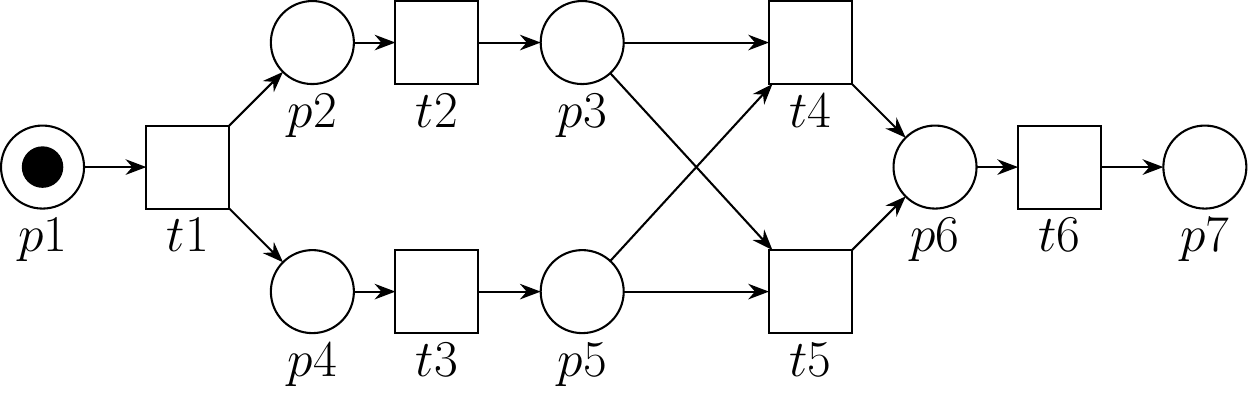}
	\caption{In sound extended free-choice workflow nets, \cref{lemma:ConcurrentJointTransition} does not hold without modifications as $p6$ is \emph{not} a joining transition.}
	\label{fig:proof-joining-transitions}
\end{figure}

The concurrency relation is symmetric by definition. The relationship is crucial for checking reachability since all marked places in a reachable marking must be in a concurrency relation by \cref{def:Concurrency}. We call a marking where all marked places are pairwise concurrent \emph{admissible}: 

\begin{definition}[Admissible Markings]
\label{def:SufficientMarkings}
Let $N$ be a sound \AFW{}. %
A marking $M_a \in \Bag{P}$ is \emph{admissible} if all different marked places in $M_a$ are pairwise concurrent: %
\begin{equation*}
	\forall x,y \in P, \, x \not = y\colon \quad \multiset{ x,y } \subseteq M_a \implies \concurrent{x}{y}. \tag*{\defend}
\end{equation*}
\end{definition} %

Each marking containing a single marked place is admissible. Following from \cref{def:SufficientMarkings}, admissibility of a marking is a necessary condition for its (sub-marking) reachability. If a marking is \emph{not} admissible, we can quickly decline its reachability.  In addition, the admissibility of markings limits the ``size'' of reachable markings by the concurrency relation. There must be \emph{maximum} admissible markings: 

\begin{definition}[Maximum Admissible Markings]
\label{def:MaximalMarking}
Let $N=(P,T,F,i,o)$ be a sound \AFW{}. %
A marking $M \in \Bag{P}$ is \emph{maximum admissible} iff 
\begin{enumerate}
  \item $\forall x,y \in M, \, x \not = y\colon \concurrent{x}{y}$ ($M$ is admissible) and 
  \item $\bigcap_{x \in M} \concurrentSet{x} = \emptyset$ (adding places of $P \setminus \markingset{M}$ to $M$ would make $M$ inadmissible). \defend
\end{enumerate}
\end{definition}

By \cref{def:SufficientMarkings}, a marking $M$ composed of concurrent places cannot be reachable for three reasons: (1) $M$ contains not enough marked places, (2) $M$ contains too many marked places, or (3) $M$ has a ``correct'' number of marked places but all places cannot be marked at the same time. Fortunately, we can show that each \emph{reachable} marking in a sound \AFW{} is \emph{maximum admissible}, i.\,e., removes reasons (1) and (2):

\begin{theoremit}[Reachable Markings are Maximum Admissible]
\label{theorem:ReachableMarkingsMaximal}
Let $N$ be a sound \AFW{}. All reachable markings $M_r$ from the initial marking $\multiset{i}$ are maximum admissible. \defend
\end{theoremit}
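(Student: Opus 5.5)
The statement has two parts, and I would handle them separately. \emph{Admissibility} of a reachable marking $M_r$ is immediate. By \cref{lemma:Safeness}, $M_r$ is a set, and any two distinct places $x,y \in M_r$ satisfy $\multiset{x,y} \subseteq M_r$ with $M_r \in \reachableM{N}{\multiset{i}}$, so $\concurrent{x}{y}$ by \cref{def:Concurrency}. The real content is the second condition of \cref{def:MaximalMarking}, namely $\bigcap_{x \in M_r} \concurrentSet{x} = \emptyset$. Places of $M_r$ are excluded from this intersection because $\parallel$ is irreflexive (\cref{lemma:PathAbsence}). So I would argue by contradiction: assume there is a place $z \notin M_r$ with $\concurrent{z}{x}$ for every $x \in M_r$.

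The plan is to reduce this to a purely structural claim.

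\textbf{Claim.} For every reachable $M_r$ and every place $z \notin M_r$, some $x \in M_r$ is connected to $z$ by a path, i.e.\ $\Paths{x}{z} \neq \emptyset$ or $\Paths{z}{x} \neq \emptyset$.

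Once the Claim holds, \cref{lemma:PathAbsence} gives $\notconcurrent{x}{z}$, which contradicts the assumption. To prove the Claim I would argue in two directions, using the reachable marking $M_z$ witnessing $\concurrent{z}{x_0}$ for some fixed $x_0 \in M_r$.

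\emph{Forward direction.} Suppose no $x \in M_r$ has a path to $z$. Tokens only move along arcs, so $z$ is never marked in any marking of $\reachableM{N}{M_r}$.

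\emph{Backward direction.} Suppose $z$ has no path to any $x \in M_r$. Fix a path $\rho$ from $z$ to $o$. Then, as in the proof of \cref{lemma:ConcurrentJointTransition}, I would play the token game from $M_z$ and exploit simple free-choiceness. At every decision place $p$, since $\preset{(\postset{p})} = \Set{p}$, we may choose freely which output transition fires. This lets us keep the token of $z$ on $\rho$, and the tokens of $M_r$ on their own paths, while driving towards $\multiset{o}$.

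The goal is to show that the marking $M_r$ and the token on $z$ must be ``in the same cut''. Because $\rho$ never passes through a place of $M_r$, the token on $\rho$ can reach $o$ only by passing a transition $t$ on $\rho$ that also consumes a token produced from some place of $M_r$. Indeed, by \cref{lemma:ConcurrentJointTransition} applied to $z$ and each $x \in M_r$, the paths to $o$ first meet in a transition. If all of $M_r$ is on one side of such a joining transition $t$ and $z$'s branch is absent from $M_r$, then starting from $M_r$ the transition $t$ can never become enabled. Alternatively, the token count on some path to $o$ exceeds $1$, contradicting the Path-to-End \cref{theorem:PathToEnd}. Either way this contradicts soundness, either by reaching a marking from which $\multiset{o}$ is unreachable or by violating condition (2) of \cref{def:Soundness}. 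Acyclicity ensures these path arguments terminate and that ``before/after'' along paths is well defined.

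I expect the main obstacle to be the backward direction. It requires showing that a place $z$ which is neither an ancestor nor a descendant of any place in $M_r$ cannot be concurrent to \emph{all} of $M_r$ simultaneously. My first attempt would be to pick a joining transition $t$ of $z$ with some $x \in M_r$ whose input from $z$'s side is not fed by $M_r$. I would then show that from $M_r$ this input of $t$ stays unmarked forever, while the token reaching $t$ from $x$ is stuck, contradicting the option to complete in $\multiset{o}$. If choosing such a $t$ proves delicate, I would fall back to an induction on the length of the occurrence sequence $\multiset{i} \to M_r$. The inductive invariant would be that every reachable marking meets every maximal set of pairwise concurrent places, i.e.\ is a ``maximal cut''. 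Firing one transition in a simple free-choice acyclic net preserves this invariant because $\preset{t}$ and $\postset{t}$ are exchanged and their concurrency sets coincide outside of $\preset{t} \cup \postset{t}$.
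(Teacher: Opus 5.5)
Your admissibility argument is fine, but the reduction does not carry the proof. Taken literally, the Claim is false. In the sound \AFW{} where $i$ chooses between $t_a\colon i\to a$ and $t_b\colon i\to b$, each followed by a transition into $o$, the marking $\multiset{a}$ is reachable and $b\notin\multiset{a}$. Yet no path joins $a$ and $b$: they are non-concurrent by conflict, case~(3) in \cref{subsec:AdmissibilityOutput}. Under your contradiction hypothesis, on the other hand, the Claim's conclusion is already excluded by \cref{lemma:PathAbsence}, so proving the Claim there \emph{is} deriving the contradiction.

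The forward direction gives no contradiction by itself: in the example, $b$ is never marked from $\multiset{a}$ and nothing is wrong. So everything rests on the backward direction, which is the paper's Case~2. There you leave open exactly the two points that need proof:
\begin{enumerate}
\item how to choose the joining transition $t$ so that its input on $z$'s side is never fed from $M_r$ (that input lies after $z$, but may also lie after other marked places);
\item why a token of $M_r$ is actually forced into $t$'s other input.
\end{enumerate}
Playing the token game from the witness $M_z$ says nothing about $M_r$.

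The fallback fails too. The invariant as written is false: $\Set{b}$ is a maximal set of pairwise concurrent places that $\multiset{a}$ misses. Read as ``every reachable marking is maximum admissible'', its inductive step rests on a false claim. Let $i$ choose between $u_1\colon i\to p,r$ and $u_2\colon i\to p',w$, with $t\colon p\to q$, $t'\colon p'\to q$, $g\colon r\to s$, $h\colon w\to s$, and $f\colon q,s\to o$. Then $\concurrent{w}{q}$ but $\notconcurrent{w}{p}$, although $w\notin\preset{t}\cup\postset{t}$.

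Your first attempt can be completed as follows; the choice of $t$ is the missing idea.
\begin{enumerate}
\item Let $F$ be the set of nodes reachable by a path from a place of $M_r$. By \cref{lemma:PathAbsence}, $z\notin F$, while $o\in F$. For $\rho\in\Paths{z}{o}$, let $t$ be the \emph{first} node of $\rho$ lying in $F$.
\item $t$ is a transition. Suppose instead it were a place $c$ reached from some $x'\in M_r$. Take a path from $x'$ to $c$ continued along $\rho$. By \cref{lemma:ConcurrentJointTransition} (as $\concurrent{x'}{z}$), it first meets $\rho$ in a transition. That transition must precede $c$ on $\rho$, because $c$ is common to both paths, and it lies in $F$ --- contradicting the choice of $c$.
\item The predecessor $c_z$ of $t$ on $\rho$ is outside $F$. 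By your forward observation, $c_z$ is never marked from $M_r$, so $t$ never fires from $M_r$.
\item Since $t\in F$, it has a second input $c'\in F$. Hence $|\preset{t}|\geq 2$, and simple free-choiceness gives $\postset{c'}=\Set{t}$.
\item A token can be pushed from $M_r$ along a path to $c'$. At a place with several outputs, the next transition has that place as its only input, so it is enabled. A place with a single output must hand its token to it on every run to $\multiset{o}$.
\item The token on $c'$ can then leave only through $t$, which never fires, contradicting the option to complete.
\end{enumerate}
Alternatively, apply Hack's S-coverability theorem to the live and bounded short-circuited net: $z$ lies in an S-component. That component contains $i$, since every cycle uses the short-circuit transition, so it holds exactly one token in every reachable marking. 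The place $x\in M_r$ carrying that token then satisfies $\notconcurrent{x}{z}$.
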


\begin{proof}
\proofsize
Let $N=(P,T,F,i,o)$ be a sound \AFW{}. %
Furthermore, let $M_r=\multiset{ p_1,\ldots,p_m }$, $m \geq 1$, be a reachable marking $M_r \in \reachableM{N}{\multiset{i}}$. %
The proof is done by contradiction: $M_r$ is \emph{not} maximum admissible. %
As $M_r$ is admissible by \cref{def:SufficientMarkings} but $M_r$ is not \emph{maximum} admissible, by \cref{def:MaximalMarking}, there must be a place $p \in (P \setminus \markingset{M_r})$ not in $M_r$ being concurrent to all marked places in $M_r$: %
\begin{equation}
  \exists p \in (P \setminus M_r) \; \forall p' \in M_r\colon \concurrent{p'}{p} \label{eq:p101}
\end{equation}
Let $p$ be such a place in the following.

For each $p_i \in M_r$, $1 \leq i \leq m$, there is a path $\rho_i \in \Paths{p_i}{o}$ to the sink $o$ by \cref{def:WorkflowNet} of workflow nets. %
In addition, let $\rho_p \in \Paths{p}{o}$ be a path to $o$, which contains $p$ by \cref{def:WorkflowNet}. %
By \cref{theorem:PathToEnd}, for each $\rho_i$ there is exact one token $\rho_i$ on $p_i$ in $M_r$ ($\rho_i$ is safe). %
There are exactly two cases for $\rho_p$: %
\begin{description}
	\item[Case 1:] $\exists p' \in \big( \pathset{\rho_p} \cap P \big)\colon p' \in M_r$ (there is a token on $\rho_p$ in $M_r$). %
	Thus, there is a path from $p$ to $p'$, $\Paths{p}{p'} \not = \emptyset$. %
	Since $\concurrent{p}{p'}$ by \eqref{eq:p101}, this contradicts \cref{lemma:PathAbsence} that concurrency requires the absence of paths. $\lightning$ %
	This case does not hold. 
	\item[Case 2:] $\forall p' \in \big( \pathset{\rho_p} \cap P \big)\colon p' \notin M_r$ (there is no token on $\rho_p$ in $M_r$). %
The entire situation is abstractly illustrated in \cref{fig:ProofMaximal}. %
  By \cref{lemma:ConcurrentJointTransition}, 
all such two paths $\rho_x$ and $\rho_y$, $\rho_x,\rho_y \in \Set{ \rho_1,\ldots,\rho_m,\rho_p }$, $\rho_x \not = \rho_y$, contain a transition $t$ with at least two input places $in_x \in (\pathset{\rho_x} \cap \preset{t})$ and $in_y \in (\pathset{\rho_y} \cap \preset{t})$, $in_x \not = in_y$; even for $\rho_p$ with any other path $\rho_i$, $1 \leq i \leq m$. For the moment, we say $\rho_x$ and $\rho_y$ ``converge'' in $t$. %
	By this case, $\rho_p$ is without any token in $M_r$, i.\,e., the transition(s) $t_1,\ldots,t_m$, in which $\rho_p$ converges with any of the other paths $\rho_i$, $1 \leq i \leq m$, are dead when the token(s) on those paths reach any of $t_1,\ldots,t_k$ in a marking $M' \in \reachableM{N}{M_r}$. %
	This contradicts \cref{def:Soundness} of soundness. $\lightning$ %
	This case does not hold. %
\end{description} %
\begin{figure}[tb]
	\centering 
		\includegraphics[width=0.4\textwidth]{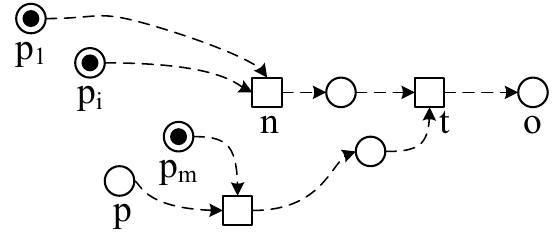}
	\caption{For each $p_i$ with a token in $M_r$, there is a path $\rho_i$ to the sink $o$.}
	\label{fig:ProofMaximal}
\end{figure}
Since both cases do not hold, the assumption contradicts. For this reason, the theorem holds that each reachable marking is maximum admissible. \proofend
\end{proof}

For example, the marking $\multiset{ p9,p10 }$ in \cref{fig:example} is \emph{not} maximum admissible since $p9$ and $p10$ have $p3$, $p17$, etc. as common concurrent places. The marking $\multiset{ p3,p5 }$ is \emph{not} maximum admissible because it is not admissible at all: $\notconcurrent{p3}{p5}$. However, the marking $\multiset{ p5,p12,p14 }$ is maximum admissible as all places, to which $p12$ is concurrent (e.\,g., $p9$, $p11$, $p18$, etc.), are not concurrent to $p5$ and $p14$.

It follows for checking if a marking is reachable in a sound \AFW{} to investigate first its maximum admissibility. By \cref{def:MaximalMarking}, checking maximum admissibility of a marking depends on two simple rules. Following these two rules, the computation of whether a given marking is (maximum) admissible or not, can be decided in $O(|P|^2 + |T|^2)$:

\begin{theoremit}[Computational Complexity of (Maximum) Admissibility]
\label{theorem:ComputationalComplexityMax}
Let $N=(P,T,F,i,o)$ be a sound \AFW{}. Deciding, whether a given marking $M_r$ is maximum admissible in $N$ or not, can be achieved in $O(|P|^2 + |T|^2)$. \defend
\end{theoremit}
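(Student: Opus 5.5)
The plan splits the decision into two parts: computing the concurrency relation $\parallel$ once, and then checking the two conditions of \cref{def:MaximalMarking} against it.

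\textbf{Computing the concurrency relation.} For the first part I rely on the known algorithms for free-choice nets. Kovalyov and Esparza \cite{KovalyovEsparza} compute the relation for live and bounded free-choice nets; Prinz et al.\@ \cite{DBLP:conf/apn/PrinzKB24} compute it for sound free-choice workflow nets. By the equivalent characterisation in \cref{def:Soundness}, the short-circuited net of a sound \AFW{} is live and bounded, and it has the same reachable markings. So these algorithms apply. For acyclic nets, \cite{DBLP:conf/apn/PrinzKB24} gives a bound of $O(|P|^2+|T|^2)$, i.e.\ quadratic in the number of nodes. This is the step I expect to be the main obstacle. If that quadratic bound cannot simply be cited, I would first try to argue directly that the standard fixpoint (propagate concurrency from $\postset{t}$ pairs of each transition, and from a place pair $(x,y)$ to $(x,z)$ for $z\in\postset{t}$ whenever $\preset{t}\subseteq\concurrentSet{x}\cup\{\ldots\}$) can be processed in topological order thanks to acyclicity, so that each node pair is handled only a constant number of times. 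I would store the result as a $|P|\times|P|$ bit matrix, together with the sets $\concurrentSet{x}$ as bit vectors of length $|P|$.

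\textbf{Checking the two conditions.} Given the matrix, I first reject any marking $M_r$ that has a place with multiplicity at least $2$. This is correct because sound \AFWs{} are safe (\cref{lemma:Safeness}), so such a marking can never be reachable, and the check costs $O(|P|)$. Next, admissibility (condition 1) is tested by looking up every pair $x\neq y$ in $\markingset{M_r}$. There are at most $|P|^2$ pairs, so this costs $O(|P|^2)$. Finally, condition 2 is tested by intersecting the bit vectors $\concurrentSet{x}$ for all $x\in\markingset{M_r}$. Equivalently, for each place $p$ I check whether $p\in\concurrentSet{x}$ holds for every marked $x$. That is at most $|P|$ intersections of vectors of length $|P|$, so again $O(|P|^2)$. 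The marking is maximum admissible exactly when both tests succeed; any conflicting pair or any surviving common place found along the way can be reported as diagnostics.

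\textbf{Total cost.} Summing the three contributions gives $O(|P|^2+|T|^2)+O(|P|)+O(|P|^2)=O(|P|^2+|T|^2)$, which is the bound in the statement.
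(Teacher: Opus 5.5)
Your proposal is correct and matches the paper's proof. The bound is dominated by computing $\parallel$ with the Concurrent Paths algorithm of \cite{DBLP:conf/apn/PrinzKB24} in $O(|P|^2+|T|^2)$, after which the conditions of \cref{def:MaximalMarking} are checked on bit sets. The paper folds both conditions into the single test $\bigcap_{x \in M_r}\big(\concurrentSet{x}\cup\Set{x}\big) = M_r$, counted as $O(|P|)$ under constant-time set operations; your separate $O(|P|^2)$ checks are a more conservative accounting that still fits the bound. Neither your topological-order fallback nor the multiplicity pre-check is needed, and the latter is justified via reachability rather than via \cref{def:MaximalMarking}.
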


\begin{proof}
\proofsize
It is possible with the algorithm of Prinz et al.\@ \cite{DBLP:conf/apn/PrinzKB24} to determine the concurrency relation by \cref{def:Concurrency} in $O(|P|^2 + |T|^2)$ for sound \AFWs{}. %
The second rule of \cref{def:MaximalMarking} states: %
\begin{equation}
	\bigcap_{x \in M_r} \concurrentSet{x} = \emptyset.
\end{equation}
Since $x \in M_r$ is concurrent to all other marked places of $M_r$ except to itself, $x$ is not in the intersection of all concurrency sets because $x$ is missing in its own concurrency set $\concurrentSet{x}$. %
For this reason, temporarily adding $x$ to its own concurrency set leads to a combination of both rules: %
\begin{equation}
	\bigcap_{x \in M_r} \big( \concurrentSet{x} \cup \Set{ x } \big) = M_r. \label{eq:p201}
\end{equation}
Once the concurrency relation is computed and the concurrency sets of all places are stored within a computationally efficient data structure for mathematical sets being able to compute the intersection in constant time, e.\,g., by a BitSet in Java, \cref{eq:p201}
can be checked in linear time, $O(|P|)$.

In summary, the overall computational complexity is dominated by the computation of the concurrency relation and can, therefore, be achieved in $O(|P|^2 + |T|^2)$. Furthermore, if the concurrency relation is stored, checking (maximum) admissibility can be achieved in linear time, $O(|P|)$. \proofend
\end{proof}

\subsection{Algorithm and Output}
\label{subsec:AdmissibilityOutput}

\cref{def:MaximalMarking} of maximum admissible markings and the proof of \cref{theorem:ComputationalComplexityMax} allows for defining a straight-forward algorithm to check (maximum) admissibility, which is stated in \cref{algo:CheckMaximumAdmissibility}. This algorithm requires a sound \AFW{} $N$ and a marking $M_r$ (to investigate) as inputs. \cref{al:ma:1} computes the concurrency relation. This is achieved with the \emph{Concurrent Paths (CP)} algorithm being stated in \cref{algo:DetermineConcurrency}. It is currently the algorithm having the best asymptotic computational time complexity for sound \AFWs{}.  Instead of investigating each concurrent pair of places, the \emph{CP} algorithm investigates sets of nodes as paths. It requires the computation of the $\mathit{HasPath}$ relation (\cref{algo:HasPath}) as a pre-computation step, whereas this relation just states the sets of nodes to which a node has paths. The reader can find more information in \cite{DBLP:conf/apn/PrinzKB24}. Setting the focus back to \cref{algo:CheckMaximumAdmissibility}, \cref{al:ma:2} assigns the set of all places $P$ of $N$ to $M_p$ as maximal possibility of concurrent places and, then, incrementally reduces $M_p$ regarding each $x \in M_r$ in Lines~\ref{al:ma:3}--\ref{al:ma:4} following the proof of \cref{theorem:ComputationalComplexityMax}.

\begin{algorithm}[t]
\caption{Checking (maximum) admissibility of a given marking $M_r \in \Bag{P}$ for a sound \AFW{} $N=(P,T,F,i,o)$.}
\label{algo:CheckMaximumAdmissibility}
\SetKwProg{Fn}{Function}{}{end}
\Fn{checkMaximumAdmissibility($N$, $M_r$)}{
  $\parallel \; \gets$ determineConcurrency($N$)\; \nllabel{al:ma:1} 
	$M_p \gets P$\;\nllabel{al:ma:2} 
	\For{$x \in M_r$}{\nllabel{al:ma:3}
		$M_p \gets M_p \cap \big( \concurrentSet{x} \cup \Set{x}\big)$\;
	}\nllabel{al:ma:4} 
	\uIf{$M_r = M_p$}{\nllabel{al:ma:ma}
		\KwRet{$(\text{maximum admissible}, \emptyset, \emptyset)$}
	}\Else{
		\uIf{$M_r \subset M_p$}{\nllabel{al:ma:a}
			\KwRet{$(\text{admissible}, M_p \setminus M_r, \emptyset)$}
		}\Else{
			\KwRet{$(\text{not admissible}, M_p \setminus M_r, M_r \setminus M_p)$}\nllabel{al:ma:na}
		}
	}	
}
\end{algorithm}

There are three kinds of outputs of \cref{algo:CheckMaximumAdmissibility}: marking $M_r$ is either (1) \emph{maximum admissible} in the case $M_r = M_p$ (\cref{al:ma:ma}), (2) \emph{admissible} in the case $M_r \subset M_p$ (\cref{al:ma:a}), or (3) \emph{not admissible} (\cref{al:ma:na}). In the first case~(1), there is no derivation between $M_r$ and $M_p$, thus, the algorithm just returns the decision. The second case~(2) is admissible but has missing places, i.\,e., the marking is not maximal in the sense of admissibility. For this reason, the algorithm returns (besides its decision) also the set of places $M_p \setminus M_r$ that is (potentially) missing. For the last case~(3), $M_r$ is not admissible so that $M_r$ contains at least one marked place that is not concurrent to another marked place within $M_r$, i.\,e., at least two marked places are in \emph{conflict} within $M_r$. Such places can be identified by removing a possible maximum admissible marking ($M_p$) from $M_r$. Furthermore, there can also be missing places for this case as for case~(2). As a consequence, the algorithm returns the decision, the set of (potentially) missing places $M_p \setminus M_r$, and the set of conflicting places $M_r \setminus M_p$.

\begin{algorithm}[t]
\caption{The \emph{Concurrent Paths (CP)} algorithm: Deriving the concurrency relation $\parallel$ for a sound \AFW{} $N=(P,T,F,i,o)$ as adjancency list (adapted from \cite{DBLP:conf/apn/PrinzKB24}).}
\label{algo:DetermineConcurrency}
\SetKwProg{Fn}{Function}{}{end}
\Fn{determineConcurrency($N$)}{
	\tcp{Initialize}
	$\parallel \gets \emptyset$\;
	\lFor{$x \in P \cup T$}{
		$\concurrentSet{x} \gets \emptyset$
	}
	$\mathit{HasPath} \gets$ computeHasPath($N$)\;
	\tcp{Compute}
	$I \gets \emptyset$\;
	\For{$t \in T$}{
		\lFor{$x \in \postset{t}$}{
			$I(x) \gets I(x) \cup \big(\postset{t} \setminus \Set{x}\big)$
		}
	}
	\For{keys $x$ of $I$}{
		\For{$y \in I(x)$}{
			$R^{\conj{y}}_{x} \gets \mathit{HasPath}(x) \setminus \mathit{HasPath}(y)$\;
			\lFor{$a \in R^{\conj{y}}_{x}$}{
				$\concurrentSet{a} \gets \concurrentSet{a} \cup \big( \mathit{HasPath}(y) \setminus \mathit{HasPath}(a) \big)$
			}
		}
	}
	\KwRet{$\parallel$}  
}
\end{algorithm}

\begin{algorithm}[t]
\caption{Computing the $\mathit{HasPath}$ relation for a sound \AFW{} $N=(P,T,F,i,o)$ as adjacency list (adapted from \cite{DBLP:conf/apn/PrinzKB24}).}
\label{algo:HasPath}
\SetKwProg{Fn}{Function}{}{end}
\Fn{computeHasPath($N$)}{
	\tcp{Initialize}
	$\mathit{HasPath} \gets \emptyset$\;
	\lFor{$x \in P \cup T$}{
		$\mathit{HasPath}(x) \gets \emptyset$
	}
	$L \gets P \cup T$ in reverse topological order starting from $o$\;
	\tcp{Compute}
	\lFor{$x \in L$}{
		$\mathit{HasPath}(x) \gets \Set{x} \cup \bigcup_{s \in \postset{x}} \mathit{HasPath}(s)$
	}
	\KwRet{$\mathit{HasPath}$}
}
\end{algorithm}

For example, applying the net of \cref{fig:example} with the marking $\multiset{p5,p12,p14}$ to \cref{algo:CheckMaximumAdmissibility} will lead to the output ``\emph{maximum admissible}'' without the necessity for giving further feedback. Applying the marking $\multiset{p9,p10}$ and the same net to the algorithm will lead to the output ``\emph{admissible}'' and the set $\Set{p2,p3,p11,p12,p13,p14,$ $p16,p17,p18}$ of (potentially) ``missing'' places. This output can be used as a diagnostic so that other algorithms or users can be informed that the marking $\multiset{p9,p10}$ is \emph{admissible} but will not be reachable without other places being marked at the same time. Those other places could be colored, e.\,g., in green. \cref{fig:examplep9p10} shows an example on how the information can be visualized. %
\begin{figure}[t]
	\centering
	\includegraphics[width=0.8\textwidth]{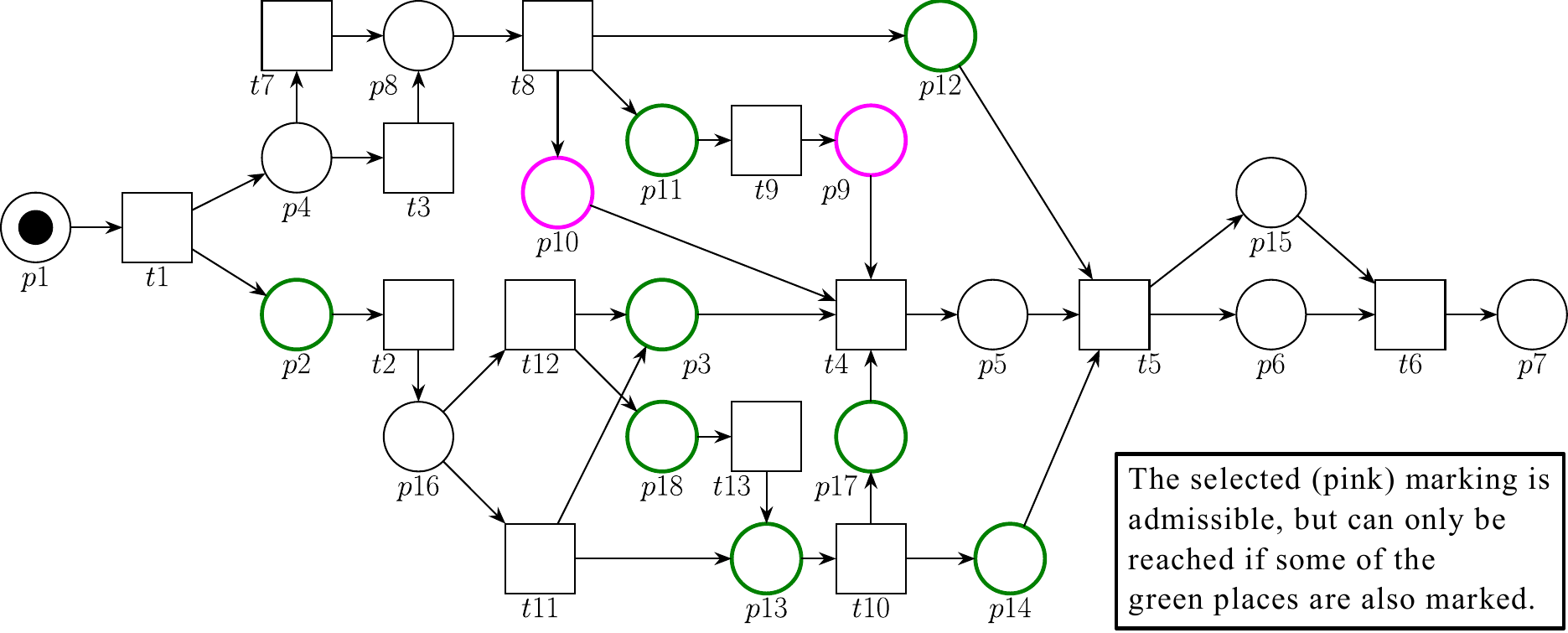}
	\caption{The net of \cref{fig:example} with selected marking $\multiset{p9,p10}$ (highlighted in pink). The marking is \emph{admissible} by \cref{algo:CheckMaximumAdmissibility}. At least one of the places highlighted in green will be marked at the same time as $\multiset{p9,p10}$ in a reachable marking as this is not \emph{maximum} admissible.}
	\label{fig:examplep9p10}
\end{figure}
\begin{figure}[t]
	\centering
		\includegraphics[width=0.8\textwidth]{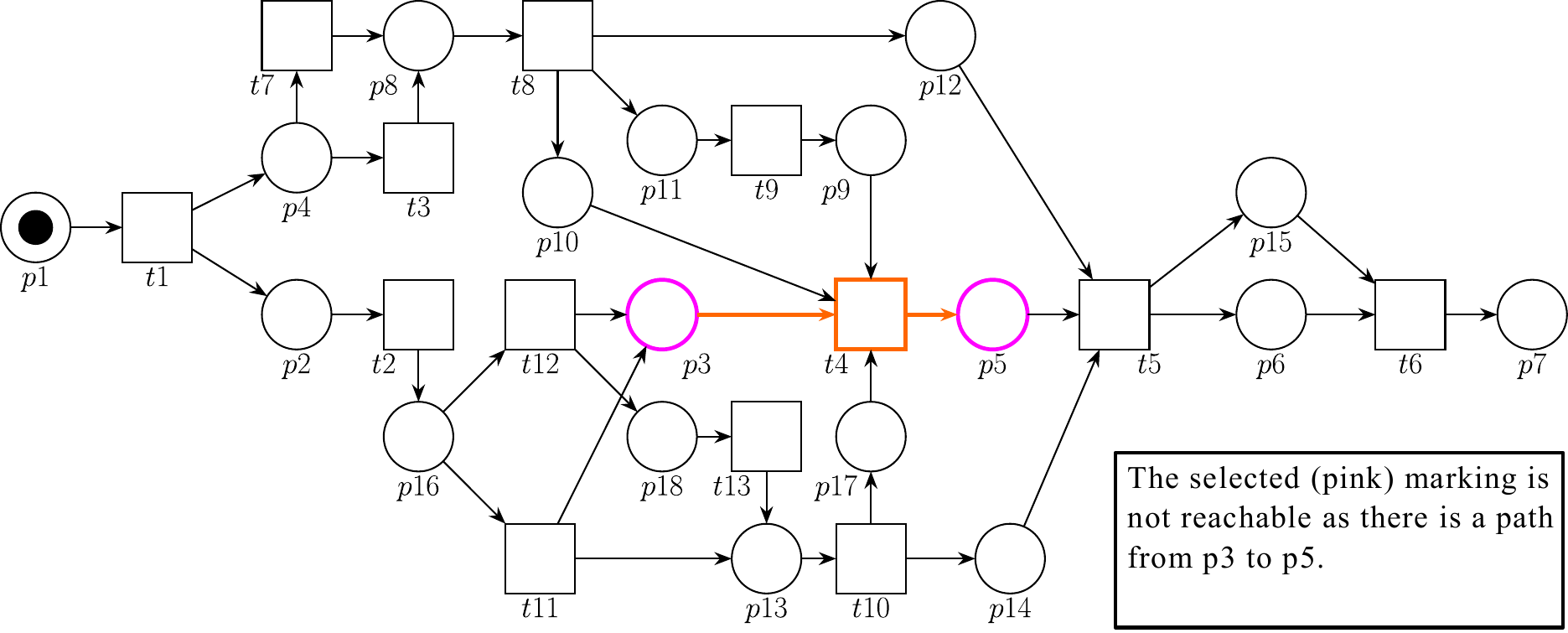}
	\caption{The net of \cref{fig:example} with selected marking $\multiset{p3,p5}$ (highlighted in pink). The marking is \emph{not admissible} by \cref{algo:CheckMaximumAdmissibility} since $p3$ is not concurrent to $p5$ as there is a path from $p3$ to $p5$ highlighted in orange.}
	\label{fig:examplep3p5}
\end{figure}
The marking $\multiset{p3,p5}$ of \cref{fig:example} applied to \cref{algo:CheckMaximumAdmissibility} will lead to a ``\emph{not admissible}'' output with $M_p$ in the algorithm gets: %
\begin{align*}
	M_p = & \; P \cap \big( \concurrentSet{p3} \cup \Set{p3} \big) \cap \big( \concurrentSet{p5} \cup \Set{p5} \big) \\
	    = & \; P \cap \Set{p3,p4,p8,p9,p10,p11,p12,p13,p14,p17,p18} \cap \Set{p5,p12,p14} \\
			= & \; \Set{p12,p14}. 
\end{align*} %
As a consequence, the algorithm returns $M_p \setminus M_r = \Set{p12,p13}$ as (potentially) missing places and $M_r \setminus M_p = \Set{p3,p5}$ as conflicting places since $\notconcurrent{p3}{p5}$. Investigating this output more intensive, there are three reasons why $p3$ cannot be concurrent to $p5$ in general: (1) there is a path from $p3$ to $p5$ (such as in this case, \cref{lemma:PathAbsence}), (2) there is a path from $p5$ to $p3$ (\cref{lemma:PathAbsence}), or (3) there are no paths between $p3$ and $p5$ and, therefore, they are mutually exclusive (compare, e.\,g., \cite{DBLP:conf/edoc/PrinzWH24}). Cases~(1) and (2) can be identified easily with a depth-first search (or the $\mathit{HasPath}$ relation of \cref{algo:HasPath}) for further diagnostics (e.\,g., to color both places and the path between them as illustrated in \cref{fig:examplep3p5} in orange). Thus, the remaining case~(3) is for free. For such a case, take the sound \AFW{} in \cref{fig:example-divpoints-2-p3p5p7} with marking $\multiset{p3,p5,p7}$. Obviously, $p5$ and $p7$ are not concurrent but mutually exclusive (as we know from the above thoughts). However, it would be beneficial for users to know about the ``decision point'' $p4$ with the exclusive paths to $p5$ and $p7$ ``hindering'' the concurrency of both places. Fortunately, the approach in the next section will be able to identify such information.

\begin{figure}[t]
	\begin{minipage}[t]{0.58\textwidth}
		\raggedleft
			\includegraphics[width=0.95\textwidth]{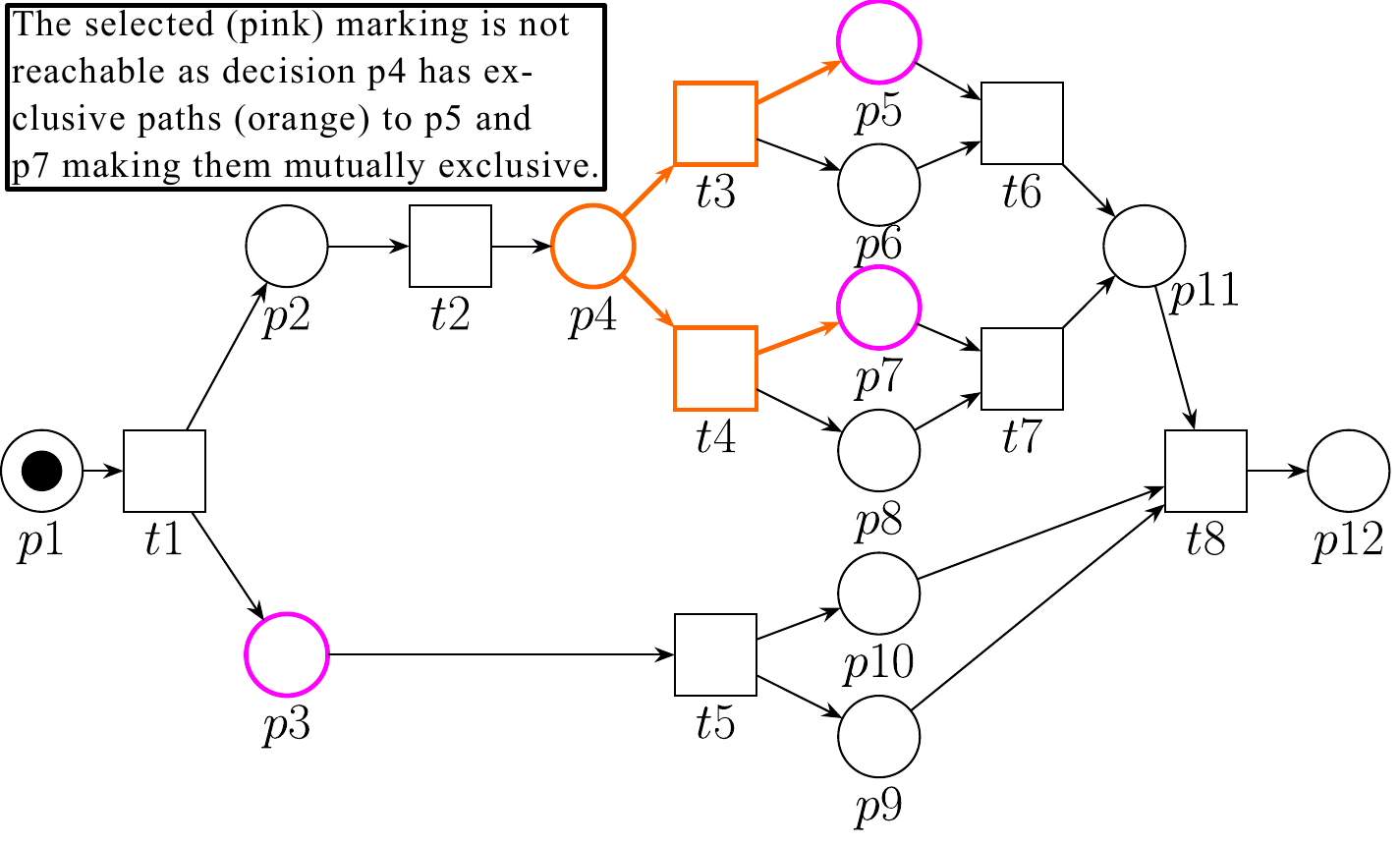}
		\caption{A sound \AFW{} with selected marking $\multiset{p3,p5,p7}$ (highlighted in pink). The marking is \emph{not admissible} by \cref{algo:CheckMaximumAdmissibility} since $p5$ is not concurrent to $p7$. Since both places do not have any path in-between them, they are mutually exclusive. For diagnostics, it would be of benefit to present the decision point $p4$ in the net making both places exclusive.}
		\label{fig:example-divpoints-2-p3p5p7}
	\end{minipage}\hfill%
	\begin{minipage}[t]{0.4\textwidth}
		\raggedright
		\includegraphics[width=0.95\textwidth]{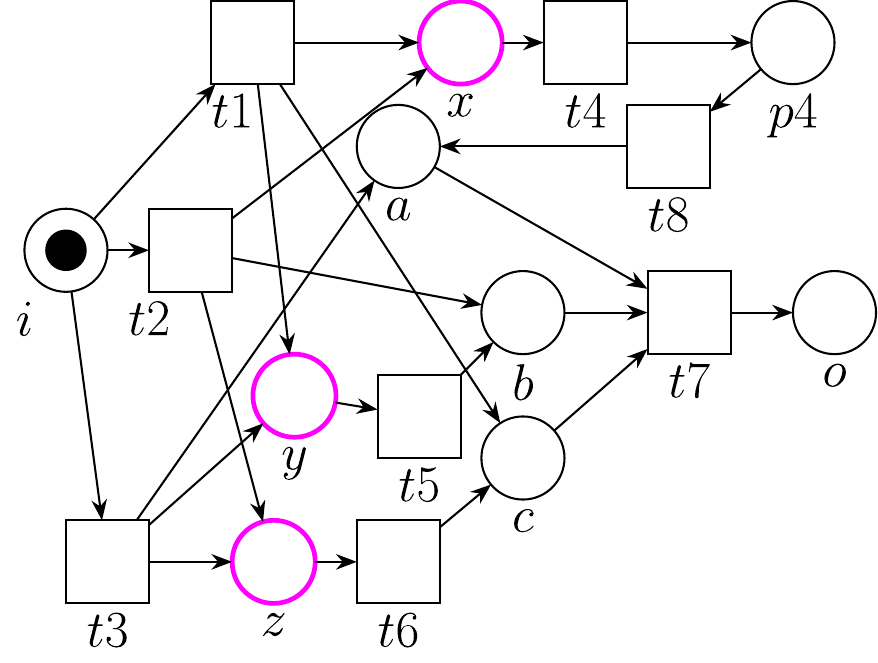}
	\caption{A net with a maximal admissible marking $\multiset{x,y,z}$ that is not reachable.}
	\label{fig:MaxNotReach}
	\end{minipage}	
\end{figure}

\section{Reachability}
\label{sec:AcyclicReachability}

Unfortunately, maximum admissibility is only a necessary condition for reachability but \emph{not} sufficient since not each maximum admissible marking is also reachable as the net in \cref{fig:MaxNotReach} demonstrates. This net contains three places $x$, $y$, and $z$ (colored in pink in the figure). They are pairwise concurrent, i.\,e., $\concurrent{x}{y}$, $\concurrent{x}{z}$, and $\concurrent{y}{z}$. The resulting marking $\multiset{x,y,z}$ is maximum admissible. However, as the reader can confirm, this marking is not reachable since the places are not concurrent to each other at the same time. As a consequence, not every maximum admissible marking is reachable. We have to confirm maximum admissible markings which are reachable as well.

In the following, the new concept of \emph{diverging points} is introduced that finally leads to an algorithm for deciding reachability while providing diagnostics at the same time. Eventually, it will be discussed briefly how extended free-choiceness can be handled with the approach.

\subsection{Run Nets and Diverging Points}

In the following, we will show the sufficient condition of reachability in sound \AFWs{}. Besides the absence of paths between concurrent places, another benefit of sound \AFWs{} is that their \emph{occurrence nets} \cite{DBLP:conf/apn/PolyvyanyyWCRH14} (i.\,e., a net, which represents how a corresponding net was executed) are simplified to \emph{run nets} \cite{DBLP:conf/edoc/PrinzWH24}. In contrast to occurrence nets, run nets are just subgraphs (sub \emph{nets}) of sound \AFWs{} and, therefore, each node can only occur once. A run net is defined as follows \cite{DBLP:conf/edoc/PrinzWH24}: 

\begin{definition}[Run Nets]
\label{def:RunNet}
Let $N = (P,T,F,i,o)$ be a sound \AFW{}. An occurrence sequence $\sigma \in T^*$ is a \emph{run} iff $\sigma$ leads from the initial marking $\multiset{i}$ to the final marking $\multiset{o}$. A net $\pi = (P_\sigma,T_\sigma,F_\sigma)$ is a \emph{run net} of a run $\sigma$ of $N$ iff %
\begin{align*}
	P_\sigma  = & \; \Big \{ \, p \in P \; \big| \; p \in \bigcup_{t \in \sigma} (\preset{t} \cup \postset{t}) \, \Big \} \\
	T_\sigma  = & \; \Set{ t \in T \mid t \in \sigma } \\
	F_\sigma  = & \; F \cap \big( (P_\sigma \cup T_\sigma) \times (P_\sigma \cup T_\sigma) \big).
\end{align*} %
$n \in \sigma$ \emph{occurs} in $\pi$, depicted as $n \in \pi$. $\RunNets{N}{M_0}$ depicts the set of all run nets of $N$. \defend
\end{definition}

\begin{figure}[t]
	\centering
		\includegraphics[width=0.80\textwidth]{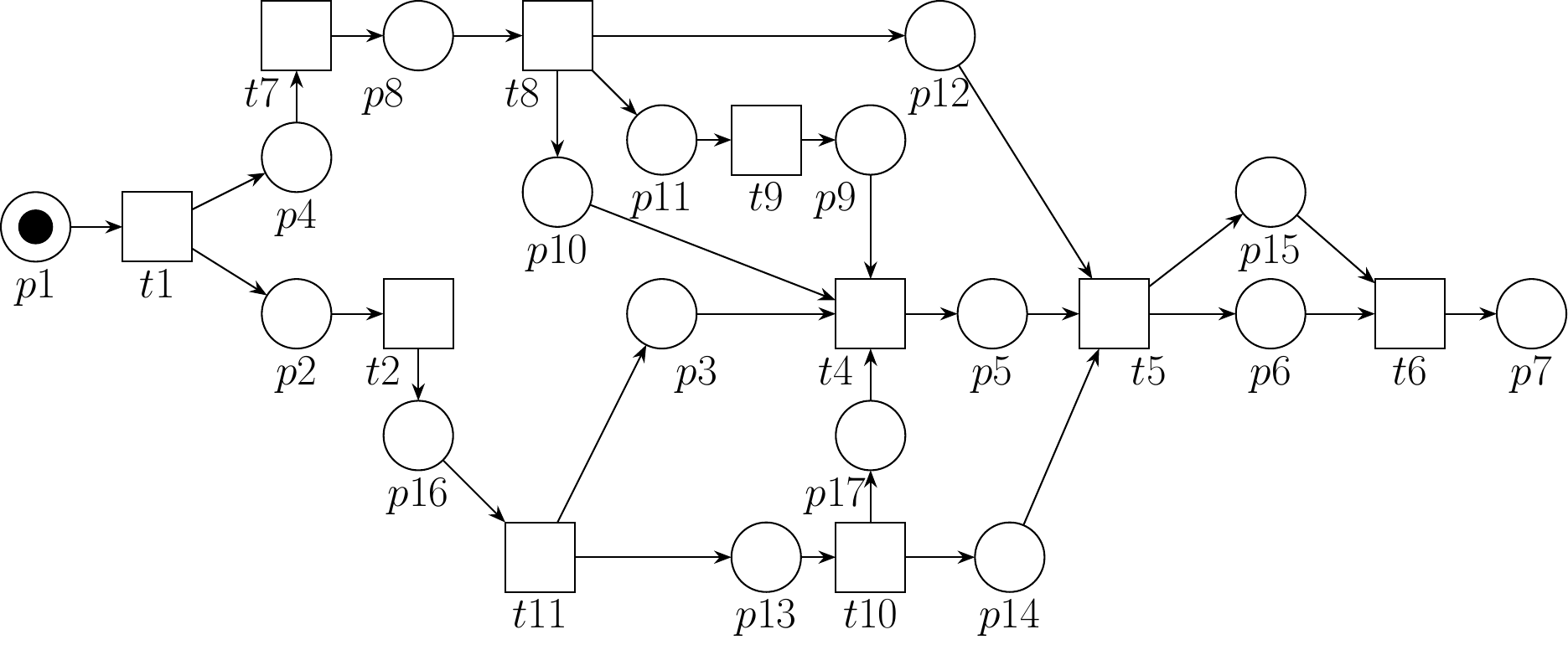}
		\vspace{-0.25cm}
	\caption{Example of a run net of the example workflow net in \cref{fig:example}.}
	\label{fig:example-run-net}
\end{figure}

\noindent If a run net $\pi = (P_\sigma,T_\sigma,F_\sigma)$ of a sound \AFW{} $N = (P,T,F,i,o)$ contains a transition $t \in T_\sigma$, then it must also contain its input and output places, $\forall t \in T_\sigma\colon (\preset{t} \cup \postset{t}) \subseteq P_\sigma$. If a run net contains a place $p \in P_\sigma$, $p \notin \Set{ i, o }$, it must also contain exactly one of its input and one of its output transitions, $\forall p \in (P_\sigma \setminus \Set{ i,o })\colon |\postset{p} \cap T_\sigma| = 1$ and $|\preset{p} \cap T_\sigma| = 1$. \Cref{fig:example-run-net} shows an example of a run net of our net in \cref{fig:example}. 

Let $M_m$ be a marking for which we want to decide reachability in a given sound \AFW{} $N$. $M_m$ must be maximum admissible by \cref{theorem:ReachableMarkingsMaximal} and, thus, all marked places of $M_m$ are pairwise concurrent. Regarding the run nets of $N$ and \cref{lemma:PathAbsence} about the absence of paths between concurrent places, there cannot be paths between any possibly concurrent places in any run net of $N$. This fact was also confirmed in Theorem~2 in Prinz et al.\@ \cite{DBLP:conf/edoc/PrinzWH24}. For this reason, deciding whether a maximum admissible or admissible sub-marking $M_m$ is reachable in $N$ corresponds with the existence of a run net that contains all marked places in $M_m$:

\begin{corollaryit}
\label{cor:ReachabilityWithRunNets}
Let $N = (P,T,F,i,o)$ be a sound \AFW{} and a marking $M_m$. It holds: %
\begin{align}
	M_m \in \reachableM{N}{\multiset{i}} \quad \iff \quad & M_m \text{ is maximum admissible } \quad \land \label{eq:c11}\\ & \exists (P_\sigma,T_\sigma,F_\sigma) \in \RunNets{N}{\multiset{i}}\colon \markingset{M_m} \subseteq P_\sigma \nonumber \\ %
	\exists M_s \in \reachableM{N}{\multiset{i}}\colon M_m \subseteq M_s \quad \iff \quad & M_m \text{ is admissible} \quad \land \label{eq:c12} \\ & \exists (P_\sigma,T_\sigma,F_\sigma) \in \RunNets{N}{\multiset{i}}\colon \markingset{M_m} \subseteq P_\sigma \tag*{\defend}
\end{align}
\end{corollaryit}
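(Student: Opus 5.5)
We prove both equivalences the same way. The forward directions follow from soundness. The backward directions use the run net as a causal (occurrence) net, in which the places of $M_m$ form a co-set.

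\emph{Forward directions.} Let $M_m \in \reachableM{N}{\multiset{i}}$ via an occurrence sequence $\sigma_1$.
\begin{itemize}
\item By \cref{def:Soundness}(1), there is a $\sigma_2$ with $\step{N}{M_m}{\sigma_2}{\multiset{o}}$. Hence $\sigma=\sigma_1\sigma_2$ is a run.
\item Every marked place of $M_m$ is either $i$ (which is in the preset of the first transition of $\sigma$) or was produced by a transition of $\sigma_1$. So $\markingset{M_m}\subseteq P_\sigma$.
\item Maximum admissibility of $M_m$ is \cref{theorem:ReachableMarkingsMaximal}.
\end{itemize}
For \eqref{eq:c12}, apply the same argument to the reachable $M_s \supseteq M_m$. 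This gives $\markingset{M_m}\subseteq\markingset{M_s}\subseteq P_\sigma$. Admissibility of $M_m$ follows directly from \cref{def:Concurrency}, since any two distinct places of $M_m$ are marked together in $M_s$.

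\emph{Backward direction of \eqref{eq:c12}.} Let $M_m$ be admissible and let $\pi=(P_\sigma,T_\sigma,F_\sigma)$ be a run net with $\markingset{M_m}\subseteq P_\sigma$.
\begin{itemize}
\item We use the structural facts stated after \cref{def:RunNet}: each $p\in P_\sigma\setminus\Set{i,o}$ has exactly one producer and exactly one consumer in $T_\sigma$, and $\pi$ is acyclic. Together with safeness (\cref{lemma:Safeness}), this means each transition of $\sigma$ fires exactly once and each place of $P_\sigma$ is marked exactly once during $\sigma$. So $\pi$ is a causal net.
\item Define the down-closure $C\subseteq T_\sigma$ as the set of transitions of $\pi$ that have a path in $\pi$ to some place of $\markingset{M_m}$.
\item Fire the transitions of $C$ in the order in which they appear in $\sigma$. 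Each such firing is enabled: an input place $q$ of $t\in C$ is either $i$ or has its unique $\pi$-producer in $C$ (since that producer has a path through $q$ and $t$ to $M_m$). The token on $q$ can only be removed by $q$'s unique consumer, which is $t$ itself.
\item No transition of $C$ consumes a place $x\in\markingset{M_m}$. Such a consumer $t$ would have a path from $x$ to some $y\in\markingset{M_m}$; here $y\neq x$ by acyclicity. Then $\concurrent{x}{y}$ holds by admissibility, and \cref{lemma:PathAbsence} forbids such a path.
\item Every $x\in\markingset{M_m}$ other than $i$ has its producer in $C$. Hence the marking $M'$ reached after firing $C$ marks every place of $\markingset{M_m}$, i.e.\ $\markingset{M_m}\subseteq M'$.
\end{itemize}
By safeness, a reachable marking cannot cover a multiset with multiplicities above one. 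So for the claim $M_m\subseteq M'$ we implicitly assume $M_m$ is a set. This is the multiset subtlety: admissibility as defined only constrains distinct places, so a multiset $M_m$ with some multiplicity at least $2$ satisfies the right-hand side of \eqref{eq:c12} but not the left. The equivalence therefore has to be read for set-markings, i.e.\ $M_m=\markingset{M_m}$.

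\emph{Backward direction of \eqref{eq:c11}.} The previous step yields a reachable $M'\supseteq M_m$. By \cref{theorem:ReachableMarkingsMaximal}, $M'$ is admissible.
\begin{itemize}
\item Suppose $p\in\markingset{M'}\setminus\markingset{M_m}$. Then $p$ is concurrent to every place of $M_m$, so $p\in\bigcap_{x\in M_m}\concurrentSet{x}$. This contradicts the maximum admissibility of $M_m$.
\item Hence $\markingset{M'}=\markingset{M_m}$. Safeness of $M'$ and the set reading of $M_m$ then give $M'=M_m$, so $M_m$ is reachable.
\end{itemize}

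The main obstacle is the causal-net argument in the backward direction of \eqref{eq:c12}. The care goes into showing that the restriction of $\sigma$ to the down-closure $C$ is still an occurrence sequence and that no place of $M_m$ is consumed prematurely. Both rest on the unique producer/consumer property of run nets and on \cref{lemma:PathAbsence}.
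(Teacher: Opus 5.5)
Your proof is correct and follows the paper's route. The forward directions complete the occurrence sequence to a run via soundness and invoke \cref{theorem:ReachableMarkingsMaximal}, and the backward directions rest on \cref{lemma:PathAbsence} applied inside the run net; your down-closure firing argument makes rigorous the paper's informal claim that the places of $M_m$ ``can appear in any order'', and your final step uses condition~2 of maximum admissibility to rule out extra tokens in the reached marking, which the paper's proof of \eqref{eq:c11} never invokes. Your multiset caveat is also justified: for example, $\multiset{o,o}$ is maximum admissible and $o$ lies in every run net, yet it is unreachable, so the corollary must be read for set-markings, as the paper implicitly does.
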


\begin{proof}
\proofsize
Constructive proof by focusing on \cref{eq:c11}: %
\begin{description}
	\item[$\implies$] $M_m \in \reachableM{N}{\multiset{i}}$. %
  Since $M_m$ is reachable from $\multiset{i}$, there must be a run net by \cref{def:RunNet} of run nets, which contains all marked places of $M_m$. %
	Furthermore, by \cref{theorem:ReachableMarkingsMaximal}, each reachable marking is maximum admissible, and, therefore, $M_m$. $\checkmark$
	\item[$\impliedby$] $M_m$ is maximum admissible $\land$ $\exists (P_\sigma,T_\sigma,F_\sigma) \in \RunNets{N}{\multiset{i}}\colon \markingset{M_m} \subseteq P_\sigma$. %
	Since $M_m$ is maximum admissible, all marked places of $M_m$ are pairwise concurrent. %
	Furthermore, there is a run net $\pi$, which contains all marked places of $M_m$. %
	For this reason, and by \cref{lemma:PathAbsence}, all marked places of $M_m$ do not have paths to each other in $N$ as well as in $\pi$. %
	Thus, all places in $M_m$ can have tokens in the same reachable marking from $\multiset{i}$ as they can appear in any order from each other. %
	As a consequence, $M_m$ is a reachable marking from $\multiset{i}$. $\checkmark$
\end{description} %
The proof of \cref{eq:c12} can be done similarly. \proofend
\end{proof}

Since we are able to decide the (maximum) admissibility  of a marking in $O(|P|^2 + |T|^2)$, we need an approach to check if there is at least one run net, which contains all marked places of a given marking. Deriving such a run net is not trivial, as decisions (places with multiple outgoing flows) can lead to an exponential growth of possible run nets. However, we do not have to derive a concrete run net --- we only have to show that such a run net must exist, or not. For this reason, we will focus on transitions in the following, which diverge the control-flow and therefore ``produce'' concurrency.

If a sound \AFW{} has a reachable marking $M_r$ with at least two marked places $x$ and $y$, then there must be at least one transition $t$ with multiple output places $\Set{x_t,y_t,\ldots}$ (as the net has just one source) and $x_t$ and $y_t$ have disjoint paths to $x$ and $y$ so that the tokens are not joint before $x$ and $y$. We call such $t$ a \emph{diverging point} of $\Set{x,y}$ as well as a \emph{diverging point} of the set $\markingset{M_r} = D$. \cref{fig:div-points} illustrates the concept of diverging points abstractly for three places $x$, $y$, and $z$. $\delta_1$ is a diverging point of $\Set{x,y,z}$ although it only has disjoint paths to $x$ and $y$. $\delta_2$ is not a diverging point of $\Set{x,y,z}$ since it only has paths to $z$. The transition $\delta_3$ is a diverging point of $\Set{x,y,z}$ as it has disjoint paths to $x$ and $z$ as well as $y$ and $z$. The source $i$ is \emph{not} a diverging point of $\Set{x,y,z}$ since each path from $i$ to them goes through $\delta_3$. The following definition describes that formally:

\begin{figure}[tb]
	\centering
		\includegraphics[width=0.60\textwidth]{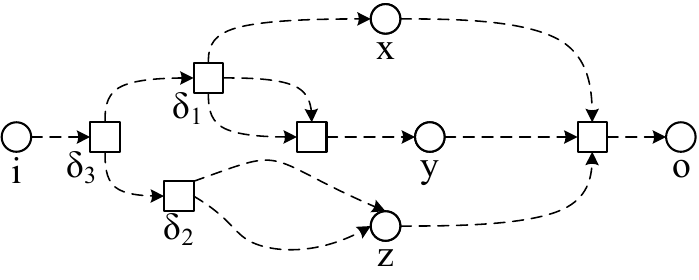}
	\caption{For the marking $\multiset{x,y,z}$ and its set $D = \markingset{\multiset{x,y,z}} = \Set{ x,y,z }$ there are diverging points $\delta_1$ (for $x$ and $y$) and $\delta_3$ (for $x$ and $z$ as well as $y$ and $z$). The source $i$ is \emph{no} diverging point of $\Set{x,y,z}$ as all paths to them pass the same output transition of $i$. $\delta_2$ is no diverging point of $\Set{x,y,z}$ as it has only paths to $z$.}
	\label{fig:div-points}
\end{figure}

\begin{definition}[Diverging Points]
\label{def:DivergingPoint}
Let $N=(P,T,F,i,o)$ be an \AFW{}. %
A node $\delta \in P \cup T$ is a \emph{diverging point} of a set $D \subseteq P \cup T$, $|D| \geq 2$, of nodes, depicted $\idivpoint{\delta}{D}$, if for at least two nodes $d_1,d_2 \in D$, $d_1 \not = d_2$, and for at least two output nodes $o_1,o_2 \in \postset{\delta}$, there are two disjoint paths $\rho_1$ from $o_1$ to $d_1$ and $\rho_2$ from $o_2$ to $d_2$. %
\[ \idivpoint{\delta}{D} \iff \exists d_1,d_2 \in D \; \exists o_1,o_2 \in \postset{\delta} \; \exists \rho_1 \in \Paths{o_1}{d_1} \; \exists \rho_2 \in \Paths{o_2}{d_2}\colon 	\pathset{\rho_1} \cap \pathset{\rho_2} = \emptyset \]
Let $\idivpoints{D} \coloneqq \Set{ \delta \in P \cup T\given \idivpoint{\delta}{D} }$ be the set of all diverging points of $D$. %
Note that we use $\idivpoints{M}$ as a short version of $\idivpoints{\markingset{M}}$ for reasons of readability for the case that $M$ is a marking.

For an output $o \in \postset{\delta}$ and the set $D$, the set $\divinfo{o}{D} \coloneqq \Set{ d \in D \mid \Paths{o}{d} \not = \emptyset }$ defines those nodes of $d \in D$ for which there is a path from $o$ to $d$ (i.\,e., $o$ \emph{reaches} $d$). %
Furthermore, $\divinfoh{\delta}{\markingset{M_m}} \coloneqq \bigcup_{o \in \postset{\delta}} \divinfo{o}{\markingset{M_m}}$ depicts the union of reachable nodes of $\delta$'s outputs. %
\defend
\end{definition}

In the \AFW{} of \cref{fig:MaxNotReach}, the set $D = \Set{x,y,z} = \markingset{[x,y,z]}$ of places has the diverging points $\idivpoints{\Set{x,y,z}} = \Set{t1,t2,t3,i}$. For example, there are disjoint paths from $t1$ to $x$ and from $t1$ to $y$. For the outputs $\Set{x,y,c}$ of $t1$, the subsets of nodes of $\Set{x,y,z}$ (they have paths to) are $\divinfo{x}{\Set{x,y,z}} = \Set{ x }$, $\divinfo{y}{\Set{x,y,z}} = \Set{ y }$, and $\divinfo{c}{\Set{x,y,z}} = \emptyset$, and, thus, $\divinfoh{t1}{\Set{x,y,z}} = \Set{ x,y }$. In the example \AFW{} in \cref{fig:DivergingPointsp7p8p9} (being the same net as in \cref{fig:example-divpoints-2-p3p5p7}), the set $\Set{p7,p8,p9}$ (colored pink in the figure) has $t4$ (colored blue) and $t1$ (colored orange) as diverging points although $t1$ has no disjoint paths for $p7$ and $p8$ (as it follows the same output).

\begin{figure}[tb]%
	\centering
	\includegraphics[width=0.6\textwidth]{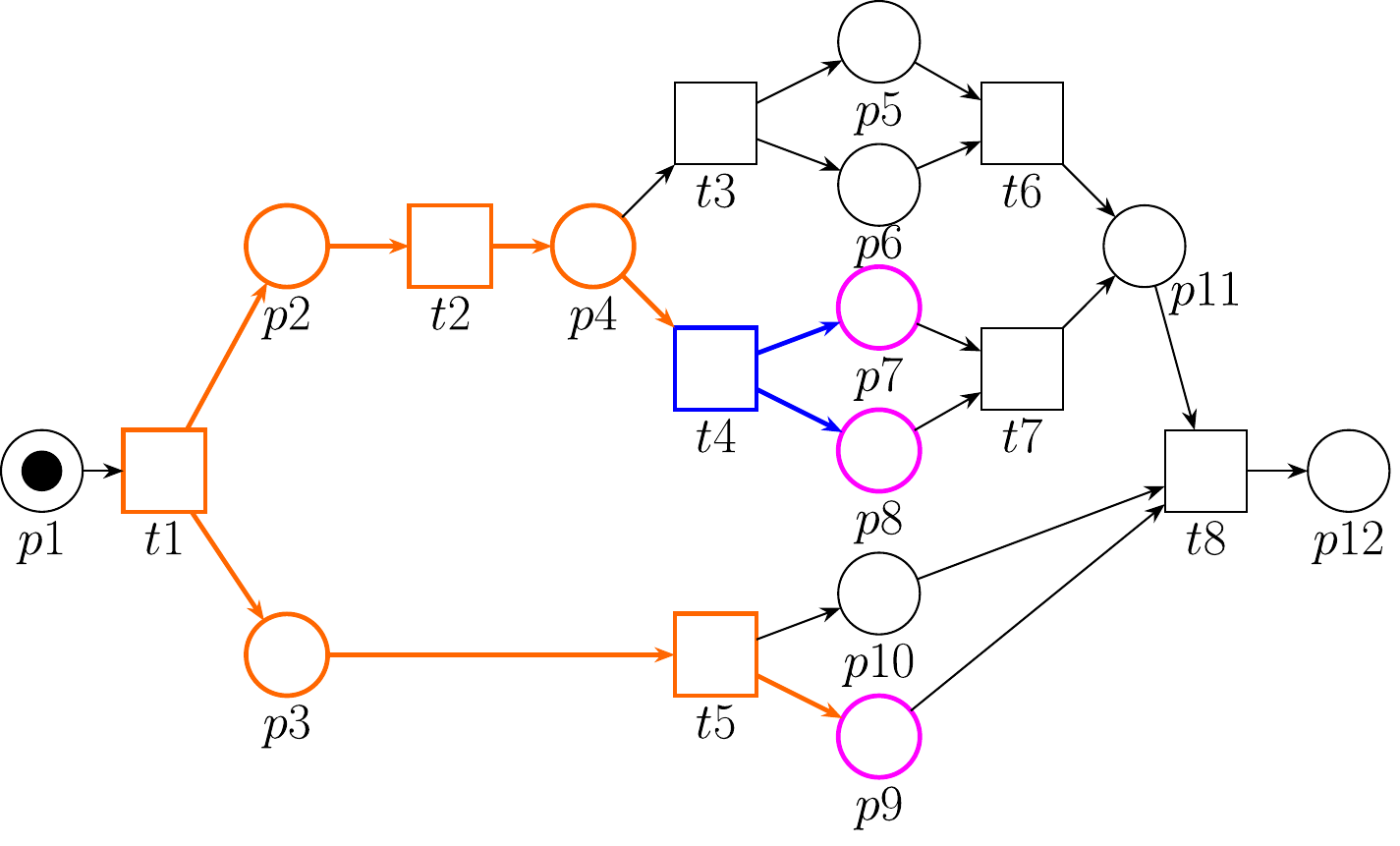}%
	\caption{Diverging points $t_1$ (in orange) and $t_4$ (in blue) for the set of nodes $\Set{p7,p8,p9}$. $t_4$ has blue $(p7)$ and $(p8)$ as disjoint paths to $p7$ and $p8$, respectively; and $t_1$ has orange $(p2,t2,p4,t4,p7)$ and $(p3,t5,p9)$ as disjoint paths to $p7$ and $p9$, respectively.}%
	\label{fig:DivergingPointsp7p8p9}%
\end{figure}

Assume a transition $\delta$ is a diverging point of two concurrent places $x$ and $y$, $\idivpoint{\delta}{\Set{x,y}}$. In this case, tokens can ``travel'' from this transition via disjoint paths to $x$ and $y$, i.\,e., from all markings in which $\delta$ is enabled, there is at least one reachable marking, in which $x$ and $y$ have tokens:

\begin{lemmait}
\label{lemma:IDivPointMarking}
Let $N=(P,T,F,i,o)$ be a sound \AFW{} and $x,y \in P$. It holds: %
\begin{gather*}
	\concurrent{x}{y} \quad \land \quad \delta \in \big( \idivpoints{\Set{x,y}} \cap T \big)\\
	\implies \\
	\forall M_\delta \in \reachableM{N}{\multiset{i}}, \; \postset{\delta} \subseteq M_\delta \; \exists M_{xy} \in \reachableM{N}{M_\delta}\colon \multiset{x,y} \subseteq M_{xy}  \tag*{\defend}
\end{gather*}
\end{lemmait}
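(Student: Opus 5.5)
We prove the statement by steering the two tokens produced by $\delta$ along the disjoint paths given by \cref{def:DivergingPoint}. Since $\delta \in \idivpoints{\Set{x,y}} \cap T$, there are outputs $o_1, o_2 \in \postset{\delta}$ with $o_1 \not= o_2$ and paths $\rho_1 \in \Paths{o_1}{x}$, $\rho_2 \in \Paths{o_2}{y}$ such that $\pathset{\rho_1} \cap \pathset{\rho_2} = \emptyset$. (The case $d_1 = y$, $d_2 = x$ is symmetric.) Fix a reachable marking $M_\delta$ with $\postset{\delta} \subseteq M_\delta$. Then $o_1$ and $o_2$ both carry a token in $M_\delta$. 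The goal is to construct an occurrence sequence from $M_\delta$ that moves the token on $o_1$ along $\rho_1$ to $x$ and the token on $o_2$ along $\rho_2$ to $y$, without ever consuming either token elsewhere. The last marking of this sequence is the required $M_{xy}$.

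\textbf{Step 1: advancing one token along its path.} Consider the token on the current place $p$ of $\rho_1$, and let $t$ be the next transition on $\rho_1$. Because $N$ is simple free-choice, either $\preset{t} = \Set{p}$, or $t$ is the unique output of $p$. If $p$ is a decision place, i.e., $|\postset{p}| \geq 2$, then $\preset{t} = \Set{p}$, so $t$ is enabled and we may choose to fire it. This is the same argument as in the proof of \cref{lemma:ConcurrentJointTransition}. Firing $t$ puts a token on the successor place of $p$ on $\rho_1$.

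\textbf{Step 2: joins are eventually enabled, and nothing steals the path token.} If $t$ has further input places, we must show that $t$ becomes enabled without the token on $p$ leaving $\rho_1$. Since $p$ is then the only preplace with $\postset{p} = \Set{t}$, no other transition can consume the token from $p$. By soundness (option to complete, \cref{def:Soundness}) and acyclicity, every continuation from the current marking reaches $\multiset{o}$. Such a continuation must eventually fire $t$, because $p$ is marked and its only output is $t$. We therefore fire that continuation's transitions up to and including $t$. To keep the other tracked token in place, whenever this prefix would fire a decision transition on the other path, we apply the free-choice argument of Step~1. Concretely, we follow the two tokens in an interleaved fashion and always choose the branch that lies on the respective path.

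\textbf{Step 3: disjointness prevents interference.} We must rule out that moving the token on $\rho_1$ consumes the token on $\rho_2$, or that the two tokens merge. Disjointness of $\rho_1$ and $\rho_2$ means that no transition lies on both paths. It remains to exclude that a non-path transition fires and consumes the token on $\rho_2$ while we complete a join on $\rho_1$. By safeness (\cref{lemma:Safeness}) and the Path-to-End \cref{theorem:PathToEnd}, the token on $\rho_2$ can only be removed by a transition $t'$ in the postset of its current place $p'$. If $p'$ is a decision place, we never fire branches other than the one on $\rho_2$. If $p'$ is not a decision place, then $\postset{p'} = \Set{t'}$ and $t'$ is the next transition on $\rho_2$, so firing it simply advances that token, which is harmless. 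Induction on the total length $|\rho_1| + |\rho_2|$ then yields a marking containing $\multiset{x,y}$.

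\textbf{Main obstacle.} The hardest part is the join case in Step~2, namely showing that the completion run from soundness can be reordered. It must enable $t$ while respecting our branch choices on both paths. My plan is to argue by contradiction. Suppose that every continuation honouring the branch choices never fires $t$. Then some input of $t$ is never marked, so $t$ is dead in a reachable marking that still has a token on $p$. This contradicts option to complete, since that token could never leave $p$, and $p \not= o$. Branch choices at decision places never disable anything else, because such places are exclusively owned by their output transitions. The hypothesis $\concurrent{x}{y}$ is used only as a sanity guarantee, through \cref{lemma:PathAbsence}: there is no path from $x$ to $y$ or back, so the token parked on $x$ does not need to move while the token on $\rho_2$ advances.
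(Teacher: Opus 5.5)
Your proposal is correct and follows essentially the same route as the paper: take the disjoint paths from two outputs of $\delta$ to $x$ and $y$, steer both tokens along them by firing the on-path branch at every decision place (possible by simple free-choiceness), rely on soundness to get the join transitions on the paths fired, and use \cref{lemma:PathAbsence} (from $\concurrent{x}{y}$) so that neither path needs the other token. Your option-to-complete argument for joins is more explicit than the paper's appeal to the absence of dead transitions, but $\concurrent{x}{y}$ is more than a sanity check: it is exactly what lets a token stay parked on $x$ (or $y$) while the other one advances, because the transitions reachable from the parked place can be deleted from a completing run without disabling the rest, and by path absence none of them lies on the other path --- a step that both your version and the paper's leave implicit.
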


\begin{proof}
\proofsize
Constructive proof. %
By $\delta \in \big( \idivpoints{\Set{x,y}} \cap T \big)$ and \cref{def:DivergingPoint}, it follows that there are two disjoint paths $\rho_x \in \Paths{o_x}{x}$ and $\rho_y \in \Paths{o_y}{y}$ from two output places $o_x,o_y \in \postset{\delta}$ to $x$ and $y$, $\pathset{\rho_x} \cap \pathset{\rho_y} = \emptyset$. 
Since $N$ is sound, there is at least one reachable marking $M_\delta \in \reachableM{N}{\multiset{i}}$ with $\postset{\delta} \subseteq M_\delta$ (any marking after firing $\delta$). %
Let $M_\delta$ be such a marking with $o_x,o_y \in M_\delta$. %

$N$ is (simple) free-choice, i.\,e., for each transition $t$ on $\rho_x$ and $\rho_y$, $t \in \big( (\pathset{\rho_x} \cup \pathset{\rho_x}) \cap T \big)$, with $|\preset{t}| \geq 2$, it holds $\forall p \in \preset{t}\colon |\postset{p}| = 1$. %
Since $\pathset{\rho_x} \cap \pathset{\rho_y} = \emptyset$ and since $\forall p \in P\colon |\postset{p}| \geq 2 \implies \preset{(\postset{p})} = \Set{ p }$ (simple free-choiceness), we can force tokens on $\rho_x$ and $\rho_y$ (starting with $o_x$ and $o_y$) to always put a token on a place on $\rho_x$ and $\rho_y$, respectively (i.\,e., transitions on $\rho_x$ and $\rho_y$ fire regardless if they are outputs of diverging places, which could also ``take another path''). %
Since $N$ is sound, there is no dead transition (on $\rho_x$ and $\rho_y$). %
Furthermore, since $\concurrent{x}{y}$ and $N$ is sound there are no paths between $x$ and $y$ and there are no paths between $y$ and $x$ by \cref{lemma:PathAbsence}, i.\,e., no transition on $\rho_x$ and $\rho_y$ may ``require'' the token of $y$ and $x$ to fire, respectively.
\cref{fig:proof-reachable-pair} illustrates the situation. %
For this reason, at least until reaching $x$ and $y$, it is possible that the number of tokens on $\rho_x$ and $\rho_y$ will not decrease. %
Eventually, tokens on $\rho_x$ and $\rho_y$ ``arrive'' at $x$ and $y$ in a marking $M_{xy}$ that is reachable from $M_\delta$, i.\,e., $\forall M_\delta \in \reachableM{N}{\multiset{i}}, \; \postset{\delta} \subseteq M_\delta \; \exists M_{xy} \in \reachableM{N}{M_\delta}\colon \multiset{x,y} \subseteq M_{xy}$. $\checkmark$ \proofend
\end{proof}

\begin{figure}[tb]
	\centering
		\includegraphics[width=0.40\textwidth]{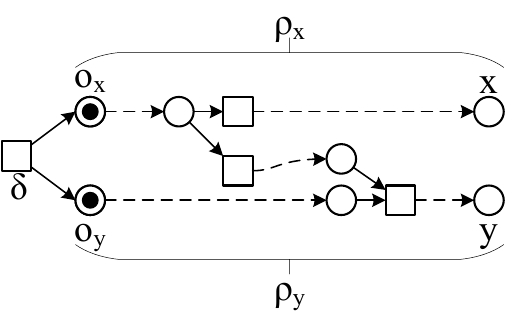}
	\caption{A diverging transition $\delta$ for two places $x$ and $y$ has two output places $o_x$ and $o_y$ with two disjoint paths $\rho_x$ from $o_x$ to $x$ and $\rho_y$ from $o_y$ to $y$, respectively. Through simple free-choiceness, each decision on $\rho_x$ and $\rho_y$ may lead to a remaining token on $\rho_x$ and $\rho_y$, respectively. In addition, soundness ensures that transitions on $\rho_x$ and $\rho_y$, respectively, with multiple input places will be fired. Furthermore, since $\concurrent{x}{y}$ there are no paths between $x$ and $y$ and vice versa by \cref{lemma:PathAbsence}, i.\,e., no transition on $\rho_x$ and $\rho_y$ may require the token of $y$ and $x$, respectively.}
	\label{fig:proof-reachable-pair}
\end{figure}

Assume there is a maximum admissible marking $M_m$ and it is to be checked if it is reachable from the initial marking. If all diverging points $\divpoints{M_m}$ are transitions ($\divpoints{M_m} \subseteq T$), then we can imply that there is a run net, which contains these $\divpoints{M_m}$ transitions with all their output places and the paths between them. Thus, by \cref{cor:ReachabilityWithRunNets}, it directly follows that $M_m$ is reachable. This is the simplest case.  Unfortunately, places can also be diverging points as the above examples have already shown. This complicates the situation since it is not sure which output transition of a diverging place we should add to a run net to check the reachability of $M_m$. Trying each combination of output transitions may again lead to an exponential growth of combinations. Fortunately, soundness and admissibility of $M_m$ simplifies matters as with a sound \AFW{} and an admissible marking, there is always at least one output transition to take. This results from the property that the output transitions of such diverging places must always share the nodes to which they have paths: %
\begin{lemmait}
\label{lemma:SupersetInformation}
Let $N=(P,T,F,i,o)$ be a sound \AFW{} with a marking $M \in \Bag{P}$, a place as diverging point $\delta \in \big(\divpoints{M} \cap P\big)$, and $\Set{ o_x, o_y } \subseteq \postset{\delta}$. It holds: %
\begin{gather}
	M \text{ is admissible} \quad \implies \quad \big( \divinfo{o_x}{\markingset{M}} \subseteq \divinfo{o_y}{\markingset{M}} \big) \lor \big( \divinfo{o_y}{\markingset{M}} \subseteq \divinfo{o_x}{\markingset{M}} \big). \tag*{\defend}
\end{gather}
\end{lemmait}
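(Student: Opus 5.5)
The plan is a proof by contradiction. Assume $M$ is admissible, $\delta \in \divpoints{M} \cap P$, $o_x, o_y \in \postset{\delta}$, and neither reach set contains the other. Then there are places $a \in \divinfo{o_x}{\markingset{M}} \setminus \divinfo{o_y}{\markingset{M}}$ and $b \in \divinfo{o_y}{\markingset{M}} \setminus \divinfo{o_x}{\markingset{M}}$. Since $a \in \Paths{o_x}{a}$-reachable and $a \notin \divinfo{o_y}{\markingset{M}}$, we have $a \neq b$, and both lie in $\markingset{M}$. Admissibility (\cref{def:SufficientMarkings}) gives $\concurrent{a}{b}$. \cref{lemma:PathAbsence} then gives $\Paths{a}{b} = \Paths{b}{a} = \emptyset$. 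Moreover, $\delta$ has paths to both $a$ and $b$, so $\delta$ is concurrent to neither.

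First I use simple free-choiceness. Since $|\postset{\delta}| \geq 2$, we get $\preset{o_x} = \preset{o_y} = \Set{\delta}$. Soundness (no dead transitions) yields a reachable marking $M_\delta$ with $\delta \in M_\delta$, and in any such marking either output transition may fire. I then argue along the lines of the proof of \cref{lemma:IDivPointMarking}. Fire $o_y$ and keep the token on a fixed path $\rho_b$ from $o_y$ to $b$. This is possible because every place with several outputs is the sole input of those outputs. Soundness then forces the token to arrive at $b$ in some $M_b \in \reachableM{N}{M_\delta}$, unless a join transition on $\rho_b$ waits forever, which would contradict soundness. The symmetric argument, firing $o_x$, gives a marking with a token on $a$.

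Next I exploit $\concurrent{a}{b}$ via \cref{lemma:ConcurrentJointTransition}. Every path from $a$ to $o$ and every path from $b$ to $o$ first meet in a join transition $t$. Fix the path $\pi_a = (\delta, o_x, \ldots, a, \ldots, t, \ldots, o)$ and the path $\pi_b = (\delta, o_y, \ldots, b, \ldots, t, \ldots, o)$. After $o_y$ fires, $\delta$ is empty, and the only way to obtain a token on the $a$-branch leading to $t$ is through $o_x$: by assumption $a \notin \divinfo{o_y}{\markingset{M}}$, and no token can travel from $b$ to $a$. So I want to show that, in the continuation where $o_y$ fires and the token is steered to $b$ and then to $t$, the input of $t$ on the $a$-side never becomes marked. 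Then $t$ would be blocked, and since the net is acyclic the final marking $\multiset{o}$ would be unreachable, or a token would be stranded. Either way this contradicts \cref{def:Soundness}. Alternatively, if some other token does arrive on the $a$-side, I would locate two tokens on a single path to $o$ that passes through $\delta$, contradicting the Path-to-End \cref{theorem:PathToEnd}.

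The main obstacle is this last step. A token for $t$'s $a$-side input could come from some other branch that reaches $a$, or reaches $t$, without passing through $\delta$. To handle it, I would choose $t$ as the first join of $\pi_a$ and $\pi_b$ after $a$ and $b$. I would then use the marking $M_{ab}$ witnessing $\concurrent{a}{b}$, together with a run through it, to show that any token reaching $a$ must be "ancestrally" tied to a token on $\delta$. The tool here is the Path-to-End \cref{theorem:PathToEnd} applied to the path $(\delta, o_x, \ldots, a, \ldots, o)$, which can carry at most one token. If that fails, I would fall back on \cref{cor:ReachabilityWithRunNets}. A run net contains at most one output transition of $\delta$, so any run net containing both $a$ and $b$ must reach one of them through a route avoiding $\delta$. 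I would then derive a cycle or a double token on a path to $o$, which is again a contradiction.
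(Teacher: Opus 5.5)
\textbf{Verdict: the argument has a genuine gap, and it cannot be closed because the statement itself is false.} Your route is essentially the paper's. You take $a\in\divinfo{o_x}{\markingset{M}}\setminus\divinfo{o_y}{\markingset{M}}$ and $b\in\divinfo{o_y}{\markingset{M}}\setminus\divinfo{o_x}{\markingset{M}}$, and get $\concurrent{a}{b}$ from admissibility. You then take the first join transition $t$ from \cref{lemma:ConcurrentJointTransition}. Finally you argue that once $\delta$ has committed to one output, the input of $t$ on the other side can never be supplied.

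That last step is where the argument breaks. From $a\notin\divinfo{o_y}{\markingset{M}}$ and $\Paths{b}{a}=\emptyset$ you conclude that the $a$-side input $c_a\in\preset{t}$ can only be marked via $o_x$. But the fact that $o_y$ cannot reach $a$ says nothing about whether $o_y$ can reach $c_a$ along a path that bypasses $a$. The paper's own proof makes the same leap: it asserts that once $\delta$ decides for $o_x$, no token can reach the input $c_y$ of the first converging transition.

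Your fallbacks do not repair this. The run witnessing $\concurrent{a}{b}$ need not mark $\delta$ at all, so nothing ties $a$'s token to $\delta$. A run net containing $a$ and $b$ that avoids $\delta$ produces neither a cycle nor a doubly marked path to $o$.

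\textbf{Counterexample.} Take places $i,\delta,x,y,p_a,p_b,o$ and transitions $u_1,u_2,o_x,o_y,t_x,t_y,j$ with:
\begin{itemize}
\item $\postset{i}=\Set{u_1,u_2}$, $\postset{u_1}=\Set{\delta}$, $\postset{u_2}=\Set{x,y}$;
\item $\postset{\delta}=\Set{o_x,o_y}$, $\postset{o_x}=\Set{x,p_b}$, $\postset{o_y}=\Set{p_a,y}$;
\item $\postset{x}=\Set{t_x}$, $\postset{t_x}=\Set{p_a}$, $\postset{y}=\Set{t_y}$, $\postset{t_y}=\Set{p_b}$;
\item $\preset{j}=\Set{p_a,p_b}$, $\postset{j}=\Set{o}$.
\end{itemize}
This is an acyclic simple free-choice workflow net: the only places with two outputs are $i$ and $\delta$, and all their output transitions have singleton presets. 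Its reachable markings are exactly $\multiset{i}$, $\multiset{\delta}$, $\multiset{x,y}$, $\multiset{x,p_b}$, $\multiset{p_a,y}$, $\multiset{p_a,p_b}$ and $\multiset{o}$. From each of them $\multiset{o}$ is reachable and no transition is dead, so the net is sound.

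The marking $M=\multiset{x,y}$ is reachable, hence admissible. Also $\delta\in\divpoints{M}\cap P$, witnessed by the disjoint paths $(o_x,x)$ and $(o_y,y)$. Yet $\divinfo{o_x}{\markingset{M}}=\Set{x}$ and $\divinfo{o_y}{\markingset{M}}=\Set{y}$ are incomparable, and here $\postset{\delta}$ has only two elements.

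Mapped onto your argument with $a=x$ and $b=y$: the first join is $t=j$, and $o_y$ marks its $x$-side input $p_a$ directly. That is exactly the possibility your step excludes. The concurrency of $x$ and $y$ comes from $u_2$, in a run that never visits $\delta$.

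\textbf{What survives.} Your Path-to-End idea only yields a conditional statement. Suppose an occurrence sequence from $\multiset{i}$ to a marking containing $a$ and $b$ marks $\delta$ along the way, and then fires $o_z\in\postset{\delta}$. Then $o_z$ must reach both $a$ and $b$. Otherwise a prefix of that run gives a reachable marking containing $\delta$ together with $a$ (or $b$), which violates \cref{theorem:PathToEnd} on a path through $o_x$ (or $o_y$). Any correct version of the lemma needs a hypothesis of this kind.
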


\begin{proof}
\proofsize
Proof by contradiction. %
\begin{align}
	& M \text{ is admissible} \; \land \; \divinfo{o_x}{\markingset{M}} \not \subseteq \divinfo{o_y}{\markingset{M}} \; \land \; \divinfo{o_x}{\markingset{M}} \not \subseteq \divinfo{o_y}{\markingset{M}} \label{eq:DivCD} \\
	\implies \quad \quad 
	& \exists x \in \divinfo{o_x}{\markingset{M}} \; \exists y \in \divinfo{o_y}{\markingset{M}}\colon \quad x \notin \divinfo{o_y}{\markingset{M}} \land y \notin \divinfo{o_x}{\markingset{M}} \label{eq:DivS2}
\end{align} %
Let $x,y \in M$ be such marked places in $M$. There are exactly two cases: %
\begin{description}
	\item[Case 1:] $\Paths{o_x}{y} \not = \emptyset \lor \Paths{o_y}{x} \not = \emptyset$. %
	Then, however, $y \in \divinfo{o_x}{\markingset{M}}$ or $x \in \divinfo{o_y}{\markingset{M}}$ by \cref{def:DivergingPoint} of diverging points. %
	This contradicts \eqref{eq:DivS2} and this case does not hold. $\lightning$
	\item[Case 2:] $\Paths{o_x}{y} = \emptyset \land \Paths{o_y}{x} = \emptyset$. %
		Thus: %
		\begin{equation}
			\forall \rho_x \in \Paths{o_x}{x} \; \forall \rho_y \in \Paths{o_y}{y}\colon \pathset{\rho_x} \cap \pathset{\rho_y} = \emptyset. \label{eq:DivS3}
		\end{equation} %
		For this reason, the concurrency of $x$ and $y$ cannot ``start'' from $[\delta]$. %
		\cref{fig:proof-superset} illustrates the situation in this case. 
		
		Let $\rho_x \in \Paths{o_x}{x}$ and $\rho_y \in \Paths{o_y}{y}$. %
		By the admissibility of $M$ in \cref{eq:DivCD} and its \cref{def:SufficientMarkings}, it follows $\concurrent{x}{y}$. %
		Therefore, all paths from $x$ and from $y$ to the sink $o$ join at first at converging transitions by \cref{lemma:ConcurrentJointTransition}. %
		Let $c \in T$ be such a converging transition and $\rho_{cx} \in \Paths{x}{c_x}$ with $c_x \in \preset{c}$ is a path without any other converging transition of $x$ and $y$ (i.\,e., $c$ is the ``first'' of such transitions). %
		There is also a disjoint path $\rho_{cy} \in \Paths{y}{c_y}$ with $c_y \in \preset{c}$ by \cref{lemma:ConcurrentJointTransition}, $\pathset{\rho_{cy}} \cap \pathset{\rho_{cx}} = \emptyset$. %
		As (a) there is no path from $o_x$ to $y$, (b) there is no path from $o_y$ to $x$, (c) $c_x$ is the first converging transition on $\rho_{cx}$, and (d) $N$ is sound and simple free-choice ($\preset{o_x} = \preset{o_y} = \Set{ \delta }$), whether $o_x$ or $o_y$ is fired is independent (``free'') from other tokens in any reachable marking $M$ of $N$.
		For this reason and $N$ is sound, let $M_\delta \in \reachableM{N}{\multiset{i}}$ and $M_{c_y,\delta} \in \reachableM{N}{M_\delta}$ be two reachable markings with $\delta \in M_\delta$ and $\multiset{c_y, \delta} \subseteq M_{c_y,\delta}$ so that $c$ can be fired if $\delta$ decides for $o_x$. %
		However, in any $M_{c_y,\delta}$, there is an ``unsafe'' path from $\delta$ via $c_y$ to the sink $o$ by \cref{theorem:PathToEnd}. %
		This contradicts \eqref{eq:DivCD} that $N$ is sound. $\lightning$
		This case does not hold. %
\end{description}	%
	
	\noindent Both cases do not hold. %
	Thus, assumption~\eqref{eq:DivCD} does not hold. $\lightning$ %
	Thus, the lemma must hold. \proofend
\end{proof}

\begin{figure}[tb]
	\centering
		\includegraphics[width=0.50\textwidth]{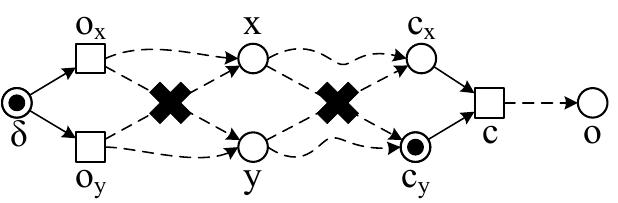}
	\caption{A diverging place $\delta$ has two output transitions $o_x$ and $o_y$. $o_x$ has a path to $x$ but not to $y$ and $o_y$ has a path to $y$ but not to $x$. By \concurrent{x}{y} and \cref{lemma:ConcurrentJointTransition}, there is a ``first'' joining transition $c$. Thus, there is no path from $y$ to $c_x$ and there is no path from $x$ to $c_y$. If $\delta$ decides for $o_x$, no token ``after'' $\delta$ can reach to $c_y$ enabling $c$. Thus, since soundness is required, there must be a marking $M_{c_y,\delta}$ with $\multiset{c_y, \delta} \subseteq M_{c_y,\delta}$ to avoid a dead $c$. However, the path from $\delta$ via $o_y$, $c_y$, and $c$ to $o$ is not safe.}
	\label{fig:proof-superset}
\end{figure}

The above \cref{lemma:SupersetInformation} guarantees that the output transitions $\postset{\delta}$ of a diverging place $\delta$ of an admissible marking $M_m$ either have paths to the same subset of places of $M_m$ or at least one output transition has paths to more places of $M_m$. To simplify what matters: there is always one output transition of a diverging place that has paths to the union of all subsets of $M_m$ of all other output transitions. Since we are interested in the (sub-marking) reachability of all places of $M_m$, it is always the best option to take that output transition with the most reachable places of $M_m$. In summary, if we derive $\divpoints{M_m}$ for an admissible marking $M_m$, we could take those output transitions of diverging places containing the greatest subset of places of $M_m$.

The remaining open question is if there is at least one transition in $\divpoints{M_m}$, which has paths to all places in $M_m$. By \cref{def:DivergingPoint}, such a transition $\delta \in \divpoints{M_m}$ exists if $\divinfoh{\delta}{\markingset{M_m}} = M_m$ (remember that $\divinfoh{\delta}{\markingset{M_m}} \coloneqq \bigcup_{o_\delta \in \postset{\delta}} \divinfo{o_\delta}{\markingset{M_m}}$). If such $\delta$ exists, then we can imply a partial run net starting by $\delta$ and adding all transitions and necessary paths to these transitions as well as to the places of $M_m$ and the necessary paths to those places. Although this partial run net may not be a ``full'' run net, it implies the existence of a ``full'' run net.

\begin{figure}[t]
	\centering
		\includegraphics[width=0.80\textwidth]{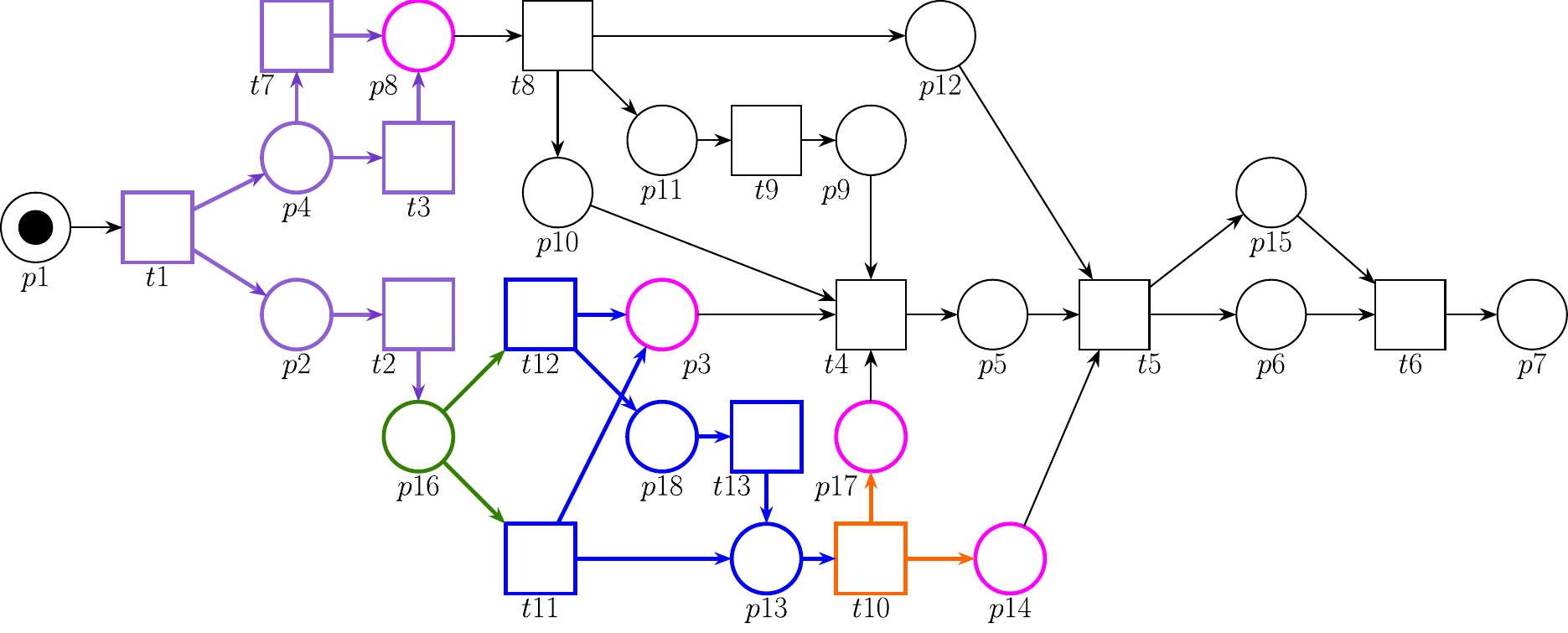}
		\vspace{-0.35cm}
	\caption{The net of \cref{fig:example} investigating the maximum admissible marking $\multiset{p3,p8,p14,p17}$ (colored in pink). The diverging points of this marking are $t10$ (colored in orange), $t11$ (blue), $t12$ (blue), $p16$ (green), and $t1$ (violet).}
	\label{fig:examplep3p8p14p17}
\end{figure}

Let us consider our example in \cref{fig:example} being colored in \cref{fig:examplep3p8p14p17}. Assume that we want to check whether the marking $M_m = \multiset{p3,p8,p14,p17}$ is reachable, or not (colored in pink in the figure). All places are pairwise concurrent and it is not possible to add any further place ($M_m$ is maximum admissible). For this reason, we have to check whether there is a diverging transition which can cause this marking. The diverging points for $M_m = \multiset{p3,p8,p14,p17}$ cover the set $\Set{t10,t11,t12,p16,t1}$ by \cref{def:DivergingPoint}. $t10$'s output places $p14$ and $p17$ (colored orange/pink in the figure) are part of $M_m$, thus, $\divinfo{p14}{\markingset{M_m}} = \Set{ p14 }$ and $\divinfo{p17}{\markingset{M_m}} = \Set{ p17 }$. For $t12$'s output places $p3$ and $p18$ (colored blue/pink in the figure), it holds $\divinfo{p3}{\markingset{M_m}} = \Set{ p3 }$ and $\divinfo{p18}{\markingset{M_m}} = \Set{ p14, p17 }$. For $t11$ (also colored blue/pink), the information are similar $\divinfo{p3}{\markingset{M_m}} = \Set{ p3 }$ and $\divinfo{p13}{\markingset{M_m}} = \Set{p14,p17}$. The place $p16$ (colored green) is a decision with two output transitions $t11$ and $t12$. By \cref{lemma:SupersetInformation}, both transitions either have the same information ($\divinfo{t11}{\markingset{M_m}} = \divinfo{t12}{\markingset{M_m}} = \Set{ p3,p14,p17 }$) or one is a superset of the other. Eventually, transition $t1$ (colored violet in the figure) has the output places $p4$ with $\divinfo{p4}{\markingset{M_m}} = \Set{p8}$ and $p2$ with $\divinfo{p2}{\markingset{M_m}} = \Set{ p3,p14,p17 }$. Now, we can induce a partial run net from this information. \Cref{fig:example-partial} illustrates the induction steps. First, we can add the diverging point $t1$, which is a diverging point of all places of $M_m$. Furthermore, we add its output places $p2$ and $p4$ as well as a path from $p1$ (the source) to $t1$. We further add $p8$ as a place of $M_m$ and a path to it as we know that $p4$ has a disjoint path to $p8$. The diverging place $p16$ with a path to it is added next (all paths from $t1$ to $t11$ or $t12$ pass $p16$). Furthermore, we add that output transition with the most information (i.\,e., that output transition $t$ for which $\divinfo{t}{\markingset{M}}$ is the greatest). In this case, it does not matter, which of $t11$ or $t12$ we add as they have the same information. In the next step, we add the diverging transition $t12$ (which we already have inserted) with its output places $p3$ and $p18$. Eventually, diverging transition $t10$ is added with a path to it as well as its output places. This final partial run net contains all places of $M_m = \multiset{p3,p8,p14,p17}$ and implies the existence of a ``full'' run net containing all places of $M_m$. Such steps are applied to finally deciding the (sub-marking) reachability in \AFWs{}:

\begin{figure}
	\centering
	\begin{minipage}[t]{0.48\textwidth}
		\begin{center}
			\includegraphics[width=0.50\textwidth]{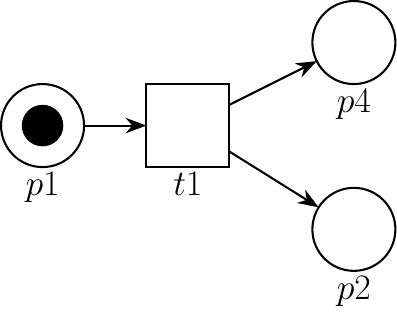}
		\end{center}
		Adding diverging point $t1$ with its output places $p2$ and $p4$.
	\end{minipage}%
	\hfill%
	\begin{minipage}[t]{0.48\textwidth}
		\begin{center}
			\includegraphics[width=0.50\textwidth]{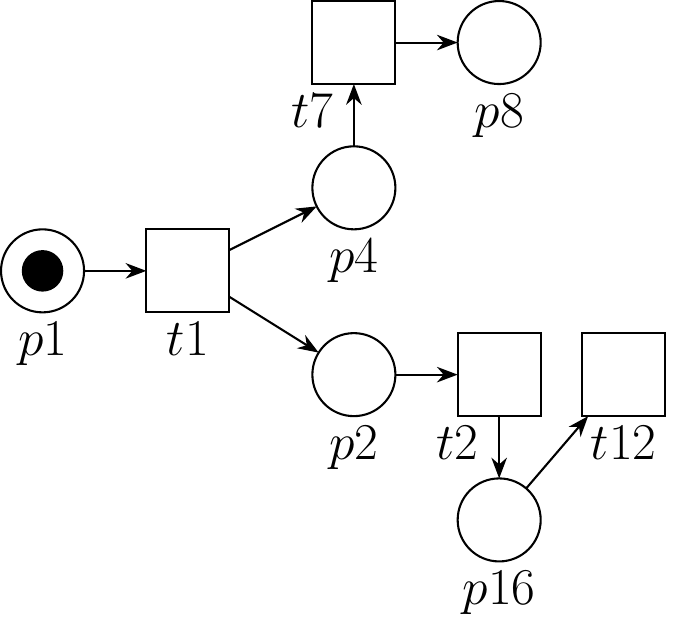}
		\end{center}
		Adding a path from $p4$ to $p8$. Furthermore, adding diverging point $p16$ with its output transition $t12$. It does not matter, which of $t11$ or $t12$ is added as they have the same information ($\divinfo{t11}{\markingset{M_m}} = \divinfo{t12}{\markingset{M_m}}$).
	\end{minipage}\\\vspace{0.5cm}
	\begin{minipage}[t]{0.48\textwidth}
		\begin{center}
			\includegraphics[width=0.75\textwidth]{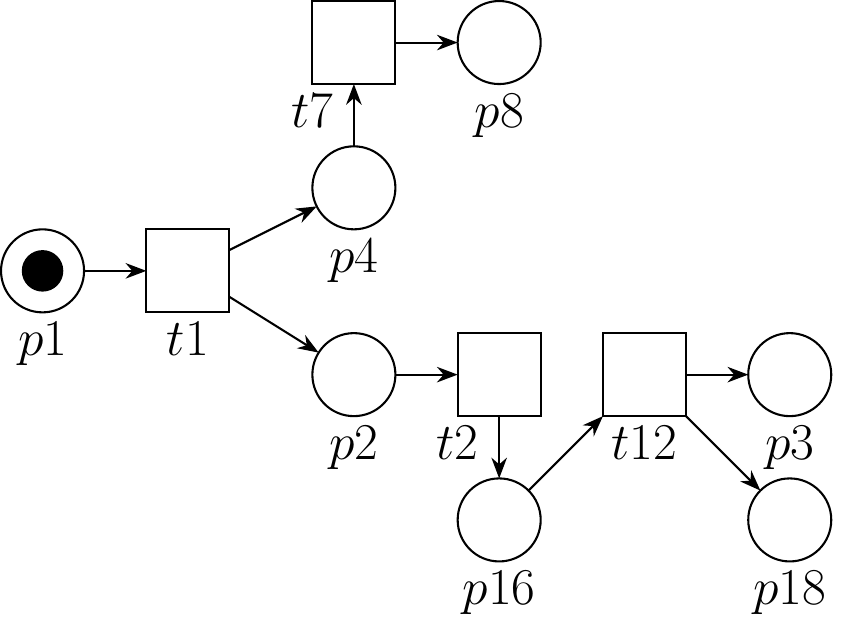}
		\end{center}
		Adding diverging point $t12$ with its output places $p3$ and $p18$.
	\end{minipage}%
	\hfill%
	\begin{minipage}[t]{0.48\textwidth}
		\begin{center}
			\includegraphics[width=1.00\textwidth]{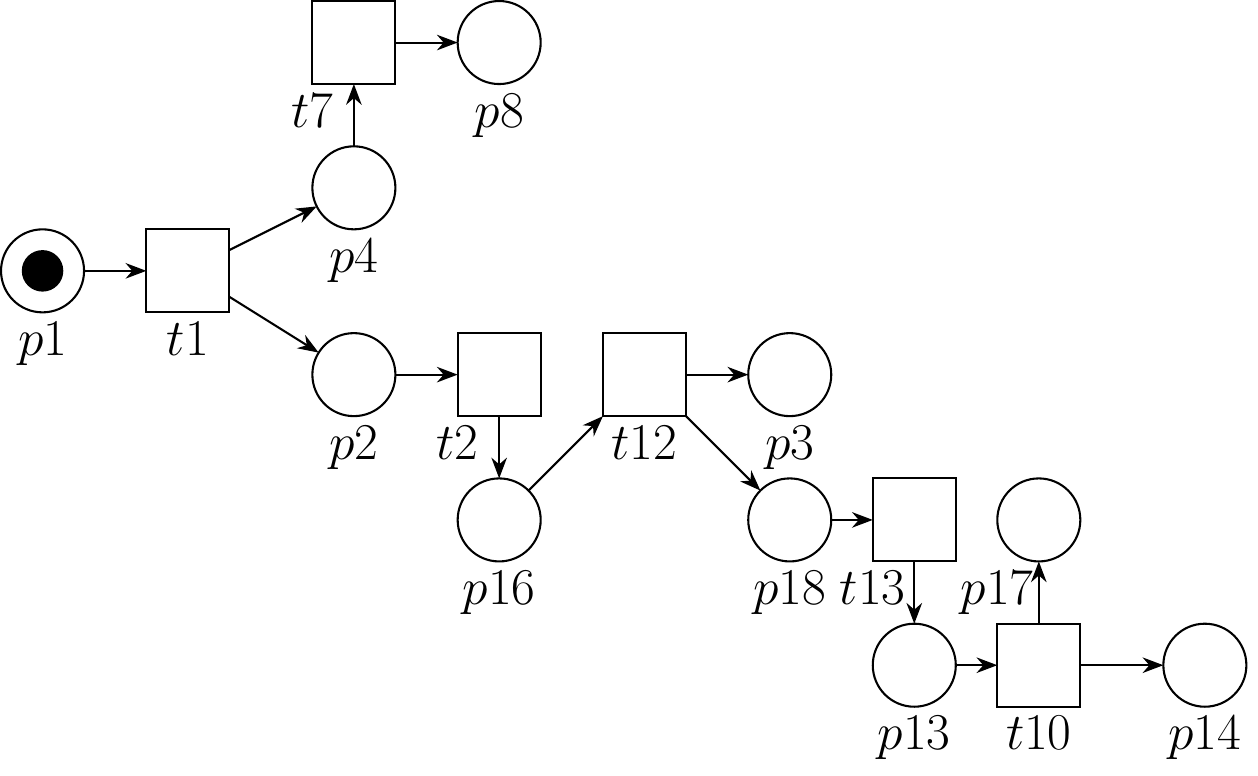}
		\end{center}
		Adding a path to the diverging point $t10$ with its output places $p14$ and $p17$. This net contains all places $\multiset{p3,p8,p14,p17}$ and \ldots
	\end{minipage}\\\vspace{0.5cm}
	\begin{minipage}[t]{1.00\textwidth}
		\begin{center}
			\includegraphics[width=0.80\textwidth]{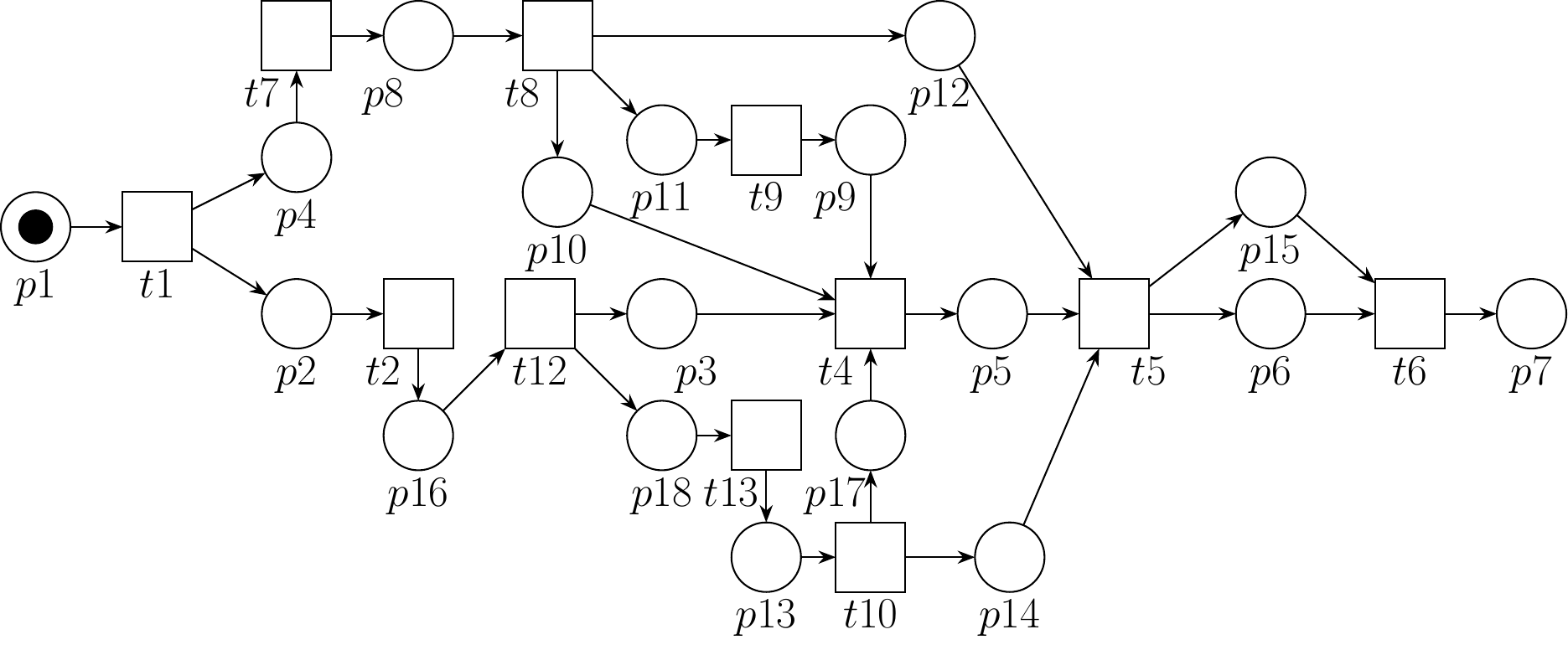}\\
			\ldots implies the existence of a run net.
		\end{center}		
	\end{minipage}%
	\caption{Step-wisely inducing a partial run net for the exemplary \AFW{} in \cref{fig:example} out of the diverging point information for the marking $M_m = \multiset{p3,p8,p14,p17}$. The final partial run net implies the existence of a run net presented at the end.}
	\label{fig:example-partial}
\end{figure}

\begin{theoremit}[(Sub-Marking) Reachability]
\label{theorem:Reachability}
Let $N=(P,T,F,i,o)$ be a sound \AFW{} with a marking $M \in \Bag{P}$. It holds: %
\begin{align}
	M \in \reachableM{N}{\multiset{i}} \quad \iff \quad & M \text{ is maximum admissible } \quad \land \label{eq:R1} \\ & \exists \delta \in (\divpoints{M} \cap T)\colon \divinfoh{\delta}{\markingset{M}} = M 
	\nonumber \\ %
	\exists M_s \in \reachableM{N}{\multiset{i}}\colon M \subseteq M_s \quad \iff \quad & M \text{ is admissible } \quad \land  \label{eq:R2} \\ & \exists \delta \in (\divpoints{M} \cap T)\colon \divinfoh{\delta}{\markingset{M}} = M 
	\tag*{\defend}
\end{align}
\end{theoremit}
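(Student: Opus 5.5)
We reduce both equivalences to \cref{cor:ReachabilityWithRunNets}. By that corollary it suffices to show, for an admissible marking $M$ with $|\markingset{M}| \geq 2$, that
\[
\exists (P_\sigma,T_\sigma,F_\sigma) \in \RunNets{N}{\multiset{i}}\colon \markingset{M} \subseteq P_\sigma \quad \iff \quad \exists \delta \in (\divpoints{M} \cap T)\colon \divinfoh{\delta}{\markingset{M}} = M .
\]
The admissibility and maximum admissibility conjuncts are then carried over unchanged from \cref{eq:c11} and \cref{eq:c12}. Once this equivalence is established, \cref{eq:R1} and \cref{eq:R2} follow in the same way. The degenerate case of a single marked place is handled separately: it is reachable as a sub-marking because soundness excludes dead places. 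For \cref{eq:R1} we must also check whether the right-hand side should be read with the convention $\divpoints{M}=\emptyset$, and we settle this before starting.

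\textbf{Direction $(\Rightarrow)$.} Take a reachable marking $M_s \supseteq M$ and a run $\sigma$ through it. Consider the tokens on $\markingset{M}$ and trace them backwards along the transitions of $\sigma$ that produced them. This yields, for each $x \in \markingset{M}$, a chain of producers inside the run net $\pi$. In $\pi$ every place has exactly one input transition, and $\pi$ is acyclic. Hence these backward chains form a tree-like structure that is rooted at the unique initial transition in $\postset{i}$. Let $\delta$ be the last transition at which all chains still coincide, i.e., the latest common ancestor.

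Taking $\delta$ maximal in this sense, at least two output places of $\delta$ carry different chains. Therefore $\delta \in \divpoints{M} \cap T$. Each $x \in \markingset{M}$ is reached from some output of $\delta$, so $\divinfoh{\delta}{\markingset{M}} = M$. The point that needs care is that the chains towards different places are disjoint after $\delta$. This follows from safeness (\cref{lemma:Safeness}) together with the Path-to-End \cref{theorem:PathToEnd}: two tokens cannot share a place in the past of a marking in which both are still alive in the same run net.

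\textbf{Direction $(\Leftarrow)$.} This is the main obstacle. Given $\delta \in T$ with $\divinfoh{\delta}{\markingset{M}} = M$, we construct a partial run net as in \cref{fig:example-partial}, proceeding by induction over the outputs of $\delta$. For each output $o_j \in \postset{\delta}$, let $D_j = \divinfo{o_j}{\markingset{M}}$. We recursively build a subnet from $o_j$ that contains every place of $D_j$.

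At a place node with several outputs, we use \cref{lemma:SupersetInformation}. The lemma makes the $\divinfo{\cdot}{\markingset{M}}$ sets of the outgoing transitions a chain under inclusion, so choosing the maximal one loses no target. At a transition node, we split the targets among its outputs and recurse. Termination follows from acyclicity.

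The delicate points are that the subnets for different branches must not overlap, and that the pieces must be glued into a genuine run net. For the disjointness, we use pairwise concurrency of the targets (admissibility) together with \cref{lemma:PathAbsence} and \cref{lemma:ConcurrentJointTransition}. An overlap would either create a path between two marked places, which \cref{lemma:PathAbsence} forbids, or force a merge at a place, which \cref{lemma:ConcurrentJointTransition} excludes. For the gluing, we repeat the token-forcing argument of \cref{lemma:IDivPointMarking} on each branch. We first fire to enable $\delta$, which is possible by soundness. We then push tokens along the chosen branches; simple free-choiceness permits the choices we need, and soundness guarantees that the synchronizing transitions eventually fire. This gives a reachable marking containing $\markingset{M}$, and completing it to $\multiset{o}$ by soundness yields the required run net.
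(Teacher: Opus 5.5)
\paragraph{The gap: the $(\Leftarrow)$ direction.} This direction fails at the step you rely on most. \cref{lemma:SupersetInformation} is false, so the sets $\divinfo{t}{\markingset{M}}$ of the output transitions of a diverging place need not form a chain. In fact, the right-to-left implications of both \cref{eq:R1} and \cref{eq:R2} are false.

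Consider the following net, which follows the pattern of \cref{fig:MaxNotReach} preceded by a transition that opens a concurrent branch:
\begin{itemize}
\item a source $i$ and a transition $t_0$ with $\postset{t_0}=\Set{q,w}$;
\item a branch $w\to t_w\to v\to t_v\to v'$;
\item a choice place $q$ with output transitions $t_1,t_2,t_3$, where $\postset{t_1}=\Set{x,y,c_1}$, $\postset{t_2}=\Set{y,z,c_2}$ and $\postset{t_3}=\Set{x,z,c_3}$;
\item single-input transitions $t_x\colon x\to x'$, $t_y\colon y\to y'$, $t_z\colon z\to z'$;
\item fillers $s_1\colon c_1\to z'$, $s_2\colon c_2\to x'$, $s_3\colon c_3\to y'$;
\item a join $J$ with $\preset{J}=\Set{x',y',z',v'}$ and $\postset{J}=\Set{o}$.
\end{itemize}

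Only $q$ is a choice, so the net is simple free-choice, and it is acyclic. Every run consists of $t_0$, one of $t_1,t_2,t_3$ together with its two matching single-input transitions and its filler, then $t_w$, $t_v$ and $J$. Every transition occurs in some run, so the net is sound.

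Now take $M=\multiset{x,y,z,v}$.
\begin{itemize}
\item All pairs in $M$ are concurrent.
\item $\concurrentSet{x}\cap\concurrentSet{y}\cap\concurrentSet{z}=\Set{w,v,v'}$, which is disjoint from $\concurrentSet{v}$, so $M$ is maximum admissible.
\item $t_0\in\divpoints{M}\cap T$, witnessed by the disjoint paths $(q,t_1,x)$ and $(w,t_w,v)$.
\item $\divinfoh{t_0}{\markingset{M}}=\Set{x,y,z}\cup\Set{v}=\markingset{M}$.
\end{itemize}
Yet exactly one of $t_1,t_2,t_3$ fires in any run, and each produces only two of $x,y,z$. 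Hence $M$ is neither reachable nor coverable.

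Your recursion breaks exactly at $q\in\divpoints{M}\cap P$. The sets $\divinfo{t_1}{\markingset{M}}=\Set{x,y}$, $\divinfo{t_2}{\markingset{M}}=\Set{y,z}$ and $\divinfo{t_3}{\markingset{M}}=\Set{x,z}$ are pairwise incomparable, so there is no maximal branch. No disjointness, gluing or token-forcing argument can repair this. The lemma's own proof overlooks that the filler $c_1$ on the $t_1$-branch reaches $z'$, which is the $z$-side input of the first joining transition $J$.

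The paper's proof invokes the same lemma and breaks at the same point. Any correct sufficient condition has to constrain the choices below $\delta$, not only the union $\divinfoh{\delta}{\markingset{M}}$.

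\paragraph{The $(\Rightarrow)$ direction.} This direction is fine for $|\markingset{M}|\geq 2$, but the disjointness does not come from safeness or the Path-to-End theorem. The ancestor cones of different marked places may share nodes after $\delta$. A correct argument:
\begin{itemize}
\item Take $\delta$ maximal among the run-net transitions from which all of $\markingset{M}$ is reachable inside the run net.
\item No output of $\delta$ can reach all of $\markingset{M}$. If one did, its unique successor transition in the run net would be a later such transition, contradicting maximality.
\item So some output $o_1$ reaches some $d_1\in\markingset{M}$ but not some $d_2\in\markingset{M}$, inside the run net.
\item A run-net path from $o_1$ to $d_1$ and one from another output $o_2$ to $d_2$ cannot share a node, since that would yield a path from $o_1$ to $d_2$.
\end{itemize}

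\paragraph{The one-place case.} This case cannot be "handled separately". The marking $\multiset{i}$ is reachable and maximum admissible, but $\divpoints{\multiset{i}}$ is not even defined, because diverging points require $|D|\geq 2$. The statement therefore needs the hypothesis $|\markingset{M}|\geq 2$, not a convention.
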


\begin{proof}
\proofsize
Constructive proof. %
By \cref{cor:ReachabilityWithRunNets} it holds: %
\begin{equation*}
	M \in \reachableM{N}{\multiset{i}} \iff M \text{ is maximum admissible} \; \land \; \exists (P_\sigma,T_\sigma,F_\sigma) \in \RunNets{N}{\multiset{i}}\colon \markingset{M} \subseteq P_\sigma.
\end{equation*} %
Furthermore, by \cref{lemma:SupersetInformation}, it follows for each diverging transition $\delta \in \divpoints{M} \cap T$ and for each of its output places $o_\delta \in \postset{\delta}$ that there is at least one sub-net of all $\divinfo{o_\delta}{\markingset{M}}$ from $o_\delta$ to all $x \in \divinfo{o_\delta}{\markingset{M}}$, which diverges only in transitions. %
As a consequence, for each diverging transition $\delta \in \divpoints{M} \cap T$, the union $\bigcup_{o_\delta \in \postset{\delta}} \divinfo{o_\delta}{\markingset{M}} = \divinfoh{\delta}{\markingset{M}}$ describes the subset $M' \subseteq M$ of $M$ to which the transition $\delta$ has such a sub-net, which only diverges in transitions. %
If there is no diverging transition $\delta \in \divpoints{M} \cap T$ with $M = \divinfoh{\delta}{\markingset{M}}$, then there is no run net that can contain a sub-net, which only diverges at transitions to the places of $M$. %
Therefore, there is no run net that contains all places of $M$. %
If there is such a diverging transition $\delta$ with $M = \divinfoh{\delta}{\markingset{M}}$, then there must be at least one run net, which contains all marked places in $M$. $\checkmark$ %
The proof of sub-marking reachability by \eqref{eq:R2} follows the same argumentation. $\checkmark$ \proofend %
\end{proof}

With \cref{theorem:Reachability}, we can decide if a (sub) marking is reachable based on the structure of a sound \AFW{} rather than over its concrete behavior in terms of occurrence sequences, state space exploration, or similar techniques. In addition, maximum admissibility and diverging points help to understand why a specified marking is \emph{not} reachable. For example, the marking $\multiset{p8,p5,p14}$ in \cref{fig:example} is \emph{not} reachable from $\multiset{p1}$ since $p8$ is not concurrent to $p5$ (there is a path from $p8$ to $p5$ following \cref{lemma:PathAbsence}). In \cref{fig:MaxNotReach}, the marking $\multiset{x,y,z}$ is not reachable from $\multiset{i}$ as there is no diverging transition which leads to simultaneous tokens on $x$, $y$, and $z$ --- just on $x$ and $y$, $x$ and $z$, or $y$ and $z$. This diagnostic information may support repairing improperly developed nets, in which a desired marking is not reachable or an undesired marking is reachable. In \cref{subsec:DiagnosticsAll} will be discussed how the diagnostics of the algorithm introduced in the following can be used.

\subsection{Algorithmic Derivation}

There is a related concept in compiler theory to identify diverging points: the placement of so-called {$\phi$-functions} to derive the minimal \emph{Static Single Assignment} (SSA) form \cite{DBLP:journals/toplas/CytronFRWZ91}. The SSA form requires that each \emph{program variable} in a computer program gets statically once a value. Transforming a program into SSA form introduces no difficulties as long as the program does not have conditional statements or loops (i.\,e., it is a sequence of statements). Otherwise, different values of the same variable can be valid at some point in the program. See \cref{fig:SSAform} (a) as an example. It shows a control-flow graph with variables $v$ and $a$ whereas $v$ has multiple assignments with different values, thus, it is not in a SSA form. In \cref{fig:SSAform} (b), each assignment to $v$ is replaced with a unique variable $v_1$, $v_2$, \ldots, $v_9$ called \emph{definition} of $v$. The difficulty occurs when different definitions of the same variable reach the same node. For example in \cref{fig:SSAform} (b), either definition $v_1$ or definition $v_3$ is valid in the dashed-lined node depending on the actual execution at runtime (the upper path via the node defining $v_3$ or the ``middle'' path without an additional definition of $v$). This would hinder to transform every program into SSA form. Thus, if there are converging control-flows with multiple definitions of the same variable, a \emph{$\phi$-function} selects that definition of the variable, which was used during execution and assigns it to a new definition of the variable. For example, definition $v_4$ in \cref{fig:SSAform} (b) has $\phi(v_3,v_1)$ as an assignment meaning that either definition $v_3$ for the left incoming flow or definition $v_1$ for the lower incoming flow will be taken. Note that the order of the arguments of the $\phi$-function depends on the order of the incoming flows and, therefore, is not random. This implies that at most one definition can be reached via each incoming flow.

\begin{figure}[t]
	\begin{minipage}[t]{0.48\textwidth}
		\begin{center}
			\includegraphics[width=1.0\textwidth]{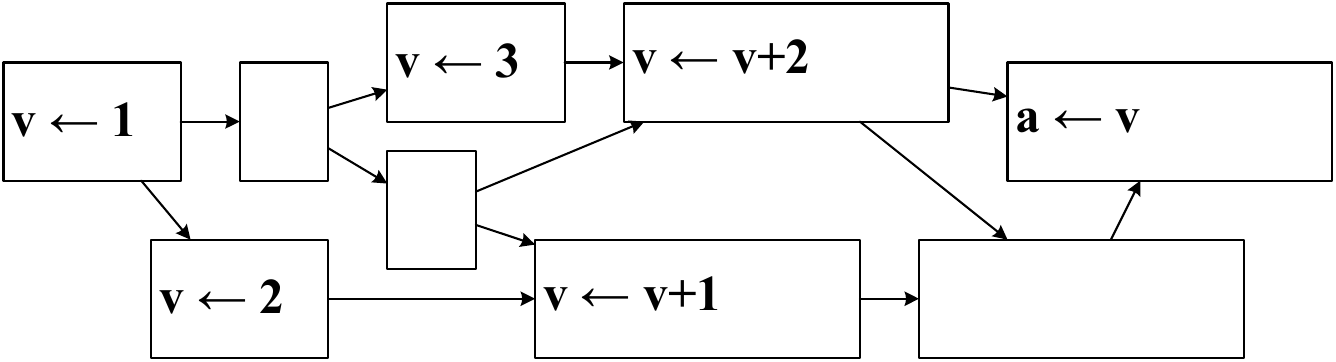}\\
			(a)
		\end{center}
	\end{minipage} \hfill %
	\begin{minipage}[t]{0.48\textwidth}
		\begin{center}
			\includegraphics[width=1.0\textwidth]{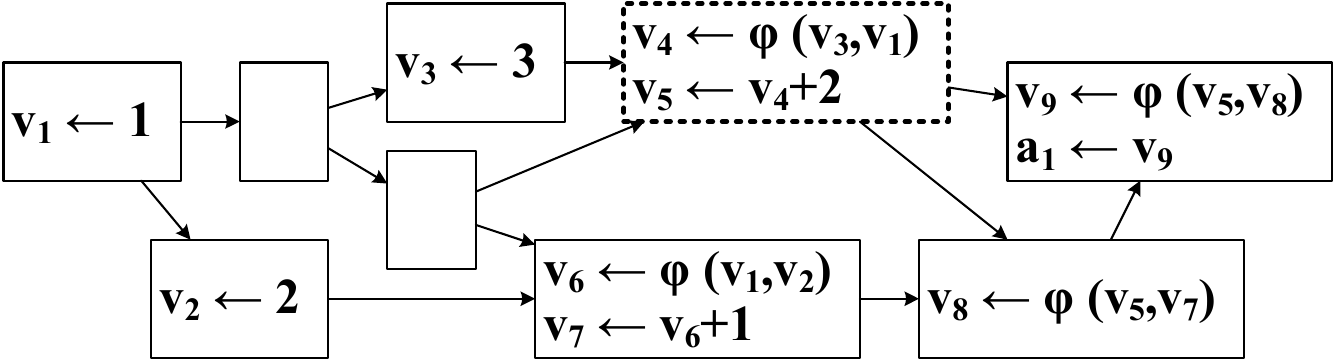}\\
			(b)
		\end{center}
	\end{minipage}	
	\caption{Transforming a program with program variables $v$ and $a$ (a) into SSA form (b) by inserting new definitions of variables and $\phi$-functions at converging nodes.}
	\label{fig:SSAform}
\end{figure}

Cytron et al.\@ \cite{DBLP:journals/toplas/CytronFRWZ91} derived algorithms to compute the \emph{minimal} SSA form from normal programs. The \emph{minimal} SSA form only inserts $\phi$-functions if necessary. To achieve a SSA form out of a computer program, the transformation first inserts a new definition of a program variable for each assignment of this variable as explained previously. Subsequently, it inserts a $\phi$-function of the form $d_n = \phi(d_1,\ldots,d_m)$, $m \geq 2$, for each program variable at each program point where different definitions $d_1,\ldots,d_m$ can occur. This placement of $\phi$-functions in the minimal SSA assumes fairness, i.\,e., each path can be followed by a token. The positions of nodes where a $\phi$-function is required for a program variable $d$ are positions where two paths from two different definitions of the same variable first meet \cite{DBLP:journals/toplas/CytronFRWZ91}. As a consequence, for all $d_1,\ldots,d_m$ there are disjoint paths from the position of the $\phi$-function to their assignment. Actually, this is the same definition as for diverging points, only in the reverse direction. For this reason, we will reuse the algorithms for identifying diverging points.

For placing $\phi$-functions, the \emph{post-dominance relation} \cite{DBLP:journals/siamcomp/Tarjan74} plays an important role, in which a node $x$ \emph{post-dominates} $y$ if $x$ appears on any path from $y$ to a single end node. Resulting from the post-dominance relation, there is a boundary where the post-dominance of a node $x$ ``ends'', i.\,e., $x$ post-dominates a direct successor (output) of a node $y$ but does not post-dominate $y$ itself. This boundary is called the \emph{post-dominance frontier}. Positions, where two paths to different nodes first diverge, are closely related to the \emph{post-dominance frontier} \cite{DBLP:journals/toplas/CytronFRWZ91}:

\begin{definition}[Post-Dominance]
\label{def:Dominance}
Let $N=(P,T,F,i,o)$ be a workflow net with two nodes $x,y \in P \cup T$. %
$x$ \emph{post-dominates} $y$, denoted by $\pdomi{x}{y}$, if $x$ is on all paths from $y$ to $o$: %
\[ \pdomi{}{y} \coloneqq \Set{ x \in P \cup T \given \forall \rho \in \Paths{y}{o}\colon x \in \pathset{\rho} }. \]%
$x$ \emph{strictly} post-dominates $y$, denoted by $\spdomi{x}{y}$, if $\pdomi{x}{y}$ and $x \not = y$. %
$x$ \emph{immediately} post-dominates $y$, depicted $\ispdomi{x}{y}$, if $\spdomi{x}{y}$ and $\forall \spdomi{z}{y}\colon \notspdomi{x}{z}$ \cite{DBLP:journals/siamcomp/Tarjan74}. %
The \emph{post-dominance frontier} of $x$, depicted $\pdomfront{x}$, is the set of all nodes $Y \subseteq P \cup T$ so that for any $y \in Y$ there is an output of $y$ being post-dominated by $x$, but $x$ does not strictly post-dominate $y$: %
\[ \pdomfront{x} \coloneqq \Set{ y \in P \cup T \given \exists s \in \postset{y}\colon \pdomi{x}{s} \land \notspdomi{x}{y}}. \]
The extension of the post-dominance frontier to a set of nodes $X$ is: %
\[ \pdomfront{X} \coloneqq \bigcup_{x \in X} \pdomfront{x}. \tag*{\defend} \] %
\end{definition} %
Since the \emph{immediate} post-dominance relation builds a tree, each node has exactly one immediate post-dominating node except the sink, but can immediately post-dominate more than one node \cite{DBLP:journals/siamcomp/Tarjan74}.

In \cref{fig:exampleAlgo} (showing \cref{fig:example}), the sink $p7$ naturally post-dominates all other nodes of the net inclusively itself. $p13$ strictly post-dominates $\Set{p18,t13}$ (but not $t11$ and $t12$) and immediately post-dominates $t13$. The post-dominance frontier $\pdomfront{p13}$ of $p13$ contains $\Set{t11,t12}$. Since $\pdomfront{p5} = \Set{t8,t10,t12}$, the combined post-dominance frontier of $p13$ and $p5$ is: %
\[ \pdomfront{\Set{p5,p13}} = \Set{t8,t10,t11,t12}. \]

Following the thoughts of Cytron et al.\@ \cite{DBLP:journals/toplas/CytronFRWZ91} for the reverse net in our case, the set of all diverging points $\divpoints{D}$ of a set of nodes $D$ is related to the \emph{iterated} post-dominance frontier $\itpdomfront{D}$. $\itpdomfront{D}$ is the limit of the increasing sequence of node sets: %
\begin{align*}
	\mathit{PDF}_1 = & \; \pdomfront{D}; \\
	\mathit{PDF}_{i + 1} = & \; \pdomfront{D \cup \mathit{PDF}_i}, \\%
	\text{so that eventually follows:}\\ %
	\itpdomfront{D} \supseteq & \; \divpoints{D}.%
\end{align*}
In the example above, we considered the places $p5$ and $p13$ with $\pdomfront{\Set{p5,p13}} = \Set{t8,t10,t11,$ $t12}$. The iterated post-dominance frontier will then be defined iteratively: %
\begin{align*}
	\mathit{PDF}_1 = & \; \pdomfront{\Set{p5,p13}} = \Set{t8,t10,t11,t12}; \\ %
	\mathit{PDF}_2 = & \; \pdomfront{\Set{p5,p13} \cup \mathit{PDF}_1} = \pdomfront{\Set{p5,p13,t8,t10,t11,t12}} \\ %
	               = & \; \Set{t8,t10,t12} \cup \Set{t11,t12} \cup \Set{t1} \cup \Set{t11,t12} \cup \Set{p16} \cup \Set{p16} \\ %
								 = & \; \Set{p16,t1,t8,10,t11,t12}; \\ %
	\mathit{PDF}_3 = & \; \pdomfront{\Set{p5,p13} \cup \mathit{PDF}_2} = \pdomfront{\Set{p5,p13,p16,t1,t8,10,t11,t12}} \\ %
	               = & \; \Set{t8,t10,t12} \cup \Set{t11,t12} \cup \Set{t1} \cup \Set{p1} \cup \Set{t1} \cup \Set{t11,t12} \cup \Set{p16} \cup \Set{p16} \\ %
								 = & \; \Set{p1,p16,t1,t8,10,t11,t12} \\ %
								 & \; \text{Since $\pdomfront{p1} = \emptyset$ as new node will not change the iterative post-dominance frontier:} \\
								 = & \; \itpdomfront{\Set{p5,p13}}
\end{align*} %
As the reader can verify, $\itpdomfront{\Set{p5,p13}} = \Set{p1,p16,t1,t8,10,t11,t12}$ contains all diverging points of $D = \Set{p5,p13}$. $p1$ is not a diverging point as 
it only has one output ($t1$). Thus, $\itpdomfront{\Set{p5,p13}}$ is a superset of $\divpoints{\Set{p5,p13}}$ as stated above.

\begin{figure}[t]
	\centering
		\includegraphics[width=0.8\textwidth]{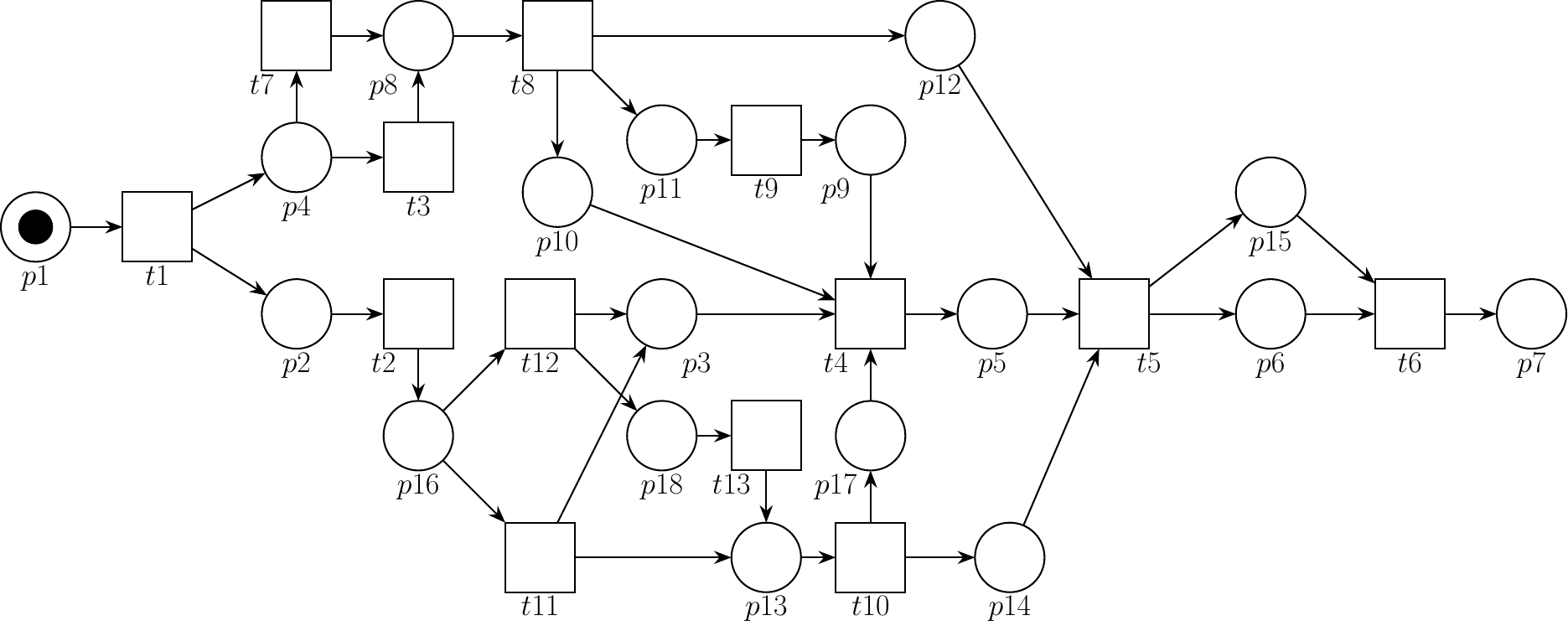}
		\vspace{-0.35cm}
	\caption{The sound acyclic free-choice workflow net of \cref{fig:example}.}
	\label{fig:exampleAlgo}
\end{figure}

We can simply reuse the algorithm of computing the post-dominance frontier to identify all diverging points $\divpoints{M}$ of a given admissible marking $M$. Instead of using the algorithm provided by Cytron et al.\@~\cite{DBLP:journals/toplas/CytronFRWZ91}, the ``engineered'' algorithm of Cooper et al.\@~\cite{Cooper} is much simpler, only computes necessary information, and is widely used in compiler construction. Since the algorithm of Cooper et al.\@ computes the \emph{dominance frontier} instead of the \emph{post}-dominance frontier, the here stated algorithms are modified accordingly. %
In the following, we will discuss the algorithms required to compute the diverging points by presenting first \cref{algo:ImmediatePostDominance} to compute the immediate post-dominance relation. This is followed by \cref{algo:PostDominanceFrontier} to compute the post-dominance frontier. Both algorithms are used to compute the diverging points of a given marking in \cref{algo:DivergingPoints}. Finally, \cref{algo:ReachabilityAcyclic} requires these diverging points to decide whether a given marking is reachable, or not.

\begin{algorithm}[t]
\caption{Computing the immediate post-dominance relation for a workflow net $N=(P,T,F,i,o)$ following \cite{Cooper}.}
\label{algo:ImmediatePostDominance}
\SetKwProg{Fn}{Function}{}{end}
\Fn{computeImmediatePostDoms($N$)}{
	\lFor{$x \in P \cup T$}{
		$\ispdomi{}{x} \gets \text{undefined}$
	}
	$\ispdomi{}{o} \gets o$\;
	\tcp{Compute the rev.\!\!\! topolog.\!\!\! order of the nodes with their numbers.}
	$\mathit{order}, \mathit{ordernr} \gets$ revTopologicalOrder($N$)\; 
	$\mathit{changed} \gets \mathit{true}$\;
	\While{$\mathit{changed}$}{
		$\mathit{changed} \gets \mathit{false}$\;
		\For{$x \in \mathit{order}$}{\nllabel{al:ip:or}
			\If{$\postset{x} \not = \emptyset$}{
				$\mathit{new\_ipdom} \gets \text{first}(\postset{x})$\;\nllabel{al:ip:first}
				\For{$s \in \postset{x}$}{
					\If{$\ispdomi{}{x} \not = \text{\normalfont{undefined}} \land s \not = \text{\normalfont{first}}(\postset{x})$}{
							$\mathit{new\_ipdom} \gets \text{comImmPostDom}(s, \mathit{new\_ipdom}, \ispdomi{}{}, \mathit{ordernr})$\;\nllabel{al:ip:others}
					}
				}
				\If{$\ispdomi{}{x} \not = \mathit{new\_ipdom}$}{
					$\ispdomi{}{x} \gets \mathit{new\_ipdom}$\; \nllabel{al:ip:up1}
					$\mathit{changed} \gets \mathit{true}$\;
				}
			}
		}
	}
	\KwRet{$\ispdomi{}{}$}
}
\end{algorithm}

The idea of Cooper et al.\@ is to directly compute the \emph{immediate} post-dominance relation without the necessity of previously computing the post-dominance relation. The algorithm is presented in \cref{algo:ImmediatePostDominance}. One of the main characteristics is that the algorithm works on the nodes in a reverse topological order (\cref{al:ip:or}) such that a node is only visited if all its outputs were already investigated. The simple thought behind this order is that a node $n$ being not reachable from any output of $n$ cannot (immediately) post-dominate $n$. 

\Cref{al:ip:first} of \cref{algo:ImmediatePostDominance} takes the first output $o_1$ of the current node $x$ as immediate post-dominator $\mathit{new\_ipdom}$. If $|\postset{x}| = 1$, then $\mathit{new\_ipdom}$ is correctly set. Otherwise, $\mathit{new\_ipdom}$ is investigated against all other outputs ($\not = o_1$) in \cref{al:ip:others}. This uses the \emph{comImmPostDom} function presented in \cref{algo:Intersect}. Although looking complicated, the principle of this \emph{comImmPostDom} function is simple as explained on an example: Take \cref{fig:exampleAlgo} with $x = t10$ as the current node. We know that all nodes ``after'' $t10$ were already investigated (i.\,e., \Set{p7,t6,p6,p15,t5,p14,p5,t4,p17} in reverse topological order). \cref{al:ip:first} of \cref{algo:ImmediatePostDominance} takes $\mathit{new\_ipdom} = p17$ of its outputs $\Set{p14,p17}$ (for example). Thus, $p14$ is the remaining output and \emph{comImmPostDom} of \cref{algo:Intersect} is invoked with $p17$ and $p14$ as arguments. Now, imagine you use your fingers: one on $p17$ ($\mathit{fingerX}$) and one on $p14$ ($\mathit{fingerY}$). Both fingers are not on the same node, thus, at least one of both is not an immediate post-dominator of $x$. Since $p17$ appears \emph{after} $p14$ in the order for this example, $p17$'s ordering number is greater than that of $p14$. Thus, we begin to slide with the finger on $p14$ ($\mathit{fingerY}$) to the immediate post-dominator of $p14$, which is $t5$ (i.\,e., $\mathit{fingerY}$ is $t5$ now). $t5$ appears \emph{before} $p17$ in the order, thus, having a smaller number than $p17$ (and than $t4$ and $p5$). For this reason, we stop sliding with $\mathit{fingerY}$ and start to slide with $\mathit{fingerX}$ from one immediate post-dominator to the next (i.\,e., from $p17$ via $t4$ and $p5$ to $t5$). We reach $t5$ with both fingers, therefore, stopping all \emph{while}-loops in \cref{algo:Intersect}. $t5$ is returned as the new (and correct) immediate post-dominator of $t10$. Since $x = t10$ does not have further output nodes, the immediate post-dominator of $x = t10$ is updated accordingly in \cref{al:ip:up1}. Once all nodes were processed, a relation (as a map) is returned that contains for all nodes their immediate post-dominators.

\begin{algorithm}[t]
\caption{Finding the closest immediate post-dominator of two nodes $\mathit{fingerX}$ and $\mathit{fingerY}$ given the current immediate post-dominance relation $\ispdomi{}{}$ and ordering numbers $\mathit{ordernr}$ in the post-order following \cite{Cooper}.}
\label{algo:Intersect}
\SetKwProg{Fn}{Function}{}{end}
\Fn{comImmPostDom($\mathit{fingerX}$, $\mathit{fingerY}$, $\ispdomi{}{}$, $\mathit{ordernr}$)}{
	\While{$fingerX \not = fingerY$}{
		\While{$\mathit{ordernr}(fingerX) < \mathit{ordernr}(fingerY)$}{
			$\mathit{fingerX} = \ispdomi{}{\mathit{fingerX}}$\;
		}
		\While{$\mathit{ordernr}(fingerY) < \mathit{ordernr}(fingerX)$}{
			$\mathit{fingerY} = \ispdomi{}{\mathit{fingerY}}$\;
		}
	}
	\KwRet{$\mathit{fingerX}$}
}
\end{algorithm}

Please be aware that \cref{algo:ImmediatePostDominance} is a general algorithm being applicable to all workflow nets. Thus, the nets may contain loops. For this reason, the algorithm requires a $\mathit{changed}$ variable, because in nets with loops, the (reverse) topological order is not fully satisfying. The $\mathit{changed}$ variable ensures that all information is finally correct. In our case of \emph{acyclic} nets, we could remove the $\mathit{changed}$ variable as the reverse topological order ensures the correct information after one iteration. Nevertheless, we decided to provide the more general algorithm at this place.

The computation of the post-dominance frontier of each node requires the immediate post-domi\-nance relation as stated in \cref{algo:PostDominanceFrontier}, \cref{al:pf:ipd}. The algorithm investigates each node (\cref{al:pf:en}) and already ensures that only nodes can be in the post-dominance frontier, which have at least two outputs (\cref{al:pf:div}). Each output $s$ of the current node $x$ is investigated and is assigned to a temporary variable $\mathit{runner}$ (\cref{al:pf:ru}). Confusingly for the first moment, the post-dominance frontier of $\mathit{runner}$ is updated. Taking $x = t10$ with its output $s = p17 = \mathit{runner}$ of \cref{fig:exampleAlgo} as an example: we know that $p17 \in \postset{t10}$ and $|\postset{t10}| \geq 2$. $\mathit{runner} = p17$ does not immediately post-dominate $x = t10$ but post-dominates itself. As a consequence, the ``post-dominance'' of $\mathit{runner} = s = p17$ ends with itself and, thus, $t10$ must be in $\mathit{runner}$'s ($p17$'s) post-dominance frontier (\cref{al:pf:up}). \cref{al:pf:next} updates $\mathit{runner}$ to its own immediate post-dominator $\ispdomi{}{\mathit{runner}}$. In the example, this is $t4$. $t4$ as $\mathit{runner}$ is not the immediate post-dominator of $x = t10$ again, but $t4$ immediately post-dominates $p17$ and, thus, an output of $x = t10$. For this reason, $x = t10$ also is within $t4$'s post-dominance frontier. Eventually, $\mathit{runner}$ gets $t5$. Since $t5$ is the immediate post-dominator of $x = t10$, the \emph{while}-loop in \cref{al:pf:lo} will finish. Be aware that no post-dominator of $t5$ will have $t10 = x$ within its post-dominance frontier \cite{Cooper}. \cref{algo:PostDominanceFrontier} finally terminates with the post-dominance frontiers of all nodes and returns them.

\begin{algorithm}[t]
\caption{Computing the post-dominance frontier for a workflow net $N=(P,T,F,i,o)$ following \cite{Cooper}.}
\label{algo:PostDominanceFrontier}
\SetKwProg{Fn}{Function}{}{end}
\Fn{postDomFrontier($N$)}{
	\tcp{Initialize}
	$\ispdomi{}{} \gets \text{computeImmediatePostDoms}(N)$\;\nllabel{al:pf:ipd}
	\For{$x \in P \cup T$}{
		$\pdomfront{x} \gets \emptyset$\;
	}
	\tcp{Compute}
	\For{$x \in P \cup T$}{\nllabel{al:pf:en}
		\If{$|\postset{x}| \geq 2$}{\nllabel{al:pf:div}
			\For{$s \in \postset{x}$}{
				$\mathit{runner} \gets s$\;\nllabel{al:pf:ru}
				\While{$\mathit{runner} \not = \ispdomi{}{x}$}{\nllabel{al:pf:lo}
				  $\pdomfront{\mathit{runner}} \gets \pdomfront{\mathit{runner}} \cup \Set{x}$\;\nllabel{al:pf:up}
					$\mathit{runner} \gets \ispdomi{}{\mathit{runner}}$\;\nllabel{al:pf:next}
				}
			}
		}
	}
	\KwRet{$\pdomfront{}$}
}
\end{algorithm}

As explained above, the \emph{iterated} post-dominance frontier $\itpdomfront{D}$ of a set of nodes $D$ is closely related to the diverging points of a node. For this reason, we are now able to compute the diverging points as stated in \cref{algo:DivergingPoints}. At first, it computes the post-dominance frontier of each node in \cref{al:dv:pdf}. Note that this frontier is independent of the marking $M_r$ to investigate. For this reason, it can be pre-computed. In \cref{al:dv:dpi}, the algorithm initializes the set $\divpoints{M_r}$ of diverging points of $M_r$ followed by the initialization of the mappings $\divinfoh{}{\markingset{M_r}}$ and $\reaches{}{\markingset{M_r}}$ in Lines~\ref{al:dv:di2} and \ref{al:dv:di3}. As the algorithm states in \cref{al:dv:di}, it computes $\divinfoS{x}{\markingset{M_r}}$ at the same time as the diverging points. \emph{Note at this place that we refrain to compute $\divinfo{o}{\markingset{M_r}}$ for each output node $o$ of a diverging point. Instead, the union of the information of all output nodes is computed for each diverging point as required in \cref{theorem:Reachability}.} %
In addition, we maintain the set $\reaches{x}{\markingset{M_r}}$, which stores to which diverging points and marked places of $M_r$ the current diverging point $x$ has direct paths (without any diverging point or place of $\markingset{M_r}$ in-between). \emph{Note that the resulting set $\reaches{x}{\markingset{M_r}}$ can contain other nodes than that of $\markingset{M_r}$.} %
From \cref{al:dv:itpdfs} to \cref{al:dv:itpdfe}, $\itpdomfront{\markingset{M_r}}$ is computed with its result stored in $\divpoints{M_r}$ and using a list $\mathit{list}$ that is continuously reduced and extended. The algorithm takes a node $\delta$  of the current node $x$'s post-dominance frontier and stores it as a diverging point (\cref{al:dv:st}). Then, it extends $\divinfoS{\delta}{\markingset{M_r}}$ of this diverging point by adding $x$ %
if $x \in M_r$ %
(since there is a path from $\delta$ to $x$) and by adding the set $\divinfoS{x}{\markingset{M_r}}$ (if $x \in \divpoints{M_r}$, then $\divinfoS{x}{\markingset{M_r}} \not = \emptyset$, otherwise it is empty). %
Furthermore, $\delta$ reaches $x$ and, therefore, stores it in $\reaches{\delta}{\markingset{M_r}}$ in \cref{al:dv:re}. %
Finally, $\delta$ is added at the end of $\mathit{list}$. For this reason, each diverging point is investigated again every time it newly appears within a post-dominance frontier of a node in $\mathit{list}$. For example, taking the marking $\multiset{p3,p8,p14,p17}$ in \cref{fig:exampleAlgo}, diverging point $t11$ will appear in $\pdomfront{p3}$ as well as in $\pdomfront{t10}$ (after $t10$ is added to $\mathit{list}$ as a diverging point of $p14$ and $p17$). For this reason, $t11$ will be investigated 3 times: once for $p17$ with $t10$, once for $p14$ with $t10$, and once for $p3$. Every time, the sets $\divinfoS{t11}{\markingset{M_r}}$ and $\reaches{t11}{\markingset{M_r}}$ will be extended accordingly. %
Since $\itpdomfront{\markingset{M_r}}$ is a superset of $\divpoints{M_r}$ as explained before, Lines~\ref{al:dv:rems}--\ref{al:dv:reme} remove those diverging points $\delta$, which do not have direct paths to at least two other diverging points or marked places of $M_r$ (i.\,e., $|\reaches{\delta}{\markingset{M_r}}| \leq 1$). %
Eventually, \cref{algo:DivergingPoints} returns the set of diverging points and $\divinfoh{}{\markingset{M_r}}$ as well as $\reaches{}{\markingset{M_r}}$ for each diverging point. %
Again: Computing the positions of $\phi$-functions is the same as computing the diverging points just in the reverse net. %
For this reason, it is not necessary to show that these algorithms finally result in the set of diverging points as this was already proven in \cite{DBLP:journals/toplas/CytronFRWZ91}.

\begin{algorithm}[t]
\caption{Computing the set of diverging points $\divpoints{M_r}$ for a workflow net $N=(P,T,F,i,o)$ and a given marking $M_r$, as well as the maps $\divinfoS{}{\markingset{M_r}}$ and $\reaches{}{\markingset{M_r}}$.}
\label{algo:DivergingPoints}
\SetKwProg{Fn}{Function}{}{end}
\Fn{computeDivergingPoints($N$, $M_r$)}{
	\tcp{Initialize}
	$\pdomfront{} \gets \text{postDomFrontier}(N)$\;\nllabel{al:dv:pdf}
	$\divpoints{M_r} \gets \emptyset$\;\nllabel{al:dv:dpi}
	\For{$x \in P \cup T$}{
		$\divinfoS{x}{\markingset{M_r}} \gets \emptyset$\;\nllabel{al:dv:di2}
		$\reaches{x}{\markingset{M_r}} \gets \emptyset$\;\nllabel{al:dv:di3}
	}
	$\mathit{list} \gets \markingset{M_r}$\;\nllabel{al:dv:itpdfs}
	\tcp{Compute. $\mathit{list}$ is ordered.}
	\While{$\mathit{list} \not = \emptyset$}{
		$x \gets \text{take first of } \mathit{list}$\;
		\For{$\delta \in \pdomfront{x}$}{
			$\divpoints{M_r} \gets \divpoints{M_r} \cup \Set{\delta}$\; \nllabel{al:dv:st}
			$\divinfoS{\delta}{\markingset{M_r}} \gets \divinfoS{\delta}{\markingset{M_r}} \cup \divinfoS{x}{\markingset{M_r}} \cup \big( \Set{x} \cap \markingset{M_r} \big)$\;\nllabel{al:dv:di}
			$\reaches{\delta}{\markingset{M_r}} \gets \reaches{\delta}{\markingset{M_r}} \cup \Set{x}$\;\nllabel{al:dv:re}
			$\mathit{list} \gets \mathit{list} \cup \Set{\delta}$\;
		}
	}	\nllabel{al:dv:itpdfe}
	\For{$\delta \in \divpoints{M_r}$}{\nllabel{al:dv:rems}
		\lIf{$|\reaches{\delta}{\markingset{M_r}}| \leq 1$}{
		  $\divpoints{M_r} \gets \divpoints{M_r} \setminus \Set{ \delta }$
		}
	}\nllabel{al:dv:reme}
	\KwRet{($\divpoints{M_r}$, $\divinfoS{}{\markingset{M_r}}$, $\reaches{}{\markingset{M_r}}$)}
}
\end{algorithm}

\cref{algo:ReachabilityAcyclic} summarizes the steps of checking (sub-marking) reachability of a given marking $M_r$ in a sound \AFW{} $N$. At first, (maximum) admissibility of $M_r$ is checked in \cref{al:rea:ad}. If $M_r$ is \emph{admissible} or \emph{maximum admissible}, then $M_r$ may be (sub-marking) reachable. Otherwise, if $M_r$ is \emph{not admissible}, $M_r$ is stated to be \emph{not reachable} and diagnostics could be provided. \cref{al:rea:dp} computes the diverging points of $M_r$  with \cref{algo:DivergingPoints}. This computation includes for each diverging point $\delta$ those marked places of $M_r$, $\divinfoS{\delta}{\markingset{M_r}}$, to which outputs of $\delta$ have paths. Finally, all diverging transitions $\delta$ are investigated in {Lines~\ref{al:rea:s1}--\ref{al:rea:s2}}. \cref{al:rea:ch} checks if $M_r$ is equal to $\divinfoS{\delta}{\markingset{M_r}}$. If this check holds, the marking is either \emph{coverable} (if $M_r$ is \emph{admissible}) or \emph{reachable} (if $M_r$ is maximum admissible). Otherwise, other diverging transitions are investigated until either there is one, for which the check holds, or the marking is stated as \emph{not reachable}. %
\cref{algo:ReachabilityAcyclic} does not require checking reachability by investigating the state space, computing occurrence sequences, etc. This explains why the computational complexity is just quadratic in the worst case:

\begin{algorithm}[t]
\caption{Checking (sub-marking) reachability of a given marking $M_r \in \Bag{P}$ for a sound \AFW{} $N=(P,T,F,i,o)$.}
\label{algo:ReachabilityAcyclic}
\SetKwProg{Fn}{Function}{}{end}
\Fn{isReachable($N$, $M_r$)}{
  $\mathit{admissibility},\mathit{missing},\mathit{conflict} \gets$ checkMaximumAdmissibility($N$, $M_r$)\; \nllabel{al:rea:ad}
	$\divpoints{M_r},\divinfoS{}{\markingset{M_r}},\reaches{}{\markingset{M_r}} \gets$ computeDivergingPoints($N$, $M_r$)\; \nllabel{al:rea:dp}
	\If{$\mathit{admissibility} \in \Set{$ admissible, maximum admissible $}$}{
		\For{$\delta \in (\divpoints{M_r} \cap T)$}{\nllabel{al:rea:s1}
			\If{$\markingset{M_r} = \divinfoS{\delta}{\markingset{M_r}}$}{ \nllabel{al:rea:ch}
				\uIf{$\mathit{admissibility} =$ admissible}{\nllabel{al:rea:ad1}
				  \tcp{Show expl. $\!\!\!\!\!\!\!$ using $\mathit{missing}$, $\divpoints{M_r}$, $\divinfoS{}{\markingset{M_r}}$, and $\reaches{}{\markingset{M_r}}$.}
					\KwRet{(coverable, $\mathit{missing}$, $\emptyset$, $\divpoints{M_r}$, $\divinfoS{}{\markingset{M_r}}$, $\reaches{}{\markingset{M_r}}$)}
				}\Else{
					\tcp{Show explanation using $\divpoints{M_r}$, $\divinfoS{}{\markingset{M_r}}$, and $\reaches{}{\markingset{M_r}}$.}
					\KwRet{(reachable, $\emptyset$, $\emptyset$, $\divpoints{M_r}$, $\divinfoS{}{\markingset{M_r}}$, $\reaches{}{\markingset{M_r}}$)}
				}\nllabel{al:rea:ad2}
			} \nllabel{al:rea:s3}
		}\nllabel{al:rea:s2}		
	}
	\tcp{Show diagnostics using $\mathit{conflict}$, $\mathit{missing}$, $\divpoints{M_r}$, $\divinfoS{}{\markingset{M_r}}$, and $\reaches{}{\markingset{M_r}}$.} \nllabel{al:rea:out2}
	\KwRet{(not reachable, $\mathit{missing}$, $\mathit{conflict}$, $\divpoints{M_r}$, $\divinfoS{}{\markingset{M_r}}$, $\reaches{}{\markingset{M_r}}$)}
}
\end{algorithm}

\begin{theoremit}[Computational Complexity]
\label{theorem:ComputationalComplexity}
Let $N=(P,T,F,i,o)$ be a sound \AFW{}  and $M_r$ a marking to check. Deciding reachability of $M_r$ can be computed in $O(|P|^2 + |T|^2)$. \defend
\end{theoremit}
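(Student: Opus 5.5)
The proof bounds the cost of each step of \cref{algo:ReachabilityAcyclic} separately and adds them. Correctness is already given by \cref{theorem:Reachability}, so only the running time needs to be shown. Let $n = |P| + |T|$. Since $N$ is a workflow net, every node lies on a path from $i$ to $o$. Simple free-choiceness then gives $|F| \in O(n)$? Not quite: only $|F| \le 2|P||T| \in O(|P|^2+|T|^2)$ holds in general. I will therefore express all bounds in terms of $|F|$ and $n^2$, and use $|F| \in O(|P|\,|T|) \subseteq O(|P|^2 + |T|^2)$ together with $n^2 \in O(|P|^2+|T|^2)$.

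\textbf{Step 1: admissibility.} \cref{al:rea:ad} calls \cref{algo:CheckMaximumAdmissibility}. By \cref{theorem:ComputationalComplexityMax}, this costs $O(|P|^2 + |T|^2)$, dominated by the Concurrent Paths algorithm of Prinz et al.

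\textbf{Step 2: post-dominance.} This step covers \cref{algo:ImmediatePostDominance,algo:Intersect,algo:PostDominanceFrontier}. The reverse topological order costs $O(n + |F|)$. Because $N$ is acyclic, a single pass of the \emph{while} loop suffices and a second pass only confirms stability. In that pass, each arc $(x,s)$ triggers at most one call of comImmPostDom, and each finger walk climbs the post-dominator tree, whose depth is at most $n$. This gives $O(|F|\cdot n) = O(|P||T|(|P|+|T|))$, which is too coarse. The paper's complexity claim therefore needs a finer count.

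I would argue as follows. On acyclic nets, the finger walks can alternatively be replaced by storing, for each node, its full post-dominator set as a bitset of size $n$. The sets are $\mathit{pdom}(x) = \{x\} \cup \bigcap_{s\in\postset{x}} \mathit{pdom}(s)$, and the intersections are charged $O(1)$ each under the same bitset convention used in the proof of \cref{theorem:ComputationalComplexityMax}. This yields $O(n + |F|)$ word operations, and extracting $\mathit{ipdom}$ costs $O(n^2)$. For \cref{algo:PostDominanceFrontier}, node $x$ is inserted into $\pdomfront{r}$ at most once per pair $(s, r)$ with $s \in \postset{x}$. The runner loops for different outputs $s$ of the same $x$ revisit nodes only along tree paths below $\mathit{ipdom}(x)$. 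By deduplicating with a visited mark per $x$, the total work becomes $\sum_x O(n) = O(n^2)$, and the total size of all frontiers is also $O(n^2)$.

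\textbf{Step 3: diverging points.} In \cref{algo:DivergingPoints}, I treat $\mathit{list}$ as a worklist in which each node is enqueued at most once, tracked by a membership bitset. The updates to $\divinfoS{\delta}{}$ in \cref{al:dv:di} must still be correct. Since the net is acyclic, I process nodes in reverse topological order (descendants first), so that each $\divinfoS{x}{}$ is final before it is propagated. Each pair $(x,\delta)$ with $\delta\in\pdomfront{x}$ is then handled once. A bitset union costs $O(1)$ word operations under the convention above, or $O(|P|)$ without it. With the convention, the total is the frontier size, $O(n^2)$. The filtering loop in \cref{al:dv:rems} costs $O(n)$.

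\textbf{Step 4: final check.} The loop in \cref{al:rea:s1} compares $\markingset{M_r}$ with $\divinfoS{\delta}{}$ for at most $|T|$ transitions, at a cost of $O(|P|)$ each. This is $O(|P||T|)$.

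Summing the four steps gives $O(|P|^2+|T|^2)$. I expect the main obstacle to be Step 2 and Step 3, namely avoiding the naive $n\cdot|F|$ bound of the finger and runner walks and the repeated re-enqueueing in $\mathit{list}$. My first attempt would be the reverse-topological, bitset-based accounting described above, which exploits acyclicity and is consistent with the set-operation cost model the paper already adopts in \cref{theorem:ComputationalComplexityMax}.
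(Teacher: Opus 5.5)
Your argument is correct. It shares the paper's skeleton: a line-by-line cost accounting of \cref{algo:ReachabilityAcyclic}, with \cref{theorem:ComputationalComplexityMax} covering \cref{al:rea:ad}. Where you differ is the dominant middle step. The paper handles \cref{al:rea:dp} purely by citation: Cooper et al.\@ for the immediate post-dominators and the frontier, and Cytron et al.\@ for the diverging points. You instead make two observations. First, $|F|$ can be of order $|P|\,|T|$, since simple free-choiceness does not restrict transitions with many output places; so the finger and runner walks of \cref{algo:ImmediatePostDominance,algo:PostDominanceFrontier} as written are only bounded by $O(|F|\cdot(|P|+|T|))$, and this bound is in fact attainable for the runner walks on sound nets. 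Second, the re-enqueueing of $\mathit{list}$ in \cref{algo:DivergingPoints} may process a node many times. You then substitute three things: bitset post-dominator sets, runner walks cut off by a per-$x$ visited mark (each tree node touched at most once per branching node), and a single reverse-topological pass for the iterated frontier. Your route gives a self-contained bound that does not depend on transferring complexity claims made for differently organised algorithms. The paper's route is shorter and stays with the algorithms it actually presents.

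Two points need tightening. First, your single pass computes the same fixpoint as the paper's worklist only because $x \in \pdomfront{y}$ implies a path from $x$ to $y$ (as $y$ post-dominates an output of $x$). Hence every contribution to $\divinfoS{x}{\markingset{M_r}}$ comes from a descendant of $x$ and is final before $x$ is processed. You use this implicitly and should state it. Second, like the paper, your Steps 2 and 3 still rest on the constant-time set-operation convention. In Step 2 you could drop it: compute $\mathit{ipdom}(x)$ as the nearest common ancestor of $\postset{x}$ in the partially built post-dominator tree, using the same visited-mark trick, at $O(|P|+|T|)$ per node. The unions in \cref{al:dv:di} still need the convention; without it your own count gives $O((|P|+|T|)^2\,|P|)$. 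Only your Step 4, at $O(|P|\,|T|)$, avoids it.
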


\begin{proof}
\proofsize
Constructive proof by line-by-line investigation of \cref{algo:ReachabilityAcyclic}. %
\begin{description}
	\item[\cref{al:rea:ad}:] Checking admissibility and maximum admissibility can be achieved in quadratic computational time complexity, $O(|P|^2 + |T|^2)$, by \cref{theorem:ComputationalComplexityMax}. %
	\item[\cref{al:rea:dp}:] Computing the dominance frontier is feasible in quadratic time, $O(|P|^2 + |T|^2)$, by Cooper et al.\@ \cite{Cooper}, after building the immediate dominance relation in the same complexity. The computation of the diverging points for $M_r$ can also be achieved in quadratic time $O(|P|^2 + |T|^2)$ by Cytron et al.\@ \cite{DBLP:journals/toplas/CytronFRWZ91}. Thus, overall, this line takes $O(|P|^2 + |T|^2)$. 
	\item[Lines~\ref{al:rea:s1}--\ref{al:rea:s2}:] Lines~\ref{al:rea:ch}--\ref{al:rea:s3} are investigated $O(|T|)$ times by \cref{al:rea:s1}. Checking if $\markingset{M_r}$ is equal to $\divinfoS{\delta}{\markingset{M_r}}$ is constant in its complexity with an implementation requiring constant time for set operations (e.\,g., a BitSet in Java). The remaining Lines~\ref{al:rea:ad1}--\ref{al:rea:ad2} are constant, too. Thus, Lines~\ref{al:rea:s1}--\ref{al:rea:s2} can be achieved in $O(|T|)$ in general.
\end{description}

In summary, checking the (sub-marking) reachability of a given marking is dominated by the term $O(|P|^2 + |T|^2)$. $\checkmark$

\emph{Termination:} All steps in \cref{algo:ReachabilityAcyclic} are either known to terminate or are performed over finite sets (as over the set of diverging points, $M_r$, etc.). For this reason, \cref{algo:ReachabilityAcyclic} terminates. \proofend
\end{proof}

\subsection{Diagnostics}
\label{subsec:DiagnosticsAll}

Each output of \cref{algo:ReachabilityAcyclic} provides information that can be used as diagnostics or explanations. Whereas the information about $\mathit{missing}$ and $\mathit{conflicting}$ places was already discussed in \cref{subsec:AdmissibilityOutput}, diagnostics with knowledge about diverging points will be discussed in the following.

Let $N$ be a net with $M_r$ is a marking to check. Computing the diverging points of $M_r$ with \cref{algo:DivergingPoints} will lead to the set $\divpoints{M_r}$ and the mappings $\divinfoS{}{\markingset{M_r}}$ and $\reaches{}{\markingset{M_r}}$. For the moment, let us consider a single diverging point $\delta \in \divpoints{M_r}$. Independently of whether $\delta \in P$ or $\delta \in T$, $\reaches{\delta}{\markingset{M_r}}$ provides all information to which nodes of $\divpoints{M_r} \cup \markingset{M_r}$ the diverging point $\delta$ has direct paths. For example, for the net of \cref{fig:examplep3p8p14p17dia}, the marking $\multiset{p3,p8,p14,17}$ shall be investigated. This marking is maximum admissible as explained before. \Cref{algo:DivergingPoints} will provide $\reaches{t10}{\Set{p3,p8,p14,17}} = \Set{p14,p17}$. By \cref{def:DivergingPoint}, there must be two disjoint paths from $t10$ to $p14$ and from $t10$ to $p17$ (except $t10$), which can be algorithmically detected. Thus, the diverging point $t10$ with the disjoint paths to $p14$ and to $p17$ can be highlighted to show that $t10$ as a diverging transition may ``cause'' concurrency of $p14$ and $p17$. \Cref{fig:examplep3p8p14p17dia} shows this highlighting in orange. Similarly, \cref{algo:DivergingPoints} will provide $\reaches{t11}{\Set{p3,p8,p14,17}} = \Set{p3,t10}$. In this case, on any path from $t11$ to $p14$ and to $p17$ lies $t10$. For this reason, $t11$ may indirectly ``cause'' concurrency of $p14$ and $p17$, which can be used in an explanation why sub-marking $\multiset{p3,p14,p17}$ is reachable. Again, the paths from $t11$ to $p3$ and to $t10$ can be highlighted as provided in \cref{fig:examplep3p8p14p17dia} in blue. In the case, there is a diverging place, such as $p16$ in the figure, the information $\reaches{p16}{\Set{p3,p8,p14,17}} = \Set{t11,t12}$ is not fine grained enough to directly select one output for highlighting. However, by \cref{def:Dominance} of post-dominance frontiers, at least two nodes in $\reaches{p16}{\Set{p3,p8,p14,17}}$ must post-dominate at least two separate outputs of $p16$; otherwise, $p16$ would not be within the iterated post-dominance frontier and, thus, not be a diverging point of $\Set{p3,p8,p14,17}$. In the current case, $t11$ and $t12$ as outputs of $p16$ post-dominate themselves. Thus, it is feasible to select that output, which is post-dominated by that node with the greatest subset of $M_r$ in $\divinfoS{t11}{\Set{p3,p8,p14,17}}$ or $\divinfoS{t12}{\Set{p3,p8,p14,17}}$. Both $t11$ and $t12$ have the same information, $\divinfoS{t11}{\Set{p3,p8,p14,17}} = \divinfoS{t12}{\Set{p3,p8,p14,17}}$, therefore, selecting one is random. In the figure, $t11$ was taken in violet and blue. As the reader can step-wisely investigate, the diagnostics provided in \cref{fig:examplep3p8p14p17dia} can be derived out of \cref{algo:DivergingPoints}. Furthermore, these diagnostics can further be used to force a simulation of the reachable marking.

\begin{figure}[t]
	\centering
		\includegraphics[width=0.80\textwidth]{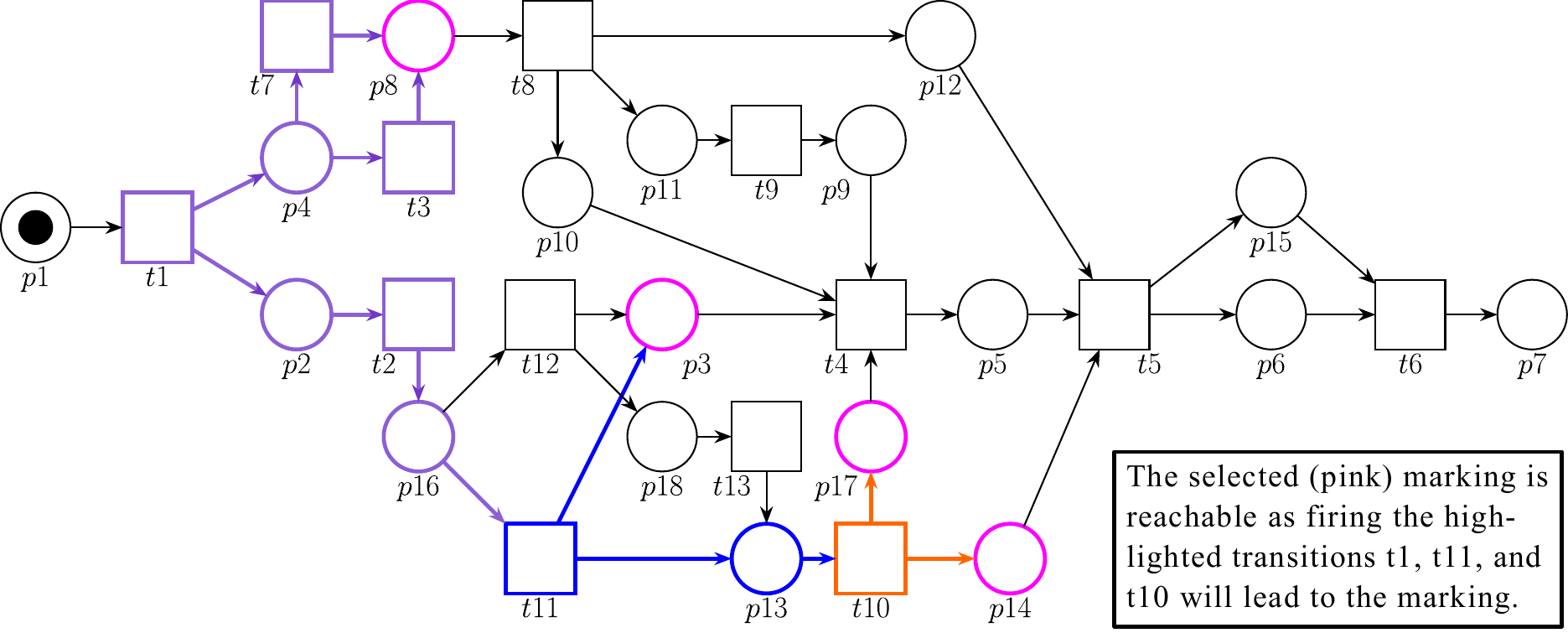}
		\vspace{-0.35cm}
	\caption{The net of \cref{fig:example} investigating the maximum admissible marking $\multiset{p3,p8,p14,p17}$ (colored in pink). Parts of the diverging points of this marking can be used to explain why this marking is reachable (cf. \cref{fig:examplep3p8p14p17}).}
	\label{fig:examplep3p8p14p17dia}
\end{figure}

Until now, the discussion focused on (sub-marking) reachable markings. However, the information about diverging points can also be used in the case of a non-admissible marking $M_r$. In \cref{subsec:AdmissibilityOutput}, it was discussed that two places $x$ and $y$ of $M_r$ can be in conflict ($\notconcurrent{x}{y}$) in three ways: (1) $x$ has a path to $y$, (2) $y$ has a path to $x$, or (3) there is no path between $x$ and $y$. As cases~(1) and (2) were already discussed, how to derive information in case (3) remained open. If $x$ and $y$ do not have any path between them, then there must be a diverging point of $x$ and $y$ on any path from the source $i$ of the workflow net to $x$ and $y$. Such diverging point cannot be a transition as, otherwise, $\concurrent{x}{y}$ \cite{DBLP:conf/apn/PrinzKB24}. As a consequence, in case~(3), we can use $\reaches{}{\markingset{M_r}}$ and $\divpoints{\Set{x,y}}$ to traverse the net in reverse direction starting from $x$ and $y$ until we reach a diverging place of $x$ and $y$. As explained above, the paths from this diverging place to $x$ and $y$ can be highlighted. Such an example was already illustrated in \cref{fig:example-divpoints-2-p3p5p7} with marking $\multiset{p3,p5,p6}$ to investigate. In this example, it holds that $\divpoints{\multiset{p3,p5,p7}} = \Set{t1,p4}$ with $\reaches{p4}{\Set{p3,p5,p7}} = \Set{p5,p7}$ and $\reaches{t1}{\Set{p3,p5,p7}} = \Set{p3,p4}$ (note that, of course, $t1$ will \emph{not} lead to a reachable marking $\multiset{p3,p5,p7}$ by \cref{theorem:Reachability} since this marking is \emph{not} admissible). Thus, $p4$ is a fitting diverging place of $p5$ and $p7$ so that the paths from $p4$ to $p5$ and from $p4$ to $p7$ can be highlighted as done in \cref{fig:example-divpoints-2-p3p5p7} in orange and as discussed before. 

Regarding the provided diagnostics and efficiency, the here introduced reachability approach has a promising quality of diagnostic information that can be used as discussed and as illustrated within this paper. Its efficiency and completeness is well suited for the given class of sound acyclic free-choice workflow nets. For its special use case, it seems to be the best option available and could be implemented in tools and for further analysis while providing qualitative high feedback.

\subsection{Extended Free-Choiceness}
\label{subsec:ExtendedFC}

Some sound \AFWs{} are \emph{extended} free-choice. Although this paper is based on \emph{simple} free-choiceness, extended free-choiceness can be handled with the here presented approach: %
Sound acyclic \emph{extended} free-choice workflow nets can be transformed into \emph{simple} free-choiceness with the linear time transformation by Murata \cite{Best1987,DBLP:journals/pieee/Murata89} \emph{before} analysis (illustrated in \cref{fig:ExtendedToSimpleFC}). This transformation inserts a new transition and a new place by eliminating a couple of flows, but by retaining the behavior and size of the model (in terms of its complexity class). After transformation, the model allows for some additional reachable markings containing the newly inserted places ($p$ in \cref{fig:ExtendedToSimpleFC}). However, these additional markings will not be investigated for (sub-marking) reachability as the new places are not present in the original net. Therefore, the transformation will not influence reachability with the presented approach. As a consequence, sound extended \AFWs{} can also be investigated with the transformation of Murata \cite{DBLP:journals/pieee/Murata89} without changing the algorithms and the overall computational complexity.

\begin{figure}[tb]
	\begin{minipage}[t]{0.48\textwidth}
		\begin{center}
			\includegraphics[width=0.28\textwidth]{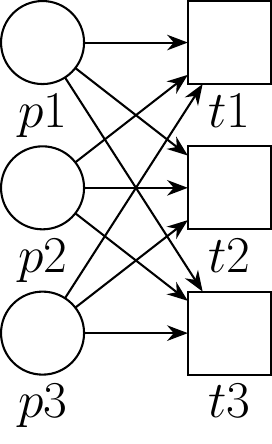}\\
			(a)
		\end{center}
	\end{minipage} \hfill %
	\begin{minipage}[t]{0.48\textwidth}
		\begin{center}
			\includegraphics[width=0.6\textwidth]{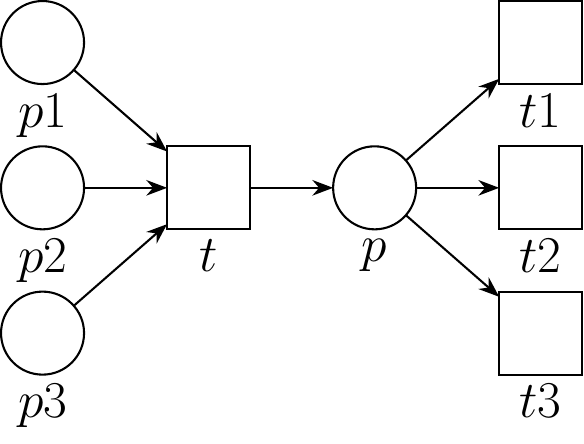}\\
			(b)
		\end{center}
	\end{minipage}	
	\caption{Transforming an extended free-choice structure (a) into a simple free-choice structure (b) by inserting a new transition and place.}
	\label{fig:ExtendedToSimpleFC}
\end{figure}

\section{Conclusion}
\label{sec:Conclusion}

Reachability and the covering problem are central decision problems in Petri nets theory. This paper showed for both problems that they can be decided in a quadratic computational time complexity for the class of sound acyclic free-choice workflow nets (more specifically, in $O(|P|^2 + |T|^2)$). Furthermore, this paper explained on algorithms how the pure decisions can be extended with diagnostics allowing for giving explanations on why a given marking is or is not (sub-)marking reachable. In doing this, the concept of admissibility was introduced, in which all marked places of a marking are pairwise concurrent. In a maximum admissible marking, adding further tokens to places would destroy the marking's admissibility. All reachable markings are maximum admissible, whereas all coverable markings are admissible. Non-admissible markings contain conflicting tokens on places, which can be used as diagnostics. Since admissibility and maximum admissibility are just necessary conditions of the covering and reachability problems, respectively, this paper introduced diverging transitions as sufficient condition. Diverging transitions are transitions with disjoint paths to marked places in a given marking. Those disjoint paths manifest in a subgraph of the net representing a possible execution. Thus, the existence of a diverging transition for an admissible marking implies the reachability or coverage of this marking. The implied subgraph by such transitions can be used as diagnostics to explain why a marking is reachable. If such a transition does not exist, then a similar subgraph can be used as diagnostics to explain why a given marking is not reachable. Detailed algorithms allow for a straight-forward application of the approach to existing Petri net tools.

The Petri nets and business process management communities benefit from new concepts (admissibility and diverging transitions) and from new computationally efficient algorithms for deciding the covering and reachability problems, especially for process-like and industrial related nets. If required, the algorithms' outputs can be used for providing diagnostics to users such as system/business analysts, students, etc. For example, business analysts could request their process models whether undesired states of their models could happen, or not --- to improve quality, to check compliance, and to avoid cost-intensive failures. Since the reachability of markings is central to many other decision problems of Petri nets, the here presented approach enables to speed up other algorithms. 

In the future, the concepts of admissibility and maximum admissibility should be generalized for the class of proper free-choice nets with a home cluster. Since diverging transitions are a structural concept, possible further applications for them should be identified. Furthermore, the method of loop decomposition \cite{DBLP:conf/apn/PrinzKB24,DBLP:journals/is/PrinzCH25,DBLP:conf/bpm/PrinzCH22} should be introduced for free-choice nets leading to an unfolding of loops increasing a model just quadratically in the worst case. This may close the gap between the application of the provided algorithms from acyclic to cyclic nets. For industrial settings, we want to provide an implementation of our approach, which allows for checking reachability for process models (e.\,g., in BPMN \cite{OMG2011}) by simply selecting some nodes of a model. %
Finally, it is of interest whether the new concepts introduced in this paper may be applicable to solve other problems in Petri nets.

\bibliographystyle{fundam}
\bibliography{bibliography}


\end{document}